\pdfoutput=1

\pdfoutput=1

\documentclass[11pt]{scrartcl}
\usepackage[a4paper, total={16cm, 24cm}]{geometry}
\usepackage{microtype}
\usepackage{authblk}
\title{Pareto Optimality in Approval-Based Multiwinner Voting}
\date{\vspace{-1.5cm}}

\author[1]{Joshua Schünke}
\affil[1]{Hasso Plattner Institute, University of Potsdam, Germany}
\affil[ ]{\texttt{joshua.schuenke@student.hpi.de}}

\usepackage{xcolor}
\usepackage{graphicx}
\usepackage{pgfplots}
\pgfplotsset{compat=1.17}

\usepackage{amsmath, amssymb, amsthm, mathtools, nicefrac, siunitx}

\usepackage{thmtools}
\usepackage{thm-restate}
\newtheorem{theorem}{Theorem}[section]

\newtheorem{proposition}[theorem]{Proposition}
\newtheorem{lemma}[theorem]{Lemma}
\newtheorem{corollary}[theorem]{Corollary}
\theoremstyle{definition}
\newtheorem{definition}[theorem]{Definition}

\newtheorem{conjecture}[theorem]{Conjecture}

\usepackage
[
    ruled,
    vlined,
    linesnumbered,
]
{algorithm2e}

\usepackage[pagebackref]{hyperref}
\hypersetup{
	pdfencoding=auto,
	psdextra,
	colorlinks=true,
	citecolor=green!50!black,
	linkcolor=red!60!black,
}

\usepackage[nameinlink]{cleveref}

\crefname{AlgoLine}{line}{lines}
\Crefname{AlgoLine}{Line}{Lines}

\usepackage{natbib}
\usepackage[inline]{enumitem}
\usepackage{booktabs}
\usepackage{subcaption}

\usepackage{tikz}
\usetikzlibrary{patterns}
\usetikzlibrary{patterns.meta}
\usetikzlibrary{backgrounds}

\newcommand{\restatehere}[1]{%
	\marginline{\vspace{0.6cm}\footnotesize \hyperlink{original#1}{\hypertarget{restated#1}{[Main]}}}%
	\csname #1\endcsname*%
}

\newcommand{\N}{\mathbb{N}}

\newcommand{\getcolor}[1]{\ifcase#1 orange!20\or red!40\or purple!35\or magenta!20\or violet!30\or blue!20\or cyan!20\or teal!20\or green!20\or lime!20\or yellow!40\else black\fi}

\newcommand{\drawcand}[3][]{%
    \pgfmathtruncatemacro{\row}{#2-1}%
    \def\chosen{#1}%
    \ifx\chosen\empty
        \pgfmathtruncatemacro{\idx}{#2-1}%
        \edef\chosen{\getcolor{\idx}}%
    \fi
  \foreach \s/\e in {#3} {%
    \draw[fill=\chosen] (\s-1,\row) rectangle (\e,\row+1);
    \node at (\s-0.5,\row+0.5) {$c_{#2}$};
  }%
}

\begin{document}

\maketitle

\bigskip
{\footnotesize\tableofcontents}

\newpage
\begin{abstract}
	\begin{center}
		\textbf{\textsf{Abstract}} \smallskip
	\end{center}
	In approval-based multiwinner voting, Pareto optimality is used as an axiom capturing efficiency of committees.
	We study the structure of the space of Pareto optimal committees in restricted domains and in general by investigating the monotonicity and reconfigurability of such committees.
	For the Candidate Interval and Voter Interval domains, we propose the \emph{Single Dominance Only} property, which provides a simple characterization of Pareto optimality, and show that Pareto optimal committees satisfy committee monotonicity using this property.
	Further, we show that, for the above domains, any Pareto optimal committee can be reconfigured into any other Pareto optimal target committee without using auxiliary candidates, meaning that the candidates in the starting but not the target committee can be replaced by candidates in the target but not the starting committee one by one while preserving Pareto optimality at every step.
	In addition, we adapt a polynomial-time algorithm for finding a committee satisfying EJR+, a proportionality axiom, such that it also satisfies Pareto optimality, for the above domains.
	We further describe a polynomial-time algorithm for counting the number of Pareto optimal committees for voting instances satisfying Voter Interval, and give a proof idea for its correctness.
	For the unrestricted domain, we explain the challenges of proving committee monotonicity and reconfigurability.
	We provide an example in which the distance of two committees in the Pareto optimality reconfiguration graph exceeds the distance proven for the above domains, and outline an approach toward showing the connectedness of the graph.

\end{abstract}

\section{Introduction}
Social choice theory includes the study of reaching collective decisions based on individual preferences through voting.
In an election, multiple entities, called the voters, can express their opinion on multiple options available, called the candidates.
One or more candidates are then elected as winners, based on the voters' opinions.
In the theory of social choice, approval-based voting is a method in which each voter either approves or disapproves each candidate \citep{Brams_Fishburn_1978}.
In approval-based multiwinner voting, the committee, a fixed number of candidates winning the election, is determined based on the approvals of all voters.

Approval-based multiwinner voting can be used to reach a collective selection from several options that cannot be implemented simultaneously.
For example, suppose a group of gardeners wants to plant exactly three trees and has many saplings they could potentially plant.
As every gardener has a different opinion on which three saplings to plant, they want to find a selection of saplings representing the gardeners opinions fairly.
They can do so by using approval-based multiwinner voting.
Each gardener can approve the saplings they want to plant and disapprove the others.
Based on these approvals, the three saplings to be planted, the committee, are selected.
However, in this example as well as in general, it is not obvious how a committee should be chosen to best represent all voters fairly.
In order to asses the fairness and efficiency of the committee elected, several axioms a committee can fulfill exist.
These axioms also imply a structure or well-behavedness of the committees.
For instance, committee monotonicity can hold for committees satisfying a certain axiom.
Committee monotonicity is said to hold for committees satisfying an axiom, if a candidate can be added to any committee such that the new committee also satisfies the axiom.

We will now introduce two axioms, one expressing a notion of efficiency and the other one of fairness.
Here, efficiency refers to how satisfied voters are with a committee.
The efficiency of a committee can, for example, be measured by summing the number of candidates each voter approves within that committee.
However, this measure is quite restrictive and oversimplified, as it does not account for differences between individual voters.

Pareto optimality is a notion of efficiency used in different fields of study such as multiobjective optimization \citep{po-multiobjective-optimization}, which can also be applied to approval-based multiwinner voting.
Generally speaking, an object is Pareto optimal if no other object is at least as good in every regard and better in some regards.
A committee $A$ is said to be Pareto optimal, if no different committee $B$ exists, such that every voter approves at least as many candidates in $B$ as in $A$, and at least one voter approves strictly more candidates in $B$ than in $A$.
Else, if such a committee $B$ exists, we say that committee $B$ dominates committee $A$.
Thus, if a committee is Pareto optimal, some voter will be less represented or every voter will be equally represented in any other committee.
As Pareto optimality thereby constitutes an intuitively sensible and simple to understand property, it is used as an efficiency axiom for committees as well.

Justified representation (JR) and variations of JR such as EJR+ on the other hand are recently studied proportionality axioms for committees in approval-based multiwinner voting \citep{JustifiedRepresentation,PropAxiomsForMultiwVoting}.
The JR axiom and its extensions are intended to ensure that sufficiently large groups of voters with similar interests, even if they are in the minority, are represented in the committee, thereby ensuring a proportional representation of all voters in the committee.
A committee is said to satisfy JR if in every sufficiently large group of voters who all agree on a set of candidates, at least one voter approves a candidate in the committee.

To find committees satisfying these axioms in practice, polynomial-time algorithms computing committees fulfilling one or even multiple of these axioms are sought.
For example, committees should satisfy a fairness and an efficiency axiom like JR and Pareto optimality in order to represent voters fairly while no other committee exists, which no voter finds worse and at least one voter prefers, thereby maximizing satisfaction with the committee in a sense.
The PAV voting rule is a rule for committees satisfying EJR+ and Pareto optimality \citep{pavpoeff,PropAxiomsForMultiwVoting}.
However, determining a winning committee by the PAV rule has been proven to be NP-hard in general \citep{pavnphard}.
Finding a polynomial-time algorithm for Pareto optimal committees satisfying EJR+ or even JR is an open problem \citep{peters_facets_2025}.

To alleviate the complexity of computing committees satisfying some axioms and analyzing properties of different axioms as well as to show stronger results, researchers proposed different restricted domains, that enforce more structure to voting instances by restricting the way voters can approve candidates.
Although allowing every voter to approve any candidates is the most general form of approval-based voting, one could argue that imposing restrictions on votes more accurately reflects real-world applications.
For example, when choosing among several options, most people tend to approve similar options rather than unrelated ones.
\cite{elkind2015structuredichotomouspreferences} proposed several restrictions formalizing this voting behavior and giving more structure to voting instances.
A voting instance satisfies the Candidate Interval property, if a total order can be imposed on all candidates, such that every voter approves an interval of the candidates within that ordering.
For example, though much simplified, political parties can be imagined to be ordered from left leaning to right leaning.
In this model, a voter approves similar candidates and therefore an interval of candidates in that ordering.
Another example of a restricted domain is the Voter Interval property, modeling a one-dimensional similarity of voters opinions and only voters of similar opinions approving each candidate, that, vice versa, is satisfied if a total order can be imposed on all voters, such that every candidate is approved by an interval of voters.

\subsection{Our Contribution}

To understand whether Pareto optimality is a suitable and practical to apply axiom on committees and to make finding polynomial-time algorithms to compute committees satisfying Pareto optimality and other axioms simultaneously, we develop a deeper understanding of the space of Pareto optimal committees in general and for two restricted domains.

Our analysis of Pareto optimal committees is centered on committee monotonicity and reconfigurability.
Committee monotonicity holds if, for every size-$k$ Pareto optimal committee, a candidate exists whose addition results in a size-$(k+1)$ Pareto optimal committee.
For reconfigurability, we define the Pareto optimality reconfiguration graph for committee size $k$ as a graph whose nodes represent size-$k$ Pareto optimal committees, with an edge between two nodes if their corresponding committees share exactly $k - 1$ candidates, analogously to \cite{reconfiguration-jr} for committees satisfying JR.

In \Cref{prt:restricted}, we present a simple characterization of Pareto optimal in specific restricted domains and draw conclusions about the structure of Pareto optimal committees when this characterization applies.

To this end, we introduce the \emph{Single Dominance Only (SDO)} property for voting instances, in \Cref{chp:sdo}.
We say that a voting instance satisfies SDO, if a size-$k$ committee is Pareto optimal, if no candidate in the committee is Pareto dominated by any candidate not in the committee.
In \Cref{sec:sdo-candidate-interval} and \Cref{sec:sdo-voter-interval}, we show that the SDO property holds for any voting that satisfies Candidate Interval or Voter Interval.
The characterization of Pareto optimality given by the SDO property implies that every union of Pareto optimal committees also is Pareto optimal, which is not true in general.

In \Cref{subsec:sdo-impl-monotonicity}, we use this to show that committee monotonicity holds for all Pareto optimal committees in any voting instance satisfying SDO.
As committee monotonicity implies an inherent structure of Pareto optimal committees, providing a sense of well-behavedness, we can conclude that Pareto optimality is a practical axiom of efficiency to work with in these restricted domains.

In particular, we go on to describe a polynomial-time algorithm that always returns a committee satisfying Pareto optimality and EJR+ simultaneously, given a voting instance satisfying SDO, in \Cref{subsec:sdo-impl-ejrp-algo}.
This algorithm is based on a known algorithm for committees satisfying EJR+ by \cite{PropAxiomsForMultiwVoting} and uses that committees satisfy committee monotonicity in the restricted domains considered.
While the PAV rule has been shown to be polynomial-time computable by \cite{PavInPolyTime} for the Candidate Interval domain, we thereby resolve the open question of finding fair and efficient committees for the Voter Interval domain in polynomial time.

Further, in \Cref{subsec:sdo-impl-reconfig-graph}, we show that the Pareto optimality reconfiguration graph is always connected and the distance of two size-$k$ committees $A$ and $B$ is $k - \lvert A \cap B \rvert$.
This implies that, given two Pareto optimal committees of the same size, one can always be transformed into the other by replacing candidates one at a time with candidates not in the committee, while retaining Pareto optimality at every step, and without using auxiliary candidates outside the two committees.
By analyzing the structure of the reconfiguration graph, we gain an understanding of how Pareto optimal committees relate to each other.

In \Cref{chp:counting-voter-interval}, we conjecture that the number of Pareto optimal committees for a voting instance satisfying Voter Interval can be counted in polynomial time.
We describe an algorithm using dynamic programming for counting the number of such committees and explain the intuition behind it.

In \Cref{prt:general-case}, we explain why our previous results for Pareto optimal committees do not translate to the general case.
In \Cref{chp:general-monotonicity}, we prove that committee monotonicity holds when either only a few candidates are in the committee or only a few candidates are not in the committee.
By explaining why extending this to larger numbers of candidates fails, we provide a better understanding of what makes it hard to prove committee monotonicity for Pareto optimal committees in general.

In \Cref{chp:general-reconfiguration}, we first show that a voting instance exists in which a pair of size-$2$ Pareto optimal committees has a distance of more than two, showing that our result from before does not hold.
We continue by proposing applying the \emph{canonical configuration} method \citep{introtoreconf}, a strategy commonly used in reconfiguration problems, for proving the connectedness of the Pareto optimality reconfiguration graph in the general case.
Our approach requires that, for every Pareto optimal committee that does not have a maximum approval score, a candidate exists, such that a candidate with a lower approval score can be replaced by that candidate while preserving Pareto optimality.
We show that one intuitive way of finding such a replacement does not work and leave the question unresolved.

\subsection{Related Work}
Previous research on Pareto optimality in general approval-based multiwinner voting as well as restricted domains has mostly been concerned with questions of computability.
To the best of our knowledge, there has been limited work discussing the structure of Pareto optimal committees.

\cite{Aziz2020} considered Pareto optimal in different voting methods and have shown that under various voting method, including approval-based multiwinner voting, checking whether a committee is Pareto optimal is coNP-complete, but did not consider structural properties of such committees.

Further work has been concerned with the Proportional Approval Voting (PAV) rule.
Winning committees according to PAV are Pareto optimal \citep{pavpoeff}.
\cite{JustifiedRepresentation} and \cite{PropAxiomsForMultiwVoting} have additionally shown that these committees also satisfy EJR and EJR+, respectively.
However, \cite{pavnphard} have shown that winner determination according to PAV is NP-hard, making it impractical for real world application.
To find more efficient algorithms for the PAV rule and other voting rules, \cite{elkind2015structuredichotomouspreferences} have proposed different restricted domains such as Candidate Interval and Voter Interval.
They provide algorithms for the PAV rule on these domains, though with running times exponential in certain parameters.
\cite{PavInPolyTime} further proved that winner determination according to the PAV rule is solvable in polynomial time for voting instances satisfying Voter Interval.

\cite{reconfiguration-jr} analyzed the reconfigurability of committees satisfying JR and EJR.
They showed that the reconfigurability graph for JR committees is not connected in general, but becomes connected when allowing ``2-approximate'' JR committees.
They further proved that winning committees under different voting rules such as PAV are all connected via JR committees.
Finally, they established that the reconfiguration graph for JR committees is connected for Candidate Interval and Voter Interval.

Outside of social choice theory, reconfiguration problems are studied in graph theory. A comprehensive survey of tools for reconfiguration problems was given by \cite{introtoreconf}.

\section{Preliminaries}
We define an \emph{approval-based multiwinner voting instance} by a set of $m$ \emph{candidates} $C = \{c_1, \dots, c_m\}$, a set of $n$ \emph{voters} $V = \{1, \dots, n\}$, for each voter $i$ a set $A_i \subseteq C$ of candidates they approve and a committee size $k$.
The set of all sets $A_i$ is called the \emph{approval profile} $\mathcal{A} = \{A_1, \dots, A_m\}$.
Thus, a voting instances is defined by a tuple $(C, V, \mathcal{A}, k)$.

A \emph{committee} is a subset of the candidates.
For a committee $W$ and a candidate $c$, we write $W-c = W \setminus \{c\}$ and $W+c = W \cup \{c\}$.

Further, for a voting instance $(C, V, \mathcal{A}, k)$, a \emph{winning committee} is a size-$k$ committee, consisting of all candidates elected as winners.
A \emph{voting rule} specifies which committees are valid winning committees given a voting instance.
E.g., a committee $W$ is a winning committee by the \emph{PAV} rule, if, for every other committee $W'$, $\sum_{i=1}^n H(\lvert A_i \cap W\rvert) \geq \sum_{i=1}^n H(\lvert A_i \cap W' \rvert)$, where $H(n) \coloneqq \sum_{i=1}^n \frac 1 n$ is the $n$-th Harmonic number, holds.

For every candidate $c \in C$ we define $V(c)$ as the set of voters approving candidate $c$ and call the cardinality of $V(c)$ the \emph{approval score} of $c$.
The approval score of a set of candidates $A \subseteq C$ is the sum of the approval scores of all candidates in $A$, i.e. $\sum_{c \in A}\left|V(c)\right|$.

\emph{Pareto optimality} for approval-based multiwinner voting is defined as a property of a committee as follows:

A size-$k$ committee $W \subseteq C$ is said to be \emph{dominated} by a size-$k'$ committee $D \subseteq C$, if for each voter $i \in V$ $\lvert A_i \cap W \rvert \leq \lvert A_i \cap D \rvert$ holds, and for at least one voter $v \in V$ $\lvert A_{v} \cap W\rvert < \lvert A_{v} \cap D \rvert$ holds.
We say, that committee $W$ is \emph{equal size dominated} or \emph{es-dominated} by the committee $D$ if $W$ is dominated by $D$ and $k = k'$ is true.
If $W$ is dominated by $D$, then we say that $D$ \emph{dominates} $W$.
A committee $W \subseteq C$ is \emph{Pareto optimal}, if no set of candidates $D \subseteq C$ es-dominates the committee $W$.

We further, say that candidate $c \in C$ is dominated by candidate $d \in C$, if the committee $\{c\}$ is dominated by the committee $\{d\}$.

A \emph{partial order} $\preceq \subseteq A \times A$ on a set $A$ is a reflexive, transitive and antisymmetric relation.
A partial order on a set $A$ is \emph{total} if for every $a, b \in A$ either $a \prec b$ or $b \prec a$ is true.
For a total order $\preceq \subseteq A \times A$ we write $a \prec b$ for $a, b \in A$, if $a \preceq b$ holds and $a \neq b$.

A subset $B \subseteq A$ is an \emph{interval} for a total order $\preceq \subseteq A \times A$, if there exists a partition into of set $A$ into subsets $L, R$ and $B$, such that $\ell \prec b \prec r$ holds for all candidates $\ell \in L$, $r \in R$ and $b \in B$.

We say that a voting instance $(C,V,\mathcal{A}, k)$ satisfies \emph{Voter Interval} \cite{elkind2015structuredichotomouspreferences}, if there is a total order $\preceq$ on the set of voters, such that $V(c)$ is an interval on $\preceq$ for every candidate $c \in C$.
For any two candidates $c_1, c_2\in C$ we say that the candidate $c_1$ is left of $c_2$ if neither $V(c_1)\subseteq V(c_2)$ nor $V(c_2)\subseteq V(c_1)$ holds, and there is a voter $i \in V(c_1)$ such that $i\prec j$ holds for all voters $j\in V(c_2)$.
We say that the candidate $c_1$ is right of the candidate $c_2$ if $c_2$ is left of $c_1$.
Further, for two voters $i, j \in V$, we say that voter $i$ is left of voter $j$, if $i \preceq j$, and right of $j$ if $j \preceq i$ holds.
For an example, see \Cref{fig:pre-voter-interval}.

Analogously, we say that a voting instance $(C,V,\mathcal{A}, k)$ satisfies \emph{Candidate Interval} \cite{elkind2015structuredichotomouspreferences}, if there is a total order $\preceq$ on the candidates, such that $A_i$ is an interval on $\preceq$ for every voter $i \in V$.
We say that a candidate $c_1 \in C$ is left of a candidate $c_2 \in C$ if $c_1 \preceq c_2$ holds, and right of $c_2$ if $c_2 \preceq c_1$ holds.
For an example, see \Cref{fig:pre-candidate-interval}.

\begin{figure}[h]
    \begin{minipage}[c]{0.4\linewidth}
    \centering
        \begin{tikzpicture}[yscale=0.5, xscale=0.65, voter/.style={anchor=south}]

        \foreach \i in {1,...,9}
        \node[voter] at (\i-0.5, -1.5) {$\i$};

            \drawcand{1}{1/1}
            \drawcand{2}{3/5}
            \drawcand[violet!30]{3}{3/3}
            \drawcand[blue!20]{4}{5/7}
            \drawcand[green!20]{5}{6/9}

        \end{tikzpicture}
        \caption{An example of Voter Interval.
         Here, the candidates $c_2$ and $c_3$ are neither left nor right of each other. The candidate $c_4$ is left of the candidate $c_5$ and right of all other candidates.
        }
        \label{fig:pre-voter-interval}
    \end{minipage}
    \hfill
    \begin{minipage}[c]{0.4\linewidth}
    \centering
        \begin{tikzpicture}[yscale=0.5, xscale=0.65, voter/.style={anchor=south}]

        \foreach \i in {1,...,9}
        \node[voter] at (\i-0.5, -1.5) {$c_\i$};

            \draw[fill=orange!20] (0,0) rectangle (1,1);
            \node at (0.5,0.5) {$1$};

            \draw[fill=red!40] (2,1) rectangle (5,2);
            \node at (2.5,1.5) {$2$};

            \draw[fill=violet!30] (2,2) rectangle (3,3);
            \node at (2.5,2.5) {$3$};

            \draw[fill=blue!20] (4,3) rectangle (7,4);
            \node at (4.5,3.5) {$4$};

            \draw[fill=green!20] (5,4) rectangle (9,5);
            \node at (5.5,4.5) {$5$};

        \end{tikzpicture}
        \caption{An example of Candidate Interval.
        Here, the candidate $c_2$ is right of the candidate $c_1$ and left of all other candidates.
        }
        \label{fig:pre-candidate-interval}
    \end{minipage}
\end{figure}

To study the space of Pareto optimal committees, we define the \emph{Pareto reconfiguration graph} $\Gamma_{Po}(\mathcal{A},k)$ for an approval profile $\mathcal{A}$ and a committee size $k$.
Each Pareto optimal size-$k$ Pareto optimal committee for approval profile $\mathcal{A}$ is a node in the graph $\Gamma_{Po}(\mathcal{A},k)$.
There is an edge between two nodes if the corresponding committees $W_1, W_2$ differ in only one candidate, i.e., if $|W_1 \cap W_2| = k-1$ holds.
We define the \emph{distance} between two committees $W_1, W_2$, denoted as $d(W_1, W_2)$, as the number of edges on the shortest path between $W_1$ and $W_2$ in $\Gamma_{Po}(\mathcal{A}, k)$.

\section{Restricted Domains}
\label{prt:restricted}

\subsection{Characterization of Pareto optimality}
\label{chp:sdo}
\cite{Aziz2020} have shown that verifying, whether a size-$k$ committee is Pareto optimal, is coNP-complete.
To check whether a size-$k$ committee in a voting instance with $n$ candidates is Pareto optimal it has to be compared to all $\binom n k - 1$ other committees by the definition of Pareto optimality.
An equivalent definition of Pareto optimality is that a committee $A$ is Pareto optimal, if no subset $A' \subseteq A$ of the committee is es-dominated by a disjoint, same size set of candidates $B$.
However, as the number of size-$j$ subsets of a size-$k$ committee is $\binom k j$, checking whether a committee is Pareto optimal by comparing all possible subsets is not feasible and thus also makes it difficult to prove and analyze properties of Pareto optimality as well.
We therefore first consider a simpler characterization of Pareto optimality by introducing the \emph{Single Dominance Only} (SDO) property, that limits Pareto optimality to only having to check whether any single candidate of the committee is dominated by itself.

\begin{definition}
    We say that a voting instance $(C, V,\mathcal{A},k)$ satisfies the \emph{Single Dominance Only (SDO)} property, if the following characterization holds:

    A committee $W \subseteq C$ is Pareto optimal, iff for every candidate $c \in W$ no candidate $d \in C \setminus W$ dominates $c$.
\end{definition}

We show that the SDO property, though much simpler than the definition of Pareto optimality, is applicable in practice by proving that this characterization of Pareto optimality holds for all voting instances satisfying Candidate Interval or Voter Interval.

\subsubsection{Candidate Interval}
\label{sec:sdo-candidate-interval}

The Candidate Interval domain on voting instances as proposed by \cite{elkind2015structuredichotomouspreferences} models a one-dimensional similarity of candidates and voters approving only similar candidates and thereby gives more structure to how voters approve candidates.
We make use of the structure enforced on the approval profile by this restricted domain to show that any Candidate Interval voting instance also satisfies SDO.

We prove that the SDO property holds for any such voting instance by way of contradiction by assuming a size-$k$ Pareto optimal committee $A$, for which no candidate is dominated, exists, such that another size-$k$ Pareto optimal committee $B$ dominates committee $A$.
We then show that committee $B$ would have to have at least $k+1$ candidates for it to dominate committee $A$ contradicting the choice of committee $B$.
We first want to explain why that is on a higher level.
Every voter approves at least as many candidates in committee $B$ as in $A$.
Thus, for each candidate $c \in A$, each voter $v \in V(c)$ approving candidate $c$ also approves a candidate $d_v \in B$.
We show that these candidates $d_v$ must be on the left and the right of candidate $c$ in the total order of candidates induced by the Candidate Interval property, not on only left of candidate $c$ or right of candidate $c$, for every such voter voter $v \in V(C)$.
Thus, every candidate $c \in A$ requires two candidates $d_1, d_2 \in B$ to exist, such that candidate $d_1$ is left of $c$ and candidate $d_2$ right of $c$.
We further show, that each candidate $d \in B$ cannot be the left required candidate of two candidates $c_1, c_2 \in A$ or the right required candidate of two candidates $c_1, c_2 \in A$ simultaneously, by which we can conclude that at least $k+1$ candidates are in committee $B$ contradicting the choice of $B$.

To show our claim, we first prove the following lemma.

\begin{lemma}
    Let $(C,V,\mathcal{A},k)$ be an approval-based multiwinner voting instance that satisfies the candidate interval property for a total order $\preceq$ on the set of candidates.

    If a candidate $c \in C$ is dominated by a set of candidates $D \subseteq C$ and $V(c) \nsubseteq V(d)$ holds for every candidate $d \in D$, then there are candidates $d_1, d_2 \in D$ such that $d_1 \prec c \prec d_2$ holds.
    \label{lem:restricted-candidate-interval-left-and-right}
\end{lemma}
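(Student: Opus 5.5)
The plan is a proof by contradiction that exploits the interval structure: if all of $D$ lay on one side of $c$, then the element of $D$ closest to $c$ would be approved by every voter who approves $c$, contradicting the hypothesis $V(c) \nsubseteq V(d)$.

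First I would settle the degenerate cases.
\begin{itemize}
\item $D \neq \emptyset$, since domination requires some voter with $\lvert A_v \cap D\rvert > \lvert A_v \cap \{c\}\rvert \geq 0$.
\item $V(c) \neq \emptyset$, since otherwise $V(c) = \emptyset \subseteq V(d)$ for any $d \in D$, contradicting the hypothesis.
\item $c \notin D$, since $c \in D$ would give $V(c) \subseteq V(c)$ with $c \in D$.
\end{itemize}
Because $\preceq$ is total and $c \notin D$, every $d \in D$ satisfies either $d \prec c$ or $c \prec d$.

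Next I would use domination voter by voter. For every voter $v \in V(c)$ we have $1 = \lvert A_v \cap \{c\}\rvert \leq \lvert A_v \cap D\rvert$, so $v$ approves some $d_v \in D$. This is the only consequence of domination that the argument needs.

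Now suppose, for contradiction, that no $d_2 \in D$ satisfies $c \prec d_2$, so every element of $D$ lies strictly left of $c$. The case where no $d_1 \prec c$ exists is symmetric. Let $d^*$ be the $\preceq$-maximal element of $D$, which exists because $D$ is finite and nonempty; then $d^* \prec c$. Take any voter $v \in V(c)$ and the candidate $d_v \in D$ that $v$ approves. Then $d_v \preceq d^* \prec c$, and both $d_v$ and $c$ lie in $A_v$. Since $A_v$ is an interval of $\preceq$, it contains every candidate between $d_v$ and $c$, and in particular $d^* \in A_v$. Hence $V(c) \subseteq V(d^*)$ with $d^* \in D$, contradicting the hypothesis.

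Therefore $D$ contains some $d_2$ with $c \prec d_2$ and, by symmetry, some $d_1$ with $d_1 \prec c$. The only step needing care is to choose a single candidate $d^*$ that works for all voters at once; taking the element of $D$ closest to $c$ does this via the interval property. I would also need to check that the paper's definition of an interval (a partition $L, B, R$) really implies convexity, i.e.\ that $a \preceq x \preceq b$ with $a, b \in B$ forces $x \in B$. This holds: if $x \in L$ then $x \prec a$, and if $x \in R$ then $b \prec x$, so both are impossible.
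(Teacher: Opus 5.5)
Your proposal is correct and follows essentially the same argument as the paper: assume all of $D$ lies strictly on one side of $c$, take the element $d^*$ of $D$ closest to $c$, and use the interval property of $A_v$ for voters $v \in V(c)$ to force $V(c) \subseteq V(d^*)$. The paper phrases this as picking a voter in $V(c)\setminus V(d^*)$ and deriving a contradiction, and your explicit handling of $c \notin D$, $D \neq \emptyset$, and the convexity of the paper's partition-based definition of an interval makes precise points the paper leaves implicit.
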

\begin{proof}
    \begin{figure}[h]
        \centering
        \begin{minipage}{0.45\linewidth}
            \centering
            \begin{tikzpicture}[yscale=0.5, xscale=0.65, voter/.style={anchor=south}]
            \foreach \i in {1,...,3}
                \node[circle, fill=black, inner sep=1pt, label=below:$c_\i$] at (\i-0.5,-1) {};
            \node[circle, fill=black, inner sep=1pt, label=below:$d'$] at (3.5,-1) {};
            \node[circle, fill=black, inner sep=1pt, label=below:$d$] at (4.5,-1) {};
            \node[circle, fill=black, inner sep=1pt, label=below:$c$] at (5.5,-1) {};

            \draw[fill=orange!20] (0,0) rectangle (6,1);
            \node at (0.5,0.5) {$1$};

            \draw[fill=red!30] (2,1) rectangle (6,2);
            \node at (2.5,1.5) {$2$};

            \draw[fill=blue!20] (4,2) rectangle (6,3);
            \node at (4.5,2.5) {$3$};

            \draw[fill=green!20] (3,3) rectangle (6,4);
            \node at (3.5,3.5) {$i$};

            \draw[pattern={Lines[angle=45,distance=2pt,line width=1pt]},pattern color=red!90] (4,3) rectangle (5,4);
            \end{tikzpicture}
        \end{minipage}
        \hfill
        \begin{minipage}{0.45\linewidth}
            \centering
            \begin{tikzpicture}[yscale=0.5, xscale=0.65, voter/.style={anchor=south}]
            \foreach \i in {1,...,3}
                \node[circle, fill=black, inner sep=1pt, label=below:$c_\i$] at (\i-0.5,-1) {};
            \node[circle, fill=black, inner sep=1pt, label=below:$d$] at (3.5,-1) {};
            \node[circle, fill=black, inner sep=1pt, label=below:$c$] at (4.5,-1) {};
            \node[circle, fill=black, inner sep=1pt, label=below:$d'$] at (5.5,-1) {};

            \draw[fill=orange!20] (0,0) rectangle (5,1);
            \node at (0.5,0.5) {$1$};

            \draw[fill=red!30] (2,1) rectangle (5,2);
            \node at (2.5,1.5) {$2$};

            \draw[fill=blue!20] (3,2) rectangle (5,3);
            \node at (3.5,2.5) {$3$};

            \draw[fill=green!20] (4,3) rectangle (6,4);
            \node at (4.5,3.5) {$i$};
            \end{tikzpicture}
        \end{minipage}
        \caption{Here, candidate $c$ is the candidate dominated by a set of candidates, candidate $d$ is the rightmost candidate of. Voter $i$ approves candidate $c$ but not $d$ and thus must approve another candidate $d' \in D$. If candidate $d'$ was left of $d$, the Candidate Interval property would be violated. Thus, candidate $d'$ must be right of candidate $d$, contradicting the choice of $d$ as the rightmost candidate in the dominating set.}
        \label{fig:restricted-candidate-interval-left-right}
    \end{figure}

    Let $c \in C$ be a candidate dominated by a subset of candidates $D \subseteq C$ such that $V(c) \subseteq V(d)$ does not hold for any $d \in D$.
    For the sake of contradiction, assume that all candidates in $D$ are left of $c$.
    Let $d \in D$ be the rightmost candidate of all candidates in $D$.
    We now show that a voter exists, who approves candidate $c$ but not $d$.
    As the candidate $c$ is dominated by the candidates in $D$, that voter must also approve a candidate in $D$ and hence approves a candidate right of $c$ that is in $D$, as is depicted in \Cref{fig:restricted-candidate-interval-left-right}.

    As we assumed that all candidates in $D$ are left of candidate $c$, candidate $d$ is left of candidate $c$.
    Additionally, a voter $i \in V$ approving candidate $c$ but not candidate $d$ exists, as we assumed that $V(c) \nsubseteq V(d)$ holds.
    Because the set of candidates $D$ dominates $c$, a candidate $d' \in D$ who is approved by voter $i$ exists.
    Further, candidate $d'$ must be left of $d$, as $d$ was defined as the rightmost candidate among all candidates in $D$.
    By the candidate interval property, the set of candidates voter $i$ approves must be an interval on $\preceq$.
    Thus, as $d' \prec d \prec c$ holds, voter $i$ must also approve candidate $d$, contradicting the choice of voter $i$.

    Hence, not all candidates $d \in D$ are left of candidate $c$. Analogously, not all candidates $d \in D$ are right of candidate $c$.
    Therefore, two candidates $d_1, d_2 \in D$, such that $d_1 \prec c \prec d_2$ holds, exist.
\end{proof}

We can now show that any voting instance satisfying Candidate Interval voting also fulfills SDO.

\begin{theorem}
    Let $(C,V,\mathcal{A},k)$ be an approval-based multiwinner voting instance that satisfies the candidate interval property for some total order $\preceq$ on the set of candidates.

    Then $(C, V, \mathcal{A},k)$ also satisfies the SDO property.
    \label{thm:restricted-candidate-interval-characterization-po}
\end{theorem}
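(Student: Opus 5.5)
The forward direction is immediate. Suppose some $c \in W$ is dominated by a candidate $d \in C \setminus W$. Then $W - c + d$ is a size-$k$ committee. Every voter approves at least as many of its candidates as of $W$, because $V(c) \subseteq V(d)$. At least one voter in $V(d) \setminus V(c)$ gains one. So $W - c + d$ es-dominates $W$, and $W$ is not Pareto optimal. The work is in the converse: if no $c \in W$ is dominated by a single $d \notin W$, then $W$ is Pareto optimal.

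For the converse I argue by contradiction. Suppose $W$ is es-dominated by some size-$k$ committee $D$, and pick such a $D$ with $|W \setminus D|$ minimal. Set $A' = W \setminus D$ and $B' = D \setminus W$. These are disjoint, both of size $j \ge 1$, and $B'$ still dominates $A'$, since the common candidates contribute equally to both sides for every voter.

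I first remove the degenerate cases.
\begin{itemize}
\item If $V(c) \subsetneq V(d)$ for some $c \in A'$ and $d \in B'$, then $d$ dominates $c$ as a single candidate, contradicting the hypothesis.
\item If $V(c) = V(d)$, then replacing $d$ by $c$ in $D$ leaves every voter's count unchanged. The result still es-dominates $W$ but has a smaller difference from $W$, contradicting minimality.
\end{itemize}
Hence $V(c) \nsubseteq V(d)$ for all $c \in A'$ and $d \in B'$. Moreover, every voter approving some $c \in A'$ has at least one approved candidate in $A'$, and therefore at least one in $B'$.

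The proof of \Cref{lem:restricted-candidate-interval-left-and-right} uses only this covering property, not the strict-improvement part of domination. So I can apply it to each $c \in A'$ with the set $B'$, and obtain $d_1, d_2 \in B'$ with $d_1 \prec c \prec d_2$.

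It remains to show that this forces $|B'| \ge j+1$, contradicting $|B'| = j$. I expect this counting step to be the main obstacle. A single $d$ with $c_i \prec d \prec c_{i+1}$ can serve as the right witness of $c_i$ and the left witness of $c_{i+1}$, so naive counting only gives $j$ witnesses.

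My plan for this step is as follows. Order $A' = \{c_1 \prec \dots \prec c_j\}$ and prove by induction on $i$ that at least $i$ candidates of $B'$ lie strictly left of $c_i$. Combined with the right witness of $c_j$, this gives at least $j+1$ elements of $B'$.

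For the inductive step I would use a sharpened version of the lemma's argument, applied to voters rather than just candidates:
\begin{itemize}
\item Take the rightmost element $d$ of $B'$ left of $c_i$, and a voter $v \in V(c_i) \setminus V(d)$.
\item Since $A_v$ is an interval containing $c_i$ but not $d$, every element of $A_v$ lies to the right of $d$.
\item The voter $v$ must be compensated inside $B'$ by candidates right of $c_i$. Counting $|A_v \cap B'| \ge |A_v \cap A'|$ on such interval voters should show that a left witness cannot be shared between $c_i$ and $c_{i+1}$.
\end{itemize}
If this direct induction stalls, I would instead build an explicit injection $A' \to B'$ sending each $c$ to its nearest right witness. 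I would then show that the leftmost left witness, that of $c_1$, lies outside the image, again exploiting that voters approve intervals.
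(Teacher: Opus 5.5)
Your setup is sound up to and including the appeal to \Cref{lem:restricted-candidate-interval-left-and-right}. That covers the forward direction, the passage to disjoint $A',B'$ of common size $j$, the exclusion of $V(c)\subsetneq V(d)$ and $V(c)=V(d)$, and the remark that the lemma only uses covering. The gap is the step you flag yourself: $|B'|\ge j+1$ is never established, and the mechanism you sketch is attached to the wrong statement. A voter $v\in V(c_i)\setminus V(d)$, with $d$ the nearest element of $B'$ left of $c_i$, is compensated only by elements of $B'$ \emph{right} of $c_i$. Such voters therefore cannot certify that $i$ elements of $B'$ lie \emph{left} of $c_i$. For example, take $d_1\prec c_1\prec c_2\prec d_2\prec d_3$ and a voter approving $\{c_1,c_2,d_2,d_3\}$: every constraint these voters impose holds, yet only $d_1$ lies left of $c_2$.

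Your fallback fails for a similar reason. Neither ``a left witness cannot be shared'' nor injectivity of the map sending $c$ to its nearest right witness follows from covering plus $V(c)\nsubseteq V(d)$. Take $d_1\prec d_2\prec c_1\prec c_2\prec d_3\prec d_4$ with two voters approving $\{d_1,d_2,c_1,c_2\}$ and $\{c_1,c_2,d_3,d_4\}$. All those conditions hold, yet $c_1$ and $c_2$ share both nearest witnesses. Any such claim would therefore have to use $|A'|=|B'|$ globally, and that is exactly the argument you have not supplied.

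The missing idea is that the minimality you already imposed finishes the proof with no counting. This is essentially the paper's route: strip off the rightmost dominated candidate together with a dominating candidate to its right, and appeal to minimality. Concretely:
\begin{enumerate}
\item Let $c$ be the rightmost element of $A'$, take $d\in B'$ with $c\prec d$ (by the lemma), and set $D'=D-d+c$.
\item A voter in $V(d)\setminus V(c)$ has $A_v$ entirely right of $c$. So $A_v\cap A'=\emptyset$ and $|A_v\cap(B'-d)|\ge 0=|A_v\cap(A'-c)|$.
\item Voters in $V(c)\cap V(d)$ or outside $V(c)\cup V(d)$ keep their difference, and voters in $V(c)\setminus V(d)\neq\emptyset$ strictly gain.
\item Hence $D'$ es-dominates $W$ with $|W\setminus D'|=j-1$, contradicting minimality.
\end{enumerate}

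Alternatively, your bullets do prove the \emph{mirrored} claim: at least $j-i+1$ elements of $B'$ lie right of $c_i$, by downward induction on $i$. Let $c_b$ be the rightmost element of $A'$ in $A_v$. The elements of $B'$ in $A_v$ lie strictly right of $c_i$ (and, if $b<j$, strictly left of $c_{b+1}$). Covering gives at least $b-i+1$ of them, and for $b<j$ the hypothesis at $b+1$ gives $j-b$ more. At $i=1$, adding the left witness of $c_1$ yields $|B'|\ge j+1$.
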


\begin{proof}
    We first show that, if a committee $W \subseteq C$ is Pareto optimal, then no candidate $c \in W$ is dominated by a candidate $d \in C \setminus W$.

    For the sake of contradiction, assume there is a candidate $c \in W$ who is dominated by a candidate $d \in C \setminus W$.
    Hence, the committee $W$ is dominated by the committee $W-c+d$ and thus not Pareto optimal, contradicting our choice of the committee $W$.

    It remains to show that any committee $W \subseteq C$, for which candidate $c \in W$ is dominated by a candidate $d \in C \setminus W$, is Pareto optimal.

    Let $W \subseteq C$ be a committee such that for all candidates $c \in W$ no candidate $d \in C \setminus W$ dominates $c$.
    For the sake of contradiction, assume $W$ is not Pareto optimal.

    Let $S \subseteq C \setminus W$ be a subset of the remaining committees with minimum cardinality, such that $S$ dominates a subset of the committee $W' \subseteq W$ with $\lvert S\rvert \leq \lvert W' \rvert$.
    Further, let $U \subseteq W$ be a minimal subset of the committee that is dominated by $S$ and for which $\lvert S \rvert \leq \lvert U \rvert$ holds.

    We show that no such subsets $S$ and $U$ can exist.

    We first make a case distinction by the cardinality of $U$ and $S$.

    Case 1: Assume $\lvert S \rvert < \lvert U \rvert$ holds.

    As the subset $S$ dominates $U$, it also holds that $S$ dominates all subsets of $U$.
    In particular that means, that some subset $U' \subset U$ with $\lvert S \rvert \leq \lvert U'\rvert$ exists, contradicting our choice of $U$.

    Case 2: Assume $\lvert S \rvert = \lvert U \rvert$ holds.

    Let $c \in U$ be the rightmost candidate in $U$.
    By \Cref{lem:restricted-candidate-interval-left-and-right} a candidate $d \in S$ that is right of $c$ must exist.
    Further, we show that some voter $j \in V$ approving candidate $d$ and a candidate $c' \in U-c$ exists.
    For the sake of contradiction, assume such a voter did not exist, i.e.\ no voter approving $d$ approves any candidate in $U-c$.
    Hence, $S-d$ dominates $U-c$ and, as $\lvert S\rvert = \lvert U\rvert$ holds, $\lvert S-d\rvert = \lvert U-c\rvert$, contradicting the choice of $S$ and $U$.
    Thus, such a voter $j \in V$ exists.

    To continue, we make another case distinction.

    Case 2.1: Assume a voter $j \in V$, approving candidate $d$, who approves as many candidates in $U-c$ as in $S$ exists.

    Thus, voter $j$ approves a candidate $c' \in U-c$.
    By the choice of candidates $c$ and $d$, we know that $c' \prec c \prec d$ holds.
    This yields that voter $j$ approves more candidates in $U$ than in $S$, contradicting our initial choice of $U$ and $S$.

    Case 2.2: Assume every voter $j \in V$ approving $d$ approves less candidates in $U-c$ than in $S-d$.

    By our choice of $U$ and $S$ the subset $U-c$ is not dominated by the subset $S-d$.
    This yields that either all voters not approving candidate $d$ approve as many candidates in $U-c$ as in $S-d$ or a voter approves more candidates in $S-d$ than in $U-c$.
    The latter cannot be true, as $S$ dominates $U$ and every voter approving $d$ approves less candidates in $U-c$ than in $S-d$.
    Hence, all voters not approving $d$ approve as many candidates in $U-c$ as in $S-d$.

    Let $j' \in V$ be a voter approving $c$ but not $d$.
    As shown above, voter $j'$ approves as many candidates in $U-c$ as in $S-d$ and, thus, one more candidate in $U$ than in $S-d$.
    By our choice of $U$ and $S$, the set $U$ is dominated by $S$, yielding that $j'$ approves at least as many candidates in $S$ and $U$.
    Thus, $j'$ approves as many candidates in $U-c$ as in $S-d$ and approves $d$, contradicting our assumption that every voter approving $d$ approves less candidates in $U-c$ than in $S-d$.

    We can conclude that neither $\lvert S \rvert < \lvert U \rvert$ nor $\lvert S \rvert = \lvert U \rvert$ holds, and thus $\lvert S\rvert > \lvert U\rvert$ holds, contradicting our choice of sets $S$ and $U$.

    Hence, the committee $W$ is Pareto optimal.
\end{proof}

Thus, all voting instances satisfying Candidate Interval also satisfy SDO.
We can hence conclude that every property shown to hold for voting instances satisfying SDO also hold for voting instances satisfying Candidate Interval

\subsubsection{Voter Interval}
\label{sec:sdo-voter-interval}

We next consider voting instances satisfying Voter Interval, another restricted domain proposed by \cite{elkind2015structuredichotomouspreferences}.
Voter Interval is defined analogously to Candidate Interval with a total order on all voters instead of candidates and candidates being approved by an interval of voters.

For Voter Interval, as for Candidate Interval, it also holds that any voting instance satisfies SDO as well, yielding two applications of our SDO property.

In the following proof, we show our claim by way of contradiction.

First, we want to give an intuition for why the SDO property holds for voting instances satisfying Voter Interval:

For a subset of the committee to be dominated, every voter has to approve at least as many candidates in a dominating set as in the subset of the committee.
Thus, for each candidate in the committee, the interval of voters approving that candidate has to also approve candidates in the dominating set.
Because no candidate in the committee is dominated, none of the candidates in the dominating set are approved by the entire interval of voters.
Instead, the interval of voters is split into at least two intervals of voters approving different candidates, leading to the dominating subset to always be larger than the subset of the committee dominated.

\begin{theorem}
    Let $(C,V,\mathcal{A},k)$ be an approval-based multiwinner voting instance satisfying Voter Interval for a total order $\preceq$ on the set of voters.

    Then $(C, V, \mathcal{A},k)$ also satisfies the SDO property.
    \label{thm:restricted-voter-interval-characterization-po}
\end{theorem}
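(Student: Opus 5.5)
The direction ``Pareto optimal $\Rightarrow$ no candidate $c\in W$ is dominated by some $d\in C\setminus W$'' is handled exactly as in \Cref{thm:restricted-candidate-interval-characterization-po}: if $d$ dominates $c$, then $W-c+d$ es-dominates $W$. For the converse, I will set up a minimal counterexample as in that proof. Let $W$ be a committee in which no member is dominated by a non-member, and suppose $W$ is not Pareto optimal. Choose disjoint sets $U\subseteq W$ and $S\subseteq C\setminus W$ with $|S|=|U|$, $S$ dominating $U$, and $|U|$ minimal. Writing $V(c)=[\ell(c),r(c)]$ for the voter interval of $c$, domination says that the function
\[
f(v)=|A_v\cap S|-|A_v\cap U|
\]
is non-negative for every voter and positive for at least one voter. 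The case-$1$ reduction of the earlier proof lets me assume $|S|=|U|$ without loss of generality.

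Next I will clean up the instance. If some $d\in S$ has $V(d)=V(c)$ for some $c\in U$, then deleting $c$ from $U$ and $d$ from $S$ leaves $f$ unchanged, so the smaller pair is still a domination, contradicting minimality. Hence $V(c)=V(d)$ never occurs. Moreover $V(c)\subsetneq V(d)$ would mean that $d$ dominates $c$, which the hypothesis forbids. So for every $c\in U$ and $d\in S$ we have $V(c)\nsubseteq V(d)$. This is the interval analogue of the hypothesis of \Cref{lem:restricted-candidate-interval-left-and-right}. In particular, for each $c\in U$ the voters $\ell(c)$ and $r(c)$ are covered by different members of $S$: one starting strictly after $\ell(c)$ or ending strictly before $r(c)$, and similarly at the other end. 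This is the ``interval is split'' intuition.

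The goal is to show $|S|\ge|U|+1$, which contradicts $|S|=|U|$. I plan a peeling argument driven by a left-to-right sweep. Take $c\in U$ with the leftmost right endpoint $r(c)$. Since $f(\ell(c))\ge 1$ and $f(r(c))\ge1$ (indeed $f\ge0$ and $c$ covers these voters), there are intervals of $S$ covering both. By the split property, some $d\in S$ covers $\ell(c)$ but ends before $r(c)$, so $r(d)<r(c)$. Among such $d$ I will choose the one whose right endpoint is leftmost. Then I check that $S-d$ still weakly dominates $U-c$, i.e.\ $f$ stays non-negative after the removal. Left of $r(d)$, the removal of $c$ compensates for the removal of $d$, since $V(d)\cap[\ell(c),\infty)\subseteq V(c)$. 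Left of $\ell(c)$, no member of $U$ is affected, because $c$ has the leftmost right endpoint and so no other $U$-interval ends there. The troublesome region is the part of $V(d)$ left of $\ell(c)$.

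To make the peeling terminate in a contradiction, I will track the strict surplus. A charging formulation is cleaner: assign to each $c\in U$ the member of $S$ covering $\ell(c)$ that ends earliest, and show this assignment is injective and never hits the member of $S$ that covers the leftmost voter where $f>0$ (or the rightmost voter of the rightmost $S$-interval). That would give $|S|\ge|U|+1$.

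I expect the main obstacle to be this injectivity and extra-element step: two candidates of $U$ with nested or overlapping intervals might be charged to the same $d$. If the left-endpoint charging fails, my fallback is to charge via both endpoints and double count. Each $c\in U$ forces one ``left'' and one ``right'' boundary among the $2|S|$ endpoints of $S$-intervals lying inside $V(c)$, and each $S$-endpoint is used by at most one $c$ when the $U$-intervals are processed in order of right endpoint. This would give $2|S|\ge 2|U|+2$, where the extra $2$ comes from the strict voter, whose surrounding $S$-interval endpoints are charged to no $c$.
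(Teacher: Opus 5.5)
\paragraph{There is a gap: the step $|S|\ge|U|+1$ is never proved.}
Your setup is sound. You take a minimal pair $(U,S)$ with $|S|=|U|$, discard pairs with $V(c)=V(d)$, and conclude $V(c)\nsubseteq V(d)$ for all $c\in U$, $d\in S$. Hence every $d\in S$ covering $\ell(c)$ has $r(d)<r(c)$. (One slip: $f\ge 0$ does not give $f(\ell(c))\ge 1$. What you have, and need, is $|A_{\ell(c)}\cap S|\ge|A_{\ell(c)}\cap U|\ge 1$.)

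The problem is the step that carries the whole proof, $|S|\ge|U|+1$.
\begin{itemize}
\item Your peeling stalls because you choose $c$ by leftmost \emph{right} endpoint. Other members of $U$ may still cover voters left of $\ell(c)$. If $V(d)$ reaches such a voter where $f=0$, removing $d$ makes the surplus negative there.
\item The charging map is not injective in general, because nothing constrains $U$ internally. If $V(c_1)=[1,5]$ and $V(c_2)=[1,10]$ are both in $U$, both are charged to the same member of $S$: the one covering voter $1$ that ends earliest.
\item The double-counting fallback is only asserted. You give no argument that each $S$-endpoint serves at most one $c$. The claim that the endpoints around a strict voter are charged to no $c$ is doubtful, since a strict voter can lie deep inside some $V(c)$.
\end{itemize}

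\paragraph{The missing idea: anchor at the extreme voter.}
Anchor at the extreme \emph{voter} rather than at an extreme right endpoint; then no counting is needed.
Let $c\in U$ have minimum left endpoint, so no member of $U$ covers any voter left of $\ell(c)$. Let $d\in S$ be any candidate covering $\ell(c)$. By non-containment, $r(d)<r(c)$. Now remove $c$ and $d$:
\begin{itemize}
\item Left of $\ell(c)$ the $U$-count is $0$, so nothing can become negative.
\item On $[\ell(c),r(d)]$ both counts drop by one, so the surplus is unchanged.
\item On $(r(d),r(c)]$ the surplus increases by one.
\item Elsewhere nothing changes.
\end{itemize}
Hence $S-d$ dominates $U-c$, strictly at $r(c)$, with $|S-d|=|U-c|$. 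This contradicts minimality.

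This is exactly the paper's argument, mirrored. The paper takes the rightmost voter $i$ approving a member of $U$, and some $c\in U$ and $d\in S$ approved by $i$. It then distinguishes two cases. If $d$ also covers the leftmost voter of $c$, then $V(c)\subseteq V(d)$. If not, $V(d)$ up to $i$ lies inside $V(c)$, and beyond $i$ the $U$-count is zero.
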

\begin{proof}
    We first show that if a committee $W \subseteq C$ is Pareto optimal, then no candidate $c \in W$ is dominated by a candidate $d \in C \setminus W$.

    For the sake of contradiction, assume there is a candidate $c \in W$ that is dominated by a candidate $d \in C \setminus W$.
    Hence, the committee $W$ is dominated by the committee $W-c+d$ and thus not Pareto optimal, contradicting our choice of the committee $W$.

    It remains to show that any committee $W \subseteq C$, for which no candidate $c \in W$ is dominated by a candidate $d \in C \setminus W$, is Pareto optimal.

    Let $W \subseteq C$ be a committee such that no candidate $c \in W$ is dominated by any candidate $d \in C\setminus W$.

    For the sake of contradiction, assume that $W$ is not Pareto optimal.

    Let $S \subseteq C \setminus W$ be a subset of the remaining committee with minimum cardinality, such that $S$ dominates a subset of the committee $W' \subseteq W$ with $\lvert S\rvert \leq \lvert W' \rvert$.
    Further, let $U \subseteq W$ be a subset of the committee that is dominated by $S$ with minimum cardinality, for which $\lvert S \rvert \leq \lvert U \rvert$ holds, i.e., $U$ is the smallest committee subset dominated by $S$ that is at least as large as $S$.

    We show that such subsets $S$ and $U$ cannot exist by way of contradiction.

    Let $i \in V$ be the rightmost voter approving a candidate in $U$ and $c \in U$ be a candidate approved by voter $i$.

    Every voter approves at least as many candidates in $S$ as in $U$, as $S$ dominates $U$.
    Hence, voter $i$ also approves a candidate $d \in S$.

    Further, let $\ell \in V$ be the leftmost voter approving candidate $c$.

    To continue we make a case distinction.

    \begin{figure}[h]
        \begin{minipage}[c]{0.4\linewidth}
        \centering
        \begin{tikzpicture}[yscale=0.5, xscale=0.65]
            \node[circle, fill=black, inner sep=1pt, label=below:$\dots$] (voters1) at (1.5,-1) {};
            \node[circle, fill=black, inner sep=1pt, label=below:$\ell$]
            (l) at (2.5,-1) {};
            \node[circle, fill=black, inner sep=1pt, label=below:$\dots$]
            (voters2) at (3.5,-1) {};
            \node[circle, fill=black, inner sep=1pt, label=below:$i$]
            (i) at (4.5,-1) {};
            \node[circle, fill=black, inner sep=1pt, label=below:$j$]
            (j) at (5.5,-1) {};
            \node[circle, fill=black, inner sep=1pt, label=below:$\dots$]
            (voters3) at (6.5,-1) {};

            \draw[fill=orange!20] (2,0) rectangle (5,1);
            \node at (2.5,0.5) {$c$};

            \draw[fill=cyan!20] (1.5,1) rectangle (5,2);
            \node at (2,1.5) {$d$};
        \end{tikzpicture}
        \caption{Candidate $d$ has $i$ as their rightmost voter and $\ell$ approves $d$}
        \label{fig:voter-interval-cases-1-1}
        \end{minipage}
        \hfill
        \begin{minipage}[c]{0.4\linewidth}
        \centering
            \begin{tikzpicture}[yscale=0.5, xscale=0.65]
                \node[circle, fill=black, inner sep=1pt, label=below:$\dots$] (voters1) at (1.5,-1) {};
            \node[circle, fill=black, inner sep=1pt, label=below:$\ell$]
            (l) at (2.5,-1) {};
            \node[circle, fill=black, inner sep=1pt, label=below:$\dots$]
            (voters2) at (3.5,-1) {};
            \node[circle, fill=black, inner sep=1pt, label=below:$i$]
            (i) at (4.5,-1) {};
            \node[circle, fill=black, inner sep=1pt, label=below:$j$]
            (j) at (5.5,-1) {};
            \node[circle, fill=black, inner sep=1pt, label=below:$\dots$]
            (voters3) at (6.5,-1) {};

            \draw[fill=orange!20] (2,0) rectangle (5,1);
            \node at (2.5,0.5) {$c$};

            \draw[fill=cyan!20] (3.5,1) rectangle (5,2);
            \node at (4,1.5) {$d$};
        \end{tikzpicture}
        \caption{Candidate $d$ has $i$ as their rightmost voter and $\ell$ does not approve $d$}
        \label{fig:voter-interval-cases-1-2}
        \end{minipage}
        \vfill
        \begin{minipage}[c]{0.4\linewidth}
        \centering
        \begin{tikzpicture}[yscale=0.5, xscale=0.65]
            \node[circle, fill=black, inner sep=1pt, label=below:$\dots$] (voters1) at (1.5,-1) {};
            \node[circle, fill=black, inner sep=1pt, label=below:$\ell$]
            (l) at (2.5,-1) {};
            \node[circle, fill=black, inner sep=1pt, label=below:$\dots$]
            (voters2) at (3.5,-1) {};
            \node[circle, fill=black, inner sep=1pt, label=below:$i$]
            (i) at (4.5,-1) {};
            \node[circle, fill=black, inner sep=1pt, label=below:$j$]
            (j) at (5.5,-1) {};
            \node[circle, fill=black, inner sep=1pt, label=below:$\dots$]
            (voters3) at (6.5,-1) {};

            \draw[fill=orange!20] (2,0) rectangle (5,1);
            \node at (2.5,0.5) {$c$};

            \draw[fill=cyan!20] (1.5,1) rectangle (6.5,2);
            \node at (2,1.5) {$d$};
        \end{tikzpicture}
        \caption{Candidate $d$ is approved by voter $j$ and $\ell$ approves $d$}
        \label{fig:voter-interval-cases-2-1}
        \end{minipage}
        \hfill
        \begin{minipage}[c]{0.4\linewidth}
        \centering
            \begin{tikzpicture}[yscale=0.5, xscale=0.65]
            \node[circle, fill=black, inner sep=1pt, label=below:$\dots$] (voters1) at (1.5,-1) {};
            \node[circle, fill=black, inner sep=1pt, label=below:$\ell$]
            (l) at (2.5,-1) {};
            \node[circle, fill=black, inner sep=1pt, label=below:$\dots$]
            (voters2) at (3.5,-1) {};
            \node[circle, fill=black, inner sep=1pt, label=below:$\ell'$]
            (lprime) at (4.5,-1) {};
            \node[circle, fill=black, inner sep=1pt, label=below:$\dots$]
            (voters3) at (5.5,-1) {};
            \node[circle, fill=black, inner sep=1pt, label=below:$i$]
            (i) at (6.5,-1) {};voter
            \node[circle, fill=black, inner sep=1pt, label=below:$j$]
            (j) at (7.5,-1) {};
            \node[circle, fill=black, inner sep=1pt, label=below:$\dots$]
            (voters4) at (8.5,-1) {};

            \draw[fill=orange!20] (2,0) rectangle (7,1);
            \node at (2.5,0.5) {$c$};

            \draw[fill=cyan!20] (4,1) rectangle (8.5,2);
            \node at (4.5,1.5) {$d$};
        \end{tikzpicture}
        \caption{Candidate $d$ is approved by voter $j$ and $\ell$ does not approve $d$}
        \label{fig:voter-interval-cases-2-2}
        \end{minipage}
    \end{figure}

    Case 1: Assume voter $i$ is the rightmost voter approving candidate $d$.

    Then one of the following cases holds:

    Case 1.1: Voter $\ell$ approves candidate $d$, as depicted in \Cref{fig:voter-interval-cases-1-1}.
    Voter $\ell$ is either the left most voter approving $d$ which implies $V(c) = V(d)$ or some voter left of $\ell$ also approves $d$.
    If $V(c) = V(d)$ holds, then the set $S-d$ dominates $U-c$ contradicting our choice of $U$ and $S$.
    Thus, a voter left of $\ell$ approves candidate $d$ but not $c$.

    For every voter $v \in V$ approving candidate $c$, it holds that $\ell \preceq v \preceq  i$.
    By the voter interval property, every voter $v \in V$ approving candidate $c$ also approves $d$.
    Hence, candidate $d$ dominates candidate $c$, contradicting our assumption of no candidate in $U$ being dominated by a candidate in $S$.

    Case 1.2: Voter $\ell$ does not approve candidate $d$ as in \Cref{fig:voter-interval-cases-1-2}.

    Thus, for all voters $v \in V$ approving candidate $d$, $\ell \prec v \preceq i$ holds and, hence, by the voter interval property, voter $v$ also approves candidate $c$.
    This yields that candidate $d$ is dominated by $c$, contradicting our choice of $S$ and $U$ as $S-d$ therefore dominates $U-c$ and $\lvert S \rvert- \lvert U \rvert = \lvert S-d \rvert - \lvert U - c \rvert$ but $S$ was chosen to be of minimum size.

    Case 2:
    Let $j \in V$ be the voter to the right of voter $i$.

    Assume voter $j$ approves $d$.

    Again, we distinguish two further cases.

    Case 2.1: Assume voter $\ell$ approves candidate $d$, as depicted in \Cref{fig:voter-interval-cases-2-1}.
    Voter $\ell$ is either the leftmost candidate approving $d$ or a voter left of $\ell$ also approves $d$.

    Analogously to case 1.1, we can conclude that candidate $d$ dominates candidate $c$, contradicting our assumption.

    Case 2.2: Assume voter $\ell$ does not approve candidate $d$, as depicted in \Cref{fig:voter-interval-cases-2-2}.

    Let $\ell' \in V$ be the rightmost voter approving candidate $c$ but not candidate $d$.

    As the subset of candidates $S$ dominates $U$, every voter $v \in V$ approves at most as many candidates in $U$ as in $S$.

    In particular, each voter $v \in V$ with $\ell \preceq v \preceq \ell'$ approves at most as many candidates in $U$ as in $S$.
    Additionally, voter $v$ approves candidate $c$ but does not approve candidate $d$.

    Hence, each voter $v \in V$ with $\ell \preceq v \preceq \ell'$ approves strictly less candidates in $U-c$ than in $S-d$.

    Further, each voter $v \in V$ with $\ell'\prec v \preceq i$ approves at most as many candidates in $U-c$ as in $S-d$ as voter $v$ approves exactly one more candidate in $U$ than in $U-c$ and exactly one more candidate in $S$ than in $S-d$.

    All voters $v \in V$ with $i \prec v$ do not approve any candidates in $U$ as voter $i$ is the rightmost voter approving a candidate in $U$.

    Additionally, all voters $v \in V$ with $v \prec \ell$ neither approve candidate $c$ nor $d$.

    We can thereby conclude that $S-d$ dominates $U-c$, contradicting our initial choice of $S$ and $U$.

    This yields that no such subsets $S$ and $U$ can exist.

    Hence, the committee $W$ is Pareto optimal.
\end{proof}

Thus, all voting instances satisfying Voter Interval satisfy SDO as well.

\subsubsection{Implications of SDO}

We have established that every voting instance satisfying Candidate Interval or Voter Interval also satisfy SDO.
We now show some structural properties of Pareto optimal committees for SDO voting instances, which can thereby be transferred to the Candidate Interval and Voter Interval domains.

\paragraph{Monotonicity}
\label{subsec:sdo-impl-monotonicity}

Committee monotonicity directly follows from the SDO property.
The SDO property allows us to check whether a committee is Pareto optimal by checking whether each subset of a partition of the committee is Pareto optimal.

\begin{lemma}
    Let $(C,V, \mathcal{A}, k)$ be a voting instance satisfying SDO and $A \subseteq C$ and $B \subseteq C$ two disjoint Pareto optimal committees of any size.
    The committee $A \cup B$ is also Pareto optimal.
    \label{lem:sdo-po-union}
\end{lemma}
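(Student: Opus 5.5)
The plan is to use the SDO characterization in both directions: first to extract "no single dominance" information from the Pareto optimality of $A$ and of $B$, and then to conclude Pareto optimality of $A \cup B$ from the same kind of information. The step that needs care is the size issue. The definition of SDO is stated for an instance $(C,V,\mathcal{A},k)$, but its characterization quantifies over every committee $W \subseteq C$. I read it as applying to committees of any size, which matches how the lemma is phrased ("two disjoint Pareto optimal committees of any size"). I would state this reading explicitly at the start so that the characterization can be applied to $A$, to $B$ and to $A \cup B$ even though these have different sizes.

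First, I would apply the ``only if'' direction of SDO to $A$. Since $A$ is Pareto optimal, no candidate $c \in A$ is dominated by any candidate $d \in C \setminus A$. This direction is in any case elementary: if $d \notin A$ dominated $c \in A$, then $A - c + d$ would es-dominate $A$. This is exactly the argument used at the start of \Cref{thm:restricted-candidate-interval-characterization-po} and \Cref{thm:restricted-voter-interval-characterization-po}. By the same reasoning, no candidate $c \in B$ is dominated by any $d \in C \setminus B$.

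Next, I would verify the single-dominance condition for $W = A \cup B$. Fix $c \in A \cup B$ and $d \in C \setminus (A \cup B)$. If $c \in A$, then $d \notin A$, so $d \in C \setminus A$, and by the previous step $d$ does not dominate $c$. If $c \in B$, the symmetric argument with $C \setminus (A\cup B) \subseteq C \setminus B$ applies. Hence no candidate of $A \cup B$ is dominated by a candidate outside $A \cup B$. The ``if'' direction of the SDO characterization, which is the nontrivial content established by \Cref{thm:restricted-candidate-interval-characterization-po} and \Cref{thm:restricted-voter-interval-characterization-po} in the concrete domains, then yields that $A \cup B$ is Pareto optimal.

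Disjointness of $A$ and $B$ is not actually needed in this argument. It only ensures that $\lvert A \cup B\rvert = \lvert A\rvert + \lvert B\rvert$, which is presumably what matters for the subsequent monotonicity application, where one adds a single candidate to a committee.
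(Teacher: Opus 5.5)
Your proposal is correct and follows essentially the same route as the paper. You obtain non-domination for $A$ and for $B$ (via SDO or the elementary swap argument), transfer it to $A \cup B$ using $C \setminus (A \cup B) \subseteq C \setminus A$ and $C \setminus (A \cup B) \subseteq C \setminus B$, and conclude with the ``if'' direction of SDO. Your side remarks are all accurate: the ``only if'' direction holds without SDO, the characterization has to be read for committees of arbitrary size, and disjointness of $A$ and $B$ is not needed.
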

\begin{proof}
    Let $A, B \subseteq C$ be such committees.
    As the SDO property holds, no candidate $c \in A$ is dominated by any candidate $d \in C \setminus A$ and no candidate $c' \in B$ is dominated by any candidate $d' \in C \setminus$.
    Thus, no candidate $c^* \in A \cup B$ is dominated by any candidate $d^* \in C \setminus (A \cup B)$.
    Hence, the committee $A \cup B$ is also Pareto optimal by the SDO property.
\end{proof}

Thus, to prove that, for every Pareto optimal size-$k$ committee $W$, a candidate $c \in C \setminus W$ exists, such that $W+c$ is a size-$(k+1)$ Pareto optimal committee, it suffices to show that a candidate $c \in C \setminus W$ who is not dominated by any other candidate $d \in C \setminus W$ exists.

\begin{lemma}
    Let $(C,V,\mathcal{A}, k)$ be a voting instance for which the SDO property holds.

    Let $k<|C|$. For any size-$k$ Pareto optimal committee $W \subseteq C$, some candidate $c\in C \setminus W$ exists, such that the committee $W+c$ is also Pareto optimal with respect to $k+1$.
    \label{lem:restricted-monotonicity}
\end{lemma}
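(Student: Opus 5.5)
We pick the new candidate as a maximal element of the dominance relation restricted to $C \setminus W$, then verify the SDO characterization for $W+c$ directly. By the remark after \Cref{lem:sdo-po-union}, it suffices to find some $c \in C \setminus W$ that no candidate of $C \setminus W$ dominates. Then every member of $W+c$ is undominated by candidates outside $W+c$, and $W+c$ is Pareto optimal by the SDO property.

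First, observe that domination between single candidates is exactly strict inclusion of approval sets. Indeed, $c$ is dominated by $d$ iff $\lvert A_i \cap \{c\}\rvert \le \lvert A_i \cap \{d\}\rvert$ for all voters $i$, with strict inequality for some voter. This says precisely that $V(c) \subseteq V(d)$ and $V(c) \neq V(d)$, i.e.\ $V(c) \subsetneq V(d)$. In particular, the relation ``$c$ is dominated by $d$'' is irreflexive and transitive, so it is a strict partial order on $C$.

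Since $k < \lvert C \rvert$, the set $C \setminus W$ is finite and nonempty. Restrict the strict order to $C \setminus W$ and pick a maximal element $c$. For instance, take $c \in C \setminus W$ maximizing the approval score $\lvert V(c)\rvert$: any $d$ dominating $c$ would satisfy $\lvert V(d)\rvert > \lvert V(c)\rvert$, so no $d \in C \setminus W$ dominates $c$.

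It remains to check the SDO condition for $W+c$. We must show that no candidate of $W+c$ is dominated by a candidate in $C \setminus (W+c)$.
\begin{itemize}
\item For $w \in W$: since $W$ is Pareto optimal and the instance satisfies SDO, no $d \in C \setminus W$ dominates $w$. Because $C \setminus (W+c) \subseteq C \setminus W$, the condition holds for $w$.
\item For $c$ itself: this holds by the choice of $c$ above.
\end{itemize}
Hence $W+c$ satisfies the characterization and is Pareto optimal. Alternatively, $\{c\}$ is Pareto optimal as a size-$1$ committee by SDO, so \Cref{lem:sdo-po-union} applied to $W$ and $\{c\}$ gives the same conclusion.

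One subtlety needs care: SDO is stated for the instance with committee size $k$, while we apply the characterization to a committee of size $k+1$. The characterization in the definition does not depend on $k$, and the proofs of \Cref{thm:restricted-candidate-interval-characterization-po} and \Cref{thm:restricted-voter-interval-characterization-po} work for committees of any size. So we read SDO as a property of the profile $(C,V,\mathcal{A})$, as \Cref{lem:sdo-po-union} already implicitly does. Beyond this, the argument is short, and the only real step is choosing $c$ of maximum approval score in $C\setminus W$.
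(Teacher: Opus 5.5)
Your proof is correct and follows the paper's route: choose $c\in C\setminus W$ of maximum approval score, so that no candidate of $C\setminus W$ dominates it, and conclude by the SDO characterization. Your direct check of that characterization for $W+c$ is the right way to finish, but drop the ``alternative'' aside, which is also how the paper phrases it via \Cref{lem:sdo-po-union}. The reason is that $\{c\}$ need not be Pareto optimal on its own, since $c$ may be dominated by a member of $W$; for example, take $C=\{a,c\}$, $W=\{a\}$ and $V(c)\subsetneq V(a)$.
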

\begin{proof}
    Let $W \subseteq C$ be a size-$k$ Pareto optimal committee.
    Thus, no candidate $c \in W$ is dominated by any candidate $d \in C \setminus W$, by the SDO property.

    Further, let $c \in C\setminus W$ be a candidate with maximum approval score among all candidates in $C \setminus W$.
    Then candidate $c$ is not dominated by any candidate $d \in C \setminus W$ as not every voter approving candidate $c$ can also approve candidate $d$ because the approval score of candidate $c$ is maximum.
    Hence, $W+c$ is a size-$k+1$ Pareto optimal committee, by \Cref{lem:sdo-po-union}.
\end{proof}

The SDO property is essential here, as, if SDO did not hold, it would be necessary to show that a candidate exists that is not es-dominated together with any subset of the original committee.

It now follows that committee monotonicity for Pareto optimality holds for every voting instance satisfying Candidate Interval or Voter Interval.

\begin{corollary}
    Let $(C,V,\mathcal{A}, k)$ be a voting instance satisfying Candidate Interval.

    Let $k<|C|$. For any size-$k$ Pareto optimal committee $W \subseteq C$, a candidate $d\in C \setminus W$ exists, such that the committee $W+d$ is also Pareto optimal with respect to $k+1$.
    \label{cor:restricted-candidate-interval-monotonicity}
\end{corollary}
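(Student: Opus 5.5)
This is an immediate combination of two earlier results, so the plan is short. First, by \Cref{thm:restricted-candidate-interval-characterization-po}, every voting instance $(C,V,\mathcal{A},k)$ satisfying Candidate Interval also satisfies the SDO property. Second, \Cref{lem:restricted-monotonicity} gives committee monotonicity for any instance satisfying SDO. Applying the lemma to the given instance, with $k<|C|$ and a size-$k$ Pareto optimal committee $W$, yields a candidate $d \in C\setminus W$ such that $W+d$ is Pareto optimal with respect to $k+1$. That is exactly the claim.

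One detail needs checking. The SDO property is a statement about committees of arbitrary size: its characterization is phrased for any $W \subseteq C$. So Pareto optimality of $W+d$ as a size-$(k+1)$ committee is covered by the same characterization. The theorem's proof likewise never uses the particular value of $k$. Hence the instance $(C,V,\mathcal{A},k+1)$ also satisfies SDO, and this is implicitly what \Cref{lem:restricted-monotonicity} relies on.

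Concretely, the witness is any candidate $d\in C\setminus W$ with maximum approval score among $C\setminus W$. Such a $d$ is not dominated by any other candidate outside $W$. Combined with $W$ being Pareto optimal, \Cref{lem:sdo-po-union} then gives that $W+d$ is Pareto optimal.

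I do not expect any real obstacle. The only point needing care is the size-independence of the SDO characterization noted above.
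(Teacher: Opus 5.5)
Your proposal is correct and is exactly the paper's proof: invoke \Cref{thm:restricted-candidate-interval-characterization-po} to get SDO, then apply \Cref{lem:restricted-monotonicity}. One small remark on how you unpack the witness: $\{d\}$ need not itself be a Pareto optimal committee, since it may be dominated by a member of $W$. The last step is therefore better justified directly by the SDO characterization (no candidate of $W+d$ is dominated by any candidate in $C\setminus(W+d)\subseteq C\setminus W$) than by \Cref{lem:sdo-po-union}.
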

\begin{proof}
    Let $(C, V, \mathcal{A}, k)$ be a voting instances satisfying Candidate Interval.
    By \Cref{thm:restricted-candidate-interval-characterization-po}, the SDO property holds for this voting instance.

    Hence, by \Cref{lem:restricted-monotonicity}, for any size-$k$ Pareto optimal committee $W \subseteq C$, a candidate $d\in C \setminus W$ exists, such that the committee $W+d$ is also Pareto optimal with respect to $k+1$.
\end{proof}

\begin{corollary}
    Let $(C,V,\mathcal{A}, k)$ be a voting instance satisfying Voter Interval.

    Let $k<|C|$. For any size-$k$ Pareto optimal committee $W \subseteq C$, a candidate $d\in C \setminus W$ exists, such that the committee $W+d$ is also Pareto optimal with respect to $k+1$.
    \label{cor:restricted-voter-interval-monotonicity}
\end{corollary}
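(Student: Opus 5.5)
The argument mirrors the proof of \Cref{cor:restricted-candidate-interval-monotonicity}, with the Voter Interval characterization in place of the Candidate Interval one. Let $(C,V,\mathcal{A},k)$ satisfy Voter Interval for some total order $\preceq$ on the voters. By \Cref{thm:restricted-voter-interval-characterization-po}, this instance satisfies the SDO property. This is the only point where the Voter Interval structure is used.

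Since SDO holds, \Cref{lem:restricted-monotonicity} applies directly. Take any size-$k$ Pareto optimal committee $W \subseteq C$ with $k < |C|$. The lemma gives a candidate $d \in C \setminus W$ such that $W+d$ is Pareto optimal for committee size $k+1$.

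Concretely, $d$ is a candidate of maximum approval score in $C \setminus W$. No other candidate outside $W$ can dominate $d$, since any dominating candidate would need a strictly larger approval score. Then \Cref{lem:sdo-po-union}, or equivalently the SDO characterization applied to $W+d$, shows that $W+d$ is Pareto optimal. For this, no candidate of $W$ may be dominated by a candidate of $C\setminus(W+d)$, which holds because $C\setminus(W+d) \subseteq C \setminus W$.

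There is no real obstacle. All substantive work is already contained in \Cref{thm:restricted-voter-interval-characterization-po}. The only detail to check is that the SDO property is a statement about the approval profile for every committee size, so it applies to the size-$(k+1)$ committee $W+d$ as well as to $W$. This holds because the characterization in the theorem does not depend on $k$.
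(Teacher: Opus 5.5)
Your proposal is correct and matches the paper's proof: \Cref{thm:restricted-voter-interval-characterization-po} gives SDO, and \Cref{lem:restricted-monotonicity} then supplies $d$, a candidate of maximum approval score in $C\setminus W$, with $W+d$ Pareto optimal. Your direct SDO check on $W+d$ is the sounder way to finish, because \Cref{lem:sdo-po-union} read literally needs $\{d\}$ itself to be Pareto optimal, which fails when $d$ is dominated by a member of $W$; all that is actually needed is that $d$ is not dominated by any candidate in $C\setminus(W+d)$.
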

\begin{proof}
    Let $(C, V, \mathcal{A}, k)$ be a voting instances satisfying Voter Interval.
    By \Cref{thm:restricted-voter-interval-characterization-po}, the SDO property holds for this voting instance.

    Hence, by \Cref{lem:restricted-monotonicity}, for any size-$k$ Pareto optimal committee $W \subseteq C$, a candidate $d\in C \setminus W$ exists, such that the committee $W+d$ is also Pareto optimal with respect to $k+1$.
\end{proof}

Committee monotonicity is very useful when constructing committees, as, thus, Pareto optimal committees can be constructed step by step.
As we further characterized, by what candidates a Pareto optimal committee can be extended, this opens up the possibility to construct Pareto optimal committees with further properties by one by one extending a Pareto optimal committee by candidates satisfying further properties.

\paragraph{Constructing a Pareto optimal committee satisfying EJR+}
\label{subsec:sdo-impl-ejrp-algo}

We next want to use this structure of Pareto optimal committees for constructing a committee satisfying EJR+ and Pareto optimality simultaneously.
\cite{PropAxiomsForMultiwVoting} have described a simple greedy algorithm to construct a committee of at most $k$ candidates that satisfies EJR+, given a voting instance.
We want to extend that algorithm to construct a size-$k$ committee that satisfies both EJR+ and Pareto optimality given a voting instance $(C,V,\mathcal{A}, k)$ that satisfies SDO.

\begin{algorithm}
\caption{Greedy EJR+ \& PO committee for voting instances satisfying SDO}
\label{alg:greedy-ejr-po}
    $W \gets \emptyset$\;
    \For{$\ell$ in $k, \dots, 1$}{
    \While{there is $c\notin W$: $\lvert \{i \in V(c) \colon \lvert A_i \cap W \rvert < \ell\} \rvert \ge \frac{\ell n}{k}$ }{
    $N \gets \{c \notin W \mid \lvert \{i \in V(c) \colon \lvert A_i \cap W \rvert < \ell\}\rvert \geq \frac{\ell n}{k}\}$\;
    select $d \in \arg \max_{d \in N} \lvert V(d) \rvert$\;
    \label{algline:greedy-ejr-po-ejrpart-order}
    $W \gets W \cup \{d\}$\;
    }
    }
    \label{algline:greedy-ejr-po-after-forloop}
    \While{$\lvert W \rvert < k$}{
    select $c \in \arg \max_{c \notin W} \lvert V(c) \rvert$\;
    $W \gets W \cup \{c\}$\;
    }
    \Return{ $W$}\;
\end{algorithm}

\begin{theorem}
    Given a voting instance $(C,V,\mathcal{A}, k)$ satisfying SDO, \Cref{alg:greedy-ejr-po} returns a size-$k$ committee that satisfies both EJR+ and Pareto optimality.
    \label{thm:sdo-ejrp-po-poly-algo}
\end{theorem}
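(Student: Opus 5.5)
The plan is to verify three things separately for the committee $W$ returned by \Cref{alg:greedy-ejr-po}: it has exactly $k$ candidates, it satisfies EJR+, and it is Pareto optimal. Recall that EJR+ forbids a candidate $c \notin W$, a level $\ell$, and a group of at least $\frac{\ell n}{k}$ voters in $V(c)$ who each approve fewer than $\ell$ candidates of $W$. The first (for-)loop is exactly the greedy EJR+ procedure of \cite{PropAxiomsForMultiwVoting}, except that ties among eligible candidates are broken by approval score. Its known guarantees do not depend on tie-breaking, so I would reuse them.

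\emph{Size and EJR+.} I would first argue that after the for-loop $\lvert W\rvert \le k$, using a charging argument. When a candidate is added at level $\ell$, its unit cost is split equally among $\lceil \ell n/k\rceil$ eligible voters, so each pays at most $\frac{k}{\ell n}$. Each such voter approves the added candidate, so its satisfaction rises by one. Fix a voter and list its charges $t=1,\dots,T$ in chronological order, with $\ell_t$ the level of the $t$-th charge. Before charge $t$ the voter has satisfaction at least $t-1$, and it must be $<\ell_t$; hence $t \le \ell_t$. Since levels only decrease, $\ell_t \ge \ell_T \ge T$ for all $t$. The voter's total charge is therefore at most $T\cdot\frac{k}{\ell_T n}\le \frac{k}{n}$. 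Summing over all $n$ voters bounds the number of added candidates by $k$. The final while-loop then fills $W$ up to exactly size $k$, which is possible since $k\le|C|$.

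For EJR+, I would use that the quantity $\lvert\{i\in V(c): \lvert A_i\cap W\rvert<\ell\}\rvert$ is non-increasing as $W$ grows. When the loop for level $\ell$ terminates, no candidate outside $W$ reaches the threshold $\frac{\ell n}{k}$. Later additions, whether at lower levels or in the fill-up phase, cannot change this. Hence the final $W$ admits no EJR+ violation at any level.

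\emph{Pareto optimality.} I would maintain the invariant that, at every step, no candidate in $W$ is dominated by a candidate in $C\setminus W$. By the SDO characterization, which is stated for committees of arbitrary size, this makes every intermediate $W$ Pareto optimal, including the final one. The invariant holds trivially for $W=\emptyset$. Suppose $d$ is added to $W$. Candidates already in $W$ are not dominated by anything in $C\setminus(W+d)\subseteq C\setminus W$, so it remains to show that $d$ itself is not dominated by any $d'\notin W+d$. If such a $d'$ existed, singleton domination would give $V(d)\subsetneq V(d')$, and so $\lvert V(d')\rvert>\lvert V(d)\rvert$. In the fill-up phase this directly contradicts the choice of $d$ as a maximum-score candidate outside $W$, exactly as in \Cref{lem:restricted-monotonicity}. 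In the for-loop, every voter of $V(d)$ with fewer than $\ell$ approved members of $W$ also lies in $V(d')$, so $d'$ also meets the threshold and belongs to $N$. Its larger approval score then contradicts the selection in \cref{algline:greedy-ejr-po-ejrpart-order}.

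The main obstacle I expect is the size bound for the first loop. The Pareto optimality part is a short consequence of the SDO property once the score-based tie-breaking is in place. The charging bound, however, needs the monotonicity of levels (the inequality $t\le \ell_t$ combined with $\ell_t\ge\ell_T$) to avoid a harmonic-sum overcount. If that argument turns out to be delicate, I would instead cite the corresponding bound of \cite{PropAxiomsForMultiwVoting} directly, noting that it holds for arbitrary tie-breaking.
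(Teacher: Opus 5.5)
Your proposal is correct and follows the paper's route. EJR+ is inherited from the greedy algorithm of \cite{PropAxiomsForMultiwVoting}, since only tie-breaking changes. Pareto optimality follows from SDO: a dominating outside candidate would either meet the threshold, hence lie in $N$ with strictly larger approval score, or contradict the maximum-score choice in the fill-up phase. Your explicit charging bound on $\lvert W\rvert$ after the for-loop and your invariant-style phrasing are only more self-contained versions of what the paper cites or argues at the end.
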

\begin{proof}
    We first show that \Cref{alg:greedy-ejr-po} returns a committee satisfying EJR+.
    The for-loop of our \Cref{alg:greedy-ejr-po} differs to the algorithm described by \cite{PropAxiomsForMultiwVoting} only by line~\ref{algline:greedy-ejr-po-ejrpart-order}.
    In the for-loop, candidates violating EJR+ are added to the committee in the original algorithm.
    We differ from this original algorithm only by deciding which exact candidate still violating EJR+ to add to the committee instead of adding any violating candidate.
    This does not impact the correctness of the algorithm.
    Thus, the committee $W$ satisfies EJR+ after the for-loop in line~\ref{algline:greedy-ejr-po-after-forloop}.

    In the remaining part of our algorithm no candidates are being removed, thus preserving EJR+, as \cite{PropAxiomsForMultiwVoting} have shown that the remaining candidates can be chosen arbitrarily in order for the final committee to satisfy EJR+ with respect to committee size $k$.

    We now show that \Cref{alg:greedy-ejr-po} returns a Pareto optimal committee.

    First, we show that every candidate selected in the for-loop is not dominated by any candidate not in the final committee returned.

    Any candidate $c$ that is selected in the for loop has a maximum approval score among all candidates in $N$.
    Hence, no candidate $d \in N$ dominates $d$.

    Let $d \in (C \setminus W)\setminus N$ be a candidate neither in the committee nor among the candidates in $N$.
    In particular, this means that $\lvert \{i \in V(d) \colon \lvert A_i \cap W \rvert < \ell\} \rvert \ge \frac{\ell n}{k}$ does not hold for candidate $d$.
    For the sake of contradiction, assume that $d$ dominates candidate $c$.
    Thus, every voter approving candidate $c$ also approves candidate $d$ by the definition of Pareto optimality.

    \noindent
    Therefore, $ \{i \in V(c) \colon \lvert A_i \cap W \rvert < \ell\} \subseteq \{i \in V(d) \colon \lvert A_i \cap W \rvert < \ell\}$ holds.

    \noindent
    As candidate $c$ was added to $W$ in the for-loop, $c \in N$ holds, and further \linebreak[4]$\lvert\{i \in V(c) \colon \lvert A_i \cap W \rvert < \ell\}\rvert \geq \frac{\ell n}{k}$ holds.

    \noindent
    Hence, $\lvert \{i \in V(d) \colon \lvert A_i \cap W \rvert < \ell\}\rvert \geq \lvert \{i \in V(c) \colon \lvert A_i \cap W \rvert < \ell\} \rvert \geq \frac{\ell n}{k}$ holds, contradicting our choice of candidate $d$.

    This yields, that no candidate selected to be added to $W$ in the for-loop is dominated by any candidate in $C \setminus W$.

    In the last part of our algorithm, every candidate added to $W$ has a maximum approval score among all candidates not in $W$ yet, and thereby is not dominated by any remaining candidate $d \in C \setminus W$.

    Hence, in the final committee $W$ that is being returned, no candidate $c \in W$ is dominated by any candidate $d \in C \setminus W$.
    By the SDO property, the committee $W$ is Pareto optimal.
\end{proof}

As this algorithm works for any voting instance satisfying SDO, the algorithm also correctly computes Pareto optimal committees satisfying EJR+ given voting instances satisfying Candidate Interval or Voter Interval.

\begin{corollary}
    Given a voting instance $(C,V,\mathcal{A}, k)$ satisfying Candidate Interval, \Cref{alg:greedy-ejr-po} returns a size-$k$ committee that satisfies both EJR+ and Pareto optimality.
\end{corollary}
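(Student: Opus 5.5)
This corollary is a direct combination of two earlier results, in the same way as \Cref{cor:restricted-candidate-interval-monotonicity}. Let $(C,V,\mathcal{A},k)$ be a voting instance satisfying Candidate Interval for some total order $\preceq$ on the candidates. First, I invoke \Cref{thm:restricted-candidate-interval-characterization-po}, which shows that every such instance also satisfies the SDO property. Since \Cref{alg:greedy-ejr-po} never uses the order $\preceq$, the instance can then be treated purely as an SDO instance.

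Second, I apply \Cref{thm:sdo-ejrp-po-poly-algo} to this SDO instance. It states that on any voting instance satisfying SDO, \Cref{alg:greedy-ejr-po} returns a size-$k$ committee satisfying both EJR+ and Pareto optimality. That is exactly the claim.

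The only point worth checking is that the hypotheses match. \Cref{thm:sdo-ejrp-po-poly-algo} requires nothing beyond SDO, and the SDO property obtained from \Cref{thm:restricted-candidate-interval-characterization-po} is a statement about the same instance $(C,V,\mathcal{A},k)$ with the same committee size $k$. So there is no real obstacle: the substantive work was already done in those two results. The Voter Interval analogue would follow in the same way from \Cref{thm:restricted-voter-interval-characterization-po}.
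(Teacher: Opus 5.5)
Your proposal is correct and is essentially the paper's own proof: \Cref{thm:restricted-candidate-interval-characterization-po} gives the SDO property for Candidate Interval instances, and \Cref{thm:sdo-ejrp-po-poly-algo} then shows that \Cref{alg:greedy-ejr-po} outputs a size-$k$ committee satisfying both EJR+ and Pareto optimality. Your side remark is also right: the Voter Interval analogue should cite \Cref{thm:restricted-voter-interval-characterization-po}, whereas the paper's proof of that corollary mistakenly cites the Candidate Interval theorem.
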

\begin{proof}
    By \Cref{thm:restricted-candidate-interval-characterization-po}, any voting instance satisfying Candidate Interval also satisfies SDO.
    Thus, by \Cref{thm:sdo-ejrp-po-poly-algo}, \Cref{alg:greedy-ejr-po} correctly returns a size-$k$ committee satisfying both EJR+ and Pareto optimality.
\end{proof}

\begin{corollary}
    Given a voting instance $(C,V,\mathcal{A}, k)$ satisfying Voter Interval, \Cref{alg:greedy-ejr-po} returns a size-$k$ committee that satisfies both EJR+ and Pareto optimality.
\end{corollary}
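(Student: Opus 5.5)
The corollary follows by combining two earlier results, exactly as in the Candidate Interval corollary. Let $(C,V,\mathcal{A},k)$ be a voting instance satisfying Voter Interval for some total order $\preceq$ on the voters.

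\textbf{Step 1.} By \Cref{thm:restricted-voter-interval-characterization-po}, this instance satisfies the SDO property. Its committee size $k$ is unchanged, and the theorem imposes no further hypothesis.

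\textbf{Step 2.} Apply \Cref{thm:sdo-ejrp-po-poly-algo} to this SDO instance. That theorem states that \Cref{alg:greedy-ejr-po} returns a size-$k$ committee satisfying both EJR+ and Pareto optimality, which is the desired conclusion.

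There is no real obstacle. The only point to check is that the hypotheses match verbatim: \Cref{thm:sdo-ejrp-po-poly-algo} requires only that the instance satisfy SDO, and Step 1 supplies exactly that. All the substantive work lies in \Cref{thm:restricted-voter-interval-characterization-po}, whose minimal-counterexample argument over the voter interval structure is assumed here.
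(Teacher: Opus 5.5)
Your proof is correct and is the paper's argument: Voter Interval implies SDO, and then \Cref{thm:sdo-ejrp-po-poly-algo} applies. The paper's written proof actually cites \Cref{thm:restricted-candidate-interval-characterization-po}, a copy-paste slip from the Candidate Interval corollary, so your citation of \Cref{thm:restricted-voter-interval-characterization-po} is the correct one.
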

\begin{proof}
    By \Cref{thm:restricted-candidate-interval-characterization-po}, any voting instance satisfying Candidate Interval also satisfies SDO.
    Thus, by \Cref{thm:sdo-ejrp-po-poly-algo}, \Cref{alg:greedy-ejr-po} correctly returns a size-$k$ committee satisfying both EJR+ and Pareto optimality.
\end{proof}

We thereby resolve the problem of finding a polynomial-time algorithm for finding a committee that satisfies both EJR+ and Pareto optimality for voting instances satisfying Voter Interval.

\paragraph{Reconfiguration Graph}
\label{subsec:sdo-impl-reconfig-graph}

 For SDO voting instances, the Pareto optimality reconfiguration graph is connected, as we now show.
 Further, the distance between two size-$k$ committees $A$ and $B$ is $k - \lvert A \cap B\rvert$.
 This implies any Pareto optimal committee to be reconfigurable into any other without the need for auxiliary candidates, not in either one of the committees.
 Our proof also implies a possible which candidate can be replaced by which candidate in each step, namely the lowest approval score candidate of the original committee by a non-dominated candidate of the target committee.

\begin{lemma}
    Let $(C,V,\mathcal{A}, k)$ be a voting instance for which the SDO property holds.

    Let $W_1, W_2 \subseteq C$ be two size-$k$ Pareto optimal committees.
    The distance between $W_1$ and $W_2$ in the reconfiguration graph $\Gamma_{Po}(\mathcal{A}, k)$ equals $k - |W_1\cap W_2|$.
    \label{lem:restricted-reconfiguration}
\end{lemma}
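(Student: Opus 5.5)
We prove the two inequalities $d(W_1,W_2) \ge k - |W_1 \cap W_2|$ and $d(W_1,W_2) \le k - |W_1 \cap W_2|$ separately. The lower bound is immediate. Along an edge of $\Gamma_{Po}(\mathcal{A},k)$ the two committees share $k-1$ candidates, so one step increases $|W \cap W_2|$ by at most one. Hence at least $k - |W_1 \cap W_2|$ steps are needed to go from $W_1$ to $W_2$.

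For the upper bound we use induction on $t = k - |W_1 \cap W_2|$; the case $t=0$ is trivial. For $t \ge 1$ it suffices to find $c \in W_1 \setminus W_2$ and $d \in W_2 \setminus W_1$ such that $W_1' = W_1 - c + d$ is Pareto optimal. Then $W_1'$ is adjacent to $W_1$ and satisfies $|W_1' \cap W_2| = |W_1 \cap W_2| + 1$, so the induction hypothesis applies to $W_1'$ and $W_2$.

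Throughout, we use that for singletons, ``$e$ is dominated by $f$'' means $V(e) \subsetneq V(f)$, which forces $|V(e)| < |V(f)|$. We make the following choices:
\begin{itemize}
\item $c$ is a candidate of \emph{minimum} approval score in $W_1 \setminus W_2$;
\item $d$ is a candidate of \emph{maximum} approval score in $W_2 \setminus W_1$.
\end{itemize}
By the SDO property, $W_1'$ is Pareto optimal iff no member of $W_1'$ is dominated by a member of $C \setminus W_1' = (C \setminus W_1) - d + c$. We check this in three cases.

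First, let $e \in W_1 - c$ and $f \in C \setminus W_1$. Then $f$ does not dominate $e$, because $W_1$ is Pareto optimal and SDO holds.

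Second, let $e \in W_1 - c$ and consider whether $c$ dominates $e$.
\begin{itemize}
\item If $e \in W_1 \cap W_2$, then $c \notin W_2$, so $c$ does not dominate $e$ by SDO applied to $W_2$.
\item If $e \in W_1 \setminus W_2$, then $|V(e)| \ge |V(c)|$ by minimality of $c$, so $c$ cannot dominate $e$.
\end{itemize}

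Third, consider $d$ and a possible dominator $f \in (C \setminus W_1) - d + c$.
\begin{itemize}
\item If $f \notin W_2$ (this includes $f = c$), then $f$ does not dominate $d$ by SDO applied to $W_2$.
\item Otherwise $f \in (W_2 \setminus W_1) - d$, and maximality of $|V(d)|$ excludes $V(d) \subsetneq V(f)$.
\end{itemize}
Hence $W_1'$ is Pareto optimal.

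The main obstacle is choosing the swap pair $(c,d)$ correctly. The dominators of $d$ lying inside $W_2$ are controlled by the max-score choice of $d$. The potential domination of members of $W_1$ by the removed $c$ is controlled by the min-score choice of $c$ together with the Pareto optimality of $W_2$. The rest is bookkeeping with the SDO characterization.
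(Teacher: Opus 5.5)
Your proof is correct and follows essentially the same route as the paper: induction on $k - |W_1 \cap W_2|$, removing a minimum-approval-score candidate of $W_1 \setminus W_2$ and adding a maximum-approval-score candidate of $W_2 \setminus W_1$, with Pareto optimality of the new committee checked through the SDO characterization. Your explicit lower-bound argument, that each edge raises $|W \cap W_2|$ by at most one, is a worthwhile addition, since the paper's induction only tacitly assumes that no shorter path exists.
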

\begin{proof}
    We show our claim by induction.

    IB: For two committees $W_1, W_2$ that only differ in one candidate, the statement is trivially true as the candidate that is in $W_1$ but not in $W_2$ can be replaced by the candidate that is not in $W_1$ but in $W_2$.

    IH: Let $p \in \N$. Assume that for any size-$k$ Pareto optimal committees $W_1, W_2 \subseteq C$ with $p = k-|W_1 \cap W_2|$, the distance between $W_1$ and $W_2$ is exactly $p$.

    IS:
    Let $W_1$ and $W_2$ be two Pareto optimal committees differing in exactly $p+1$ candidates.
    In particular, no candidate $c \in W_1$ is dominated by any candidate $d \in C \setminus W_1$ and no candidate $c \in W_2$ is dominated by any candidate $d \in C \setminus W_2$.
    We would like to replace in each step one candidate of $W_1$ by a candidate of $W_2$ while maintaining Pareto optimality.

    Let $c\in W_2\setminus W_1$ be a candidate with maximum approval score among all candidates in $W_2\setminus W_1$ and $d \in W_1\setminus W_2$ one with minimum approval score among all candidates in $W_1\setminus W_2$.

    \noindent
    We claim that $W_1-d+c$ is a Pareto optimal committee with $d(W_1-d+c, W_2) = p$.

    As candidate $c$ has a maximum approval score in $W_2\setminus W_1$, $c$ is not dominated by any candidate in $W_2 \setminus W_1$.
    Further, as $W_2$ is Pareto optimal, the candidate $c$ is not also dominated by any candidate in $C\setminus W_2$.
    Then, the candidate $c$ is not dominated by any candidate in $C\setminus W_1$, too, as $W_2 \setminus W_1 \cup C \setminus W_2 = C \setminus (W_1 \cap W_2) \supseteq C \setminus W_1$.

    As $d$ has a minimum approval score in $W_1\setminus W_2$, no candidate in $W_1 \setminus W_2$ is dominated by $d$.
    Additionally, $d \notin W_2$ and $W_2$ is Pareto optimal and therefore no candidate in $W_2$ can be dominated by $d$.
    Thus, candidate $c$ is also not dominated by $d$.

    Further, no candidate in $W_1-d$ is dominated by a candidate in $C \setminus (W_1-d)$, as $W_1$ is Pareto optimal.
    Candidate $c$ also does not dominate any candidate in $W_1-d$.

    Therefore, no candidate in $W_1-d+c$ is dominated by a candidate in $C \setminus (W_1-d+c)$ and hence, by our choice of the voting instance, the committee $W_1-d+c$ is Pareto optimal.
    As $k-\left|(W_1-d+c)\cap W_2\right| = p$ holds, and both committees $W_1-d+c$ and $W_2$ are Pareto optimal, their distance in the Pareto reconfiguration graph equals $p$ by the induction hypothesis.
    Thus, the committees $W_1$ and $W_2$ have a distance of $p+1$ in the reconfiguration graph.
\end{proof}

This can now also be shown to hold for any voting instance satisfying Candidate Interval or Voter Interval.

\begin{corollary}
    Let $(C,V, \mathcal{A}, k)$ be an approval-based voting instance that satisfies the candidate interval property for some total order $\preceq$ on the set of candidates.

    Further, let $W_1, W_2 \subseteq C$ be two size-$k$ Pareto optimal committees.
    The distance between $W_1$ and $W_2$ in the reconfiguration graph $\Gamma_{Po}(\mathcal{A}, k)$ equals $k - \lvert W_1\cap W_2 \rvert$.
    \label{cor:restricted-candidate-interval-reconfiguration}
\end{corollary}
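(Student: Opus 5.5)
This is a direct combination of two earlier results, in the same way \Cref{cor:restricted-candidate-interval-monotonicity} was derived. First, by \Cref{thm:restricted-candidate-interval-characterization-po}, any voting instance $(C,V,\mathcal{A},k)$ satisfying Candidate Interval for some total order $\preceq$ on the candidates also satisfies the SDO property. Second, \Cref{lem:restricted-reconfiguration} applies to any instance with the SDO property. For any two size-$k$ Pareto optimal committees $W_1, W_2 \subseteq C$, it gives $d(W_1,W_2) = k - \lvert W_1 \cap W_2 \rvert$ in $\Gamma_{Po}(\mathcal{A},k)$, which is exactly the claim.

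The only thing to check is that the hypotheses match. Pareto optimality of $W_1$ and $W_2$ is defined independently of the domain. The reconfiguration graph $\Gamma_{Po}(\mathcal{A},k)$ depends only on the approval profile and on $k$. So no extra argument is needed beyond chaining the two results, and I do not expect a real obstacle.

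One point could deserve a remark for completeness: the trivial case $W_1 = W_2$, where the distance is $0 = k - \lvert W_1 \cap W_2 \rvert$. The induction in \Cref{lem:restricted-reconfiguration} starts its base case at one differing candidate, so this case is not covered there explicitly. It holds trivially, and I would mention it in one sentence.
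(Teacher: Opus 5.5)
Your proposal is correct and is exactly the paper's argument: \Cref{thm:restricted-candidate-interval-characterization-po} shows that the instance satisfies SDO, and \Cref{lem:restricted-reconfiguration} then gives $d(W_1,W_2) = k - \lvert W_1 \cap W_2 \rvert$. Your extra remark on the trivial case $W_1 = W_2$ (distance $0$) is a harmless addition that the paper leaves implicit.
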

\begin{proof}
    Let $(C, V, \mathcal{A}, k)$ be a voting instances satisfying Candidate Interval.
    By \Cref{thm:restricted-candidate-interval-characterization-po}, the SDO property holds for this voting instance.

    Hence, by \Cref{lem:restricted-reconfiguration}, the distance between two size-$k$ Pareto optimal committees $W_1, W_2 \subseteq C$ in the reconfiguration graph $\Gamma_{Po}(\mathcal{A}, k)$ for this voting instance equals $k - \lvert W_1 \cap W_2 \rvert$.
\end{proof}

\begin{corollary}
    Let $(C,V, \mathcal{A}, k)$ be an approval-based voting instance that satisfies the voter interval property for some total order $\preceq$ on the set of voters.

    Further, let $W_1, W_2 \subseteq C$ be two size-$k$ Pareto optimal committees.
    The distance between $W_1$ and $W_2$ in the reconfiguration graph $\Gamma_{Po}(\mathcal{A}, k)$ equals $k - \lvert W_1\cap W_2 \rvert$.
    \label{cor:restricted-voter-interval-reconfiguration}
\end{corollary}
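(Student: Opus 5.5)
This follows the same two-step pattern as \Cref{cor:restricted-candidate-interval-reconfiguration}, with \Cref{thm:restricted-voter-interval-characterization-po} in place of the Candidate Interval characterization. Let $(C,V,\mathcal{A},k)$ satisfy Voter Interval for a total order $\preceq$ on the voters, and let $W_1, W_2 \subseteq C$ be size-$k$ Pareto optimal committees.

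First, we invoke \Cref{thm:restricted-voter-interval-characterization-po}. It shows that the instance satisfies the SDO property. Here it matters that the theorem is stated for the given instance $(C,V,\mathcal{A},k)$, so the SDO characterization is available for exactly the committee size $k$ used in $\Gamma_{Po}(\mathcal{A},k)$.

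Second, we apply \Cref{lem:restricted-reconfiguration} to this SDO instance and the committees $W_1,W_2$. It gives $d(W_1,W_2) = k - \lvert W_1\cap W_2\rvert$ in $\Gamma_{Po}(\mathcal{A},k)$, which is the claim.

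There is no real obstacle, since all the substantive work is in \Cref{thm:restricted-voter-interval-characterization-po} and \Cref{lem:restricted-reconfiguration}. The only thing to get right is to cite the Voter Interval characterization and not the Candidate Interval one.
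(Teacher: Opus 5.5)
Your proposal is correct and is the same argument as the paper's. The paper likewise applies \Cref{thm:restricted-voter-interval-characterization-po} to get the SDO property for the Voter Interval instance, and then applies \Cref{lem:restricted-reconfiguration} to conclude that $d(W_1,W_2) = k - \lvert W_1\cap W_2\rvert$ in $\Gamma_{Po}(\mathcal{A},k)$.
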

\begin{proof}
    Let $(C, V, \mathcal{A}, k)$ be a voting instances satisfying Voter Interval.
    By \Cref{thm:restricted-voter-interval-characterization-po}, the SDO property holds for this voting instance.

    Hence, by \Cref{lem:restricted-reconfiguration}, the distance between two size-$k$ Pareto optimal committees $W_1, W_2 \subseteq C$ in the reconfiguration graph $\Gamma_{Po}(\mathcal{A}, k)$ for this voting instance equals $k - \lvert W_1 \cap W_2 \rvert$.
\end{proof}

Thus, the Pareto optimality reconfiguration graph is connected for every voting instance satisfying SDO, such as voting instances satisfying Candidate Interval or Voter interval.

\subsection{Counting Pareto optimal committees in Voter Interval}
\label{chp:counting-voter-interval}
Another question we approached using the SDO property is counting the number of Pareto optimal committees in polynomial time.
In particular, we describe an algorithm using dynamic programming to count the number of Pareto optimal committees for voting instances satisfying Voter Interval, and argue about its correctness.

We believe the number of Pareto optimal can be counted for voting instances satisfying Voter Interval, because in voting instances satisfying Voter Interval every candidate is approved by an interval of voters.
An observation we make is that all candidates dominating one candidate have to be close to each other:
More specifically, we first partition all candidates into levels iteratively.
All candidates that are not dominated by any candidate are assigned to the first level and every other candidate is assigned to one level higher than the candidate that dominates that candidate with the highest level.
Within each level, the candidates dominating the same candidate of a higher level must be consecutive when sorted from left to right by their left-most voter.

This provides a strong structure we can use to count the number of Pareto optimal committees.
Further, we use that voting instances satisfying Voter Interval also satisfy the SDO property.
Thus, we can determine, for each candidate, the set of candidates that have to be included in a Pareto optimal committee such that the candidate can be added to the committee and the committee remains Pareto optimal.
If these sets of candidates are included in the committee for each candidate in the committee, the committee is Pareto optimal.

By our partitioning, described above, we separate candidates into candidates not dominated by any candidates, i.e., the level one candidates, candidates only dominated by level one candidates, i.e., level two candidates, candidates only dominated by candidates in level two and higher, and so on.
We say a committee is Pareto optimal with respect to level $i$ or higher, if all candidates in the committee are in level $i$ or higher, and no other committee, for which all candidates are in level $i$ or higher, es-dominates the committee.

We then compute, for each level $i$, the number of Pareto optimal committees with respect level $i$ or higher.
We partition every Pareto optimal committee $W$ with respect to level $i$ or higher into three disjoint subsets $A, B$ and $C$, such that $B$ is a maximal subset of consecutive level $i$ candidates, when sorted from left to right, including the rightmost candidate in $W$, $C$ is a Pareto optimal committee with respect level $i+1$ or higher, such that every level $i$ candidate, who dominates a candidate in $C$, is in $B$, and $A$ is a Pareto optimal committee with respect to level $i$ candidates or higher.
All Pareto optimal committees can be partitioned uniquely into such sets $A, B$ and $C$.
Further, we can efficiently compute the number of each of these sets and then combine them.
Thus, we can compute the total number of all Pareto optimal committees.

We reformulate this idea into an algorithm using dynamic programming.

\begin{conjecture}
The number of Pareto optimal committees in a voting instance $(C,V, \mathcal{A})$ satisfying voter interval can be counted in polynomial time.
\end{conjecture}

In the following, we assume that no two candidates are approved by the exact same voters, i.e., for no two candidates $a, b \in C$ $V(a) = V(b)$ holds.
We expect this algorithm to also work for voting instances in which multiple candidates are approved by the same voters by introducing new voters such that the domination relation remains unchanged for all candidates.
Voters could for example be added similarly to the way depicted in \Cref{fig:counting-voter-interval-duplicates}.

\begin{figure}[h!]
\centering
\begin{tikzpicture}[yscale=0.5, xscale=0.65]
\node[circle, fill=black, inner sep=1pt] at (0.5,-1) {};
\node at (0.5,-1.5) {$a-1$};

\node[circle, fill=black, inner sep=1pt] at (1.5,-1) {};
\node at (1.5,-1.5) {$a$};

\node[circle, fill=black, inner sep=1pt] at (2.5,-1) {};
\node at (2.5,-1.5) {$\dots$};

\node[circle, fill=black, inner sep=1pt] at (3.5,-1) {};
\node at (3.5,-1.5) {$b$};

\node[circle, fill=black, inner sep=1pt] at (4.5,-1) {};
\node at (4.5,-1.5) {$b+1$};

\node[circle, fill=black, inner sep=1pt] at (7,-1) {};
\node at (7,-1.5) {$a-1$};

\node[circle, fill=black, inner sep=1pt] at (8,-1) {};
\node at (8,-1.5) {$\ell_1$};

\node[circle, fill=black, inner sep=1pt] at (9,-1) {};
\node at (9,-1.5) {$\ell_2$};

\node[circle, fill=black, inner sep=1pt] at (10,-1) {};
\node at (10,-1.5) {$a$};

\node[circle, fill=black, inner sep=1pt] at (11,-1) {};
\node at (11,-1.5) {$\dots$};

\node[circle, fill=black, inner sep=1pt] at (12,-1) {};
\node at (12,-1.5) {$b$};

\node[circle, fill=black, inner sep=1pt] at (13,-1) {};
\node at (13,-1.6) {$r_1$};

\node[circle, fill=black, inner sep=1pt] at (14,-1) {};
\node at (14,-1.6) {$r_2$};

\node[circle, fill=black, inner sep=1pt] at (15,-1) {};
\node at (15,-1.5) {$b+1$};

    \drawcand{1}{2/4}
    \drawcand{2}{2/4}
    \drawcand[cyan!20]{3}{2/4}

    \draw[fill=orange!20] (7.5,0) rectangle (12.5,1);
    \node at (8,0.5) {$c_1$};
    \draw[fill=red!40] (8.5,1) rectangle (13.5,2);
    \node at (9,1.5) {$c_2$};
    \draw[fill=cyan!20] (9.5,2) rectangle (14.5,3);
    \node at (10,2.5) {$c_3$};
\end{tikzpicture}
\caption{Voters can likely be added such that the intervals of voters approving each candidate are shifted. This way these candidates do not dominate each other while other domination relations can be preserved.}
\label{fig:counting-voter-interval-duplicates}
\end{figure}

\subsubsection{Partitioning of the Candidates}
We first partition all candidates iteratively into sets as follows:

Let $D_1 \subseteq C$ be the set of all candidates not dominated by any candidate of $C$.
For every other candidate $c \in C \setminus D_1$, let candidate $c$ be in the subset $D_i$, iff a candidate $c' \in D_{i-1}$ dominates candidate $c$ and for every candidate $c'' \in C$ dominating candidate $c$ an index $j$ exists such that $j < i$ and $c \in D_j$, i.e., all candidates dominating $c$ are in a lower level and at least one is exactly one level below $c$.

An example of this partitioning can be seen in \Cref{fig:counting-voter-interval-levels}.

\begin{figure}
    \centering

\begin{tikzpicture}[yscale=0.5, xscale=0.65, voter/.style={anchor=south}]

    \foreach \i in {1,...,10}
        \node[voter] at (\i-0.5, -1.5) {$\i$};

    \drawcand{1}{8/8}

    \drawcand{2}{2/4}
    \drawcand[violet!30]{3}{4/5}
    \drawcand[blue!20]{4}{7/9}

    \drawcand[green!20]{5}{1/7}
    \drawcand[lime!20]{6}{5/10}

    \end{tikzpicture}

    \caption{The candidates $c_5$ and $c_6$ make up $D_1$ as they are not dominated by any candidate, candidates $c_2, c_3$ and $c_4$ are in $D_2$ because they are only dominated by candidates in $D_1$ and candidate $c_1$ is in $D_3$ here as it is dominated by candidates $c_4$ and $c_6$.}
    \label{fig:counting-voter-interval-levels}
\end{figure}

We define $level[c]$ for each candidate $c \in C$ as the index of the level $c$ is in, i.e., $level[c] \coloneqq  i$, iff $c \in D_i$ holds.
The number of candidates $m$ is an upper bound on the number of levels.
Further, we observe that every candidate is dominated by at least one candidate of each lower level due to Pareto dominating being a transitive relation.

\subsubsection{Definition of the Tables}

For our algorithm we use two tables which we define in the following, starting with the $dp2$ table:

\paragraph{Table $dp2$}
For any voter $v_\alpha, v_\beta, v_f, v_g \in V$, size $s \in \mathbb{N}$ and level $t \in \mathbb{N}$, the value \linebreak[4]$dp2[v_\alpha][v_\beta][v_f][v_g][s][t]$ is the number of size-$s$ committees $W$ for which the following holds:
\begin{enumerate}
    \item  No candidate of a lower level than $t$ can be contained in the committee, i.e., $\forall c \in W \colon level[c] \geq t$.
    \item $W$ is Pareto optimal with respect to $\bigcup_{i \geq t} D_i$.
    Because of the SDO property, this is equivalent to no candidate $d \in C \setminus W$ with $level[d] \geq t$ dominating any candidate $c \in W$.
    \item A candidate $d \in D_t$ is in the committee $W$, iff all voters approving candidate $d$ lie between $v_\alpha$ and $v_\beta$ inclusively, i.e.,

    \noindent
    $\forall v \in V(d)\colon v_\alpha \preceq v \preceq v_\beta$.
    \item A candidate of level $t$ whose leftmost voter is $v_\alpha$ is in the committee $W$, i.e., $\exists c \in W\colon v_\alpha \in V(c) \land level[c] = t$.
    \item A candidate of level $t$ whose rightmost voter is $v_\beta$ is in the committee $W$, i.e., $\exists c \in W\colon  v_\beta \in V(c) \land level[c] = t$.
    \item For every candidate $c \in W$ in the committee, the rightmost voter approving $c$ is right of $v_f$ and the leftmost left of $v_g$, i.e.,

    \noindent
    $\forall c \in W\colon \exists v_1 \in V(c)\colon v_f \prec v_1 \land \exists v_2 \in V(c) \colon v_2 \prec v_g$.
\end{enumerate}

Less formally speaking, $dp2[v_\alpha][v_\beta][v_f][v_g][s][t]$ describes the number of Pareto optimal committees of size $s$, disregarding the candidates in lower levels than $t$, in which all level $t$ candidates between $v_\alpha$ and $v_\beta$ are included but no candidate only approved by voters left of $v_{f+1}$ or only by voters right of $v_{g-1}$.

Additionally, we can make the following observations:
As each candidate $c$ of a level $i > t$ is dominated by a candidate of level $t$ and for every level-$t$ candidate in the committee $d \in W \cap D_t$ all voters approving candidate $d$ lie between voters $v_\alpha$ and $v_\beta$, all voters approving candidate $c$ also lie between $v_\alpha$ and $v_\beta$, i.e., \linebreak[4]$\forall v \in V(c)\colon v_\alpha \preceq v \preceq v_\beta$ holds.
However, there can be a candidate $c \in D_i$ for $i > t$ whose voters all lie between voters $v_\alpha$ and $v_\beta$ but who is dominated by a candidate $d \in D_t$, approved by a voter outside of $v_\alpha$ and $v_\beta$.
Because candidate $d$ cannot be included in a committee counted because of the third property, candidate $c$ may also not be included in any committee counted because, by the second property, despite only being approved by voters between $v_\alpha$ and $v_\beta$.

\paragraph{Table $dp1$}
Further, let $dp1[v_\alpha][v_\beta][v_f][v_g][s][t]$ be defined as the number of size-$s$ committees $W$ for which the following holds:
\begin{enumerate}
    \item No candidate of a lower level than $t$ can be contained in the committee, i.e., $\forall c \in W \colon level[c] \geq t$.
    \item $W$ is Pareto optimal with respect to $\bigcup_{i \geq t} D_i$.
    As for $dp2$, because of the SDO property, this is equivalent to no candidate $d \in C \setminus W$ with $level[d] \geq t$ dominating any candidate $c \in W$.
    \item No candidate in $W$ is approved by a voter outside of $v_\alpha$ to $v_\beta$, i.e.,

    \noindent
    $\forall c \in W\colon \forall v \in V(c)\colon v_\alpha \preceq v \preceq v_\beta$.
    \item A candidate of level $t$ whose rightmost voter is $v_\beta$ is in the committee $W$, i.e., $\exists c \in W\colon  v_\beta \in V(c) \land level[c] = t$.
    \item For every candidate $c \in W$ in the committee, the rightmost voter approving $c$ is right of $v_f$ and the leftmost left of $v_g$, i.e.,

    \noindent
    $\forall c \in W\colon \exists v_1 \in V(c)\colon v_f \prec v_1 \land \exists v_2 \in V(c) \colon v_2 \prec v_g$.
\end{enumerate}

In contrast to $dp2$, not all candidates of level $t$ only approved by voters between $v_\alpha$ and $v_\beta$ have to be included in committees counted by $dp1$.
By the second property, candidates in a committee counted by $dp1$ must still only be approved by voters between $v_\alpha$ and $v_\beta$.
However, only a level-$t$ candidate approved by voter $v_\beta$ must be included in any committee counted by $dp1$.

\begin{figure}[h]
\centering
\begin{tikzpicture}[yscale=0.5, xscale=0.65, voter/.style={anchor=south}]

    \foreach \i in {1,...,10}
        \node[voter] at (\i-0.5, -2.5) {$\i$};

    \drawcand{1}{8/8}

    \drawcand{2}{2/4}
    \drawcand[violet!30]{3}{4/5}
    \drawcand[blue!20]{4}{7/9}

    \drawcand[green!20]{5}{1/7}
    \drawcand[lime!20]{6}{5/10}

    \begin{scope}[on background layer]
    \draw (1.5,-0.5) -- (1.5,6.5);
    \node[voter] at (1.5,-1.5) {$v_\alpha$};

    \draw (3.5,-0.5) -- (3.5,6.5);
    \node[voter] at (3.5,-1.5) {$v_f$};

    \draw (8.5,-0.5) -- (8.5,6.5);
    \node[voter] at (8.5,-1.5) {$v_\beta$};

    \draw (9.5,-0.5) -- (9.5,6.5);
    \node[voter] at (9.5,-1.5) {$v_g$};
    \end{scope}

    \end{tikzpicture}

    \caption{For every committee counted by the value of  $dp1[v_\alpha][v_\beta][v_f][v_g][s][2]$, candidate $c_4$ must be included, as $c_4$ is in level $2$ and also approved by voter $v_\beta$. Candidates $c_1$ and $c_3$ may be included, as no condition is violated for both of them. Candidate $c_2$ cannot be included in any committee counted by this value, as no voter right of $v_f$ approves $c_2$.
    For $dp2[v_\alpha][v_\beta][v_f][v_g][s][2]$, no committee exists that satisfies all conditions as candidate $c_2$ is not approved by any voter right of $v_f$ but is in level $t$ and only approved by voters between $v_\alpha$ and $v_\beta$.}
    \label{fig:counting-voter-interval-limits}
\end{figure}

\subsubsection{Computation of $dp1$}
\paragraph{Initialization}
There is always exactly one Pareto optimal committee of size 0 given any other parameters, as the empty set is a Pareto optimal size-0 committee.
Thus, we set all values of the array for sizes 0, i.e., where $s = 0$, as 1.

Further, we set all values for level $t > m$ to 0 if $s > 0$, since no candidates can have a level of more than $m$.

Next, we set the value of $dp1[v_\alpha][v_\beta][v_f][v_g][s][t]$ to $0$, if $v_\beta \prec v_f$ holds, because the rightmost candidate, whose voters are all between $v_\alpha$ and $v_\beta$, has to be included in every committee counted but all voters approving that candidate are left of $v_f$ as $v_\beta \prec v_f$ holds.
Hence, no such committee exists.
Analogously, we set the value of $dp1[v_\alpha][v_\beta][v_f][v_g][s][t]$ to $0$, if $v_g \prec v_\alpha$ holds.

This concludes the initialization of $dp1$.

\paragraph{Computation}

In all other cases, the value of $dp1[v_\alpha][v_\beta][v_f][v_g][s][t]$ can be computed based on the values of $dp2$ for level $t$ as follows:

Let $c^t_1, \dots, c^t_{x_t} \in C$ be all candidates of level $t$ such that $\forall v \in V(c^t_i):v_\alpha \preceq v \preceq v_\beta$ holds sorted from left to right by their leftmost voter.
I.e., candidate $c^t_i$ is approved by a voter left of all voters approving $c^t_j$ for every $j > i$.
As no candidates are approved by only the same voters and no two candidates within one level dominate each other, this sorting is unique.

The candidates $c^t_1, \dots, c^t_{x_t}$ are the level $t$ candidates that may be included in a committee we want to count.

If $v_\beta \notin V(c^t_{x_t})$ is true, then we set the value of $dp1[v_\alpha][v_\beta][v_f][v_g][s][t]$ to $0$.
Else, $c^t_{x_t}$ must be included in every committee counted.
Let $\ell^t_i$ be the leftmost voter approving candidate $c^t_i$ and $r^t_i$ be the rightmost voter approving $c^t_i$.
Hence, for example, $r^t_{x_t} = v_\beta$.
We calculate the values of $dp1$ as following:
\begin{align*}
    dp1[v_\alpha][v_\beta][v_f][v_g][s][t] \coloneqq &\phantom{+ } \sum_{i=1}^{x_t}dp2[\ell^t_i][v_\beta][v_f][v_g][s][t] \\*
    &+ \sum_{q=1}^{s-1} \sum_{i = 1}^{x_t} \sum_{j = 1}^{i-2} dp1[v_\alpha][r^t_j][v_f][v_g][q][t] \\*
    &\phantom{+ \sum_{q=1}^{s-1} \sum_{i=1}^{x_t} \sum_{j=1}^{i-2}}
   \cdot dp2[\ell^t_i][v_\beta][v_f][v_g][s-q][t]
\end{align*}

\paragraph{Explanation}
We now argue for why the value of $dp1[v_\alpha][v_\beta][v_f][v_g][s][t]$ is computed correctly by the equation above.

In a committee $W$ that is supposed to be counted by $dp[v_\alpha][v_\beta][v_f][v_g][s][t]$ some, but possibly not all candidates of level $t$ whose voters are between $v_\alpha$ and $v_\beta$, i.e., the candidates $c^t_1,\dots, c^t_{x_t}$, are included.
However, candidate $c^t_{x_t}$ has to be in the committee $W$.
The candidates of level $t$ in $W$ can be partitioned into two sets.
First a maximum set of consecutive candidates $A$ including $c^t_{x_t}$, e.g., $\{c^t_{x_t-2}, c^t_{x_t-1}, c^t_{x_t}\}$, and second a set $B$, the rest of the level $t$ candidates in the committee $W$.
For an example, consider a set of candidates $\{c_1, \dots, c_8\}$ as depicted in \Cref{fig:counting-voter-interval-dp1-partitioning} and a committee $W$ that contains candidates $c_1, c_2, c_4, c_6, c_7$ and $c_8$ but not candidates $c_3$ or $c_5$.
In that case, set $A$ contains the candidates $c_6, c_7$ and $c_8$, as candidate $c_5$ is not in the committee and, thus, this is the largest set of consecutive candidates that contains candidate $c_8$, the rightmost candidate.
Hence, set $B$ contains the candidates $c_1, c_2$ and $c_4$.

\begin{figure}[h]
    \centering
    \begin{tikzpicture}[yscale=0.5, xscale=0.65, voter/.style={anchor=south}]

    \foreach \i in {1,...,18}
        \node[voter] at (\i-0.5, -2.5) {$\i$};
        \drawcand{1}{1/3}
        \drawcand{2}{4/7}
        \drawcand[magenta!20]{3}{5/8}
        \drawcand[violet!30]{4}{7/9}
        \drawcand[blue!20]{5}{11/12}
        \drawcand[cyan!20]{6}{12/14}
        \drawcand[green!20]{7}{13/16}
        \drawcand[yellow!30]{8}{15/18}

    \end{tikzpicture}
    \caption{In this figure, only the candidates $c^t_1, \dots, c^t_{x_t}$ for some choice of level $t$ and voters $v_\alpha$ and $v_\beta$ are depicted.}
    \label{fig:counting-voter-interval-dp1-partitioning}
\end{figure}

As $c^t_{x_t} \in W$ holds, the set $A$ cannot be empty.
The set $B$, however, can be empty, meaning that all candidates of level $t$ are consecutive in the committee $W$.

These committees, where $B =\emptyset$ is true, are counted by the first sum.
We iterate over all possible leftmost candidates of the consecutive level $t$ candidates in a committee and sum up the values of $dp2$ corresponding to the number of committees with the same set of level $t$ candidates with the same exclusion limits $v_f$ and $v_g$.
For each possible leftmost candidate $c^t_i$ included in a committee, we use the leftmost voter approving $c^t_i$, i.e., $\ell_i$, as the voter for $v_\alpha$ for $dp2$.
By the definition of $dp2$, the committees counted for $\ell_i$ as $v_\alpha$, the same voters $v_\beta, v_f$ and $v_g$, the same size $s$ and the same level $t$, are committees in which all level $t$ candidates are consecutive and include the candidates $c^t_i$ and $c^t_{x_t}$.

In case the set $B$ is not empty, there is a candidate $c^t_d$ which is not in $W$ but $c^t_{d+1}$ is in $W$ and in the first set.
The candidate $c^t_d$ marks the end of the consecutive candidates of the first set $A$.
In our example above, candidate $c_5$ is this candidate $c^t_d$, as $c_5$ is not in the committee but $c_6$ is in the committee and set $A$.
This candidate, so to say, forms a gap between the candidates of set $A$ and set $B$.

Further, another candidate $c^t_e$ exists that is the rightmost candidate in the second set $B$.
In our previous example, candidate $c_4$ is this candidate $c^t_e$ as no candidate right of $c_4$ is in set $B$.

Every candidate $c_i \in W$ with level $t' > t$ can be sorted into two sets depending on whether it is dominated by level t candidates of set $A$ or set $B$.
No candidate is dominated by a candidate in set $A$ and one in set $B$.

\begin{lemma}
    Let $W \subseteq C$ be a Pareto optimal committee and $t \in \N$.
    Further, let $c \in W$ be some candidate with $level[c] > t$.
    Let the candidates in $D_t$ be sorted left to right by their leftmost voter.
    Then, the candidates in $D_t$ that dominate $c$ form an interval, i.e., there are no three candidates $d_1, d_2, d_3 \in D_t$ with candidate $d_1$ being left of $d_2$ and candidate $d_2$ left of $d_3$, such that candidates $d_1$ and $d_3$ dominate $c$ but candidate $d_2$ does not.
    \label{lem:counting-voter-interval-partition}
\end{lemma}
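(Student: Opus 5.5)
The plan is a direct argument from the geometry of voter intervals. The Pareto optimality of $W$ and the membership $c \in W$ will not actually be needed. The only facts used are that $c$ and the three candidates are intervals of voters, that candidates of the same level do not dominate each other, and the standing assumption that no two candidates have identical approval sets.

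\textbf{Step 1: domination between single candidates.} First I record what domination of single candidates means. Candidate $c$ is dominated by $d$ iff every voter approving $c$ also approves $d$ and some voter approves $d$ but not $c$, i.e.\ $V(c) \subsetneq V(d)$. Under Voter Interval, write $V(c) = [a,b]$ and $V(d_i) = [\ell_i, r_i]$ for the leftmost and rightmost voters. Then $d_i$ dominates $c$ iff $\ell_i \preceq a$, $b \preceq r_i$ and the two intervals differ.

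\textbf{Step 2: same-level intervals are ordered consistently.} Next I show that within $D_t$ the order by leftmost voter coincides with the order by rightmost voter, strictly. By construction of the levels, no candidate in $D_t$ dominates another candidate in $D_t$: if $d \in D_t$ dominated $d' \in D_t$, then $d'$ would have a dominator in level $t$ and so could not itself be in level $t$. Together with the assumption that approval sets are pairwise distinct, the intervals of level-$t$ candidates are pairwise non-nested. For non-nested distinct intervals, $\ell_1 \prec \ell_2$ forces $r_1 \prec r_2$, since otherwise $[\ell_2,r_2] \subseteq [\ell_1,r_1]$. So for $d_1, d_2, d_3$ sorted left to right we get $\ell_1 \prec \ell_2 \prec \ell_3$ and $r_1 \prec r_2 \prec r_3$.

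\textbf{Step 3: the middle candidate also dominates $c$.} Now suppose $d_1$ and $d_3$ dominate $c$. From $d_3$ we get $\ell_2 \prec \ell_3 \preceq a$, and from $d_1$ we get $b \preceq r_1 \prec r_2$. Hence $V(c) \subseteq V(d_2)$, and in fact the inclusion is strict because $\ell_2 \prec a$. So $d_2$ dominates $c$, which contradicts the assumption that it does not. (Even without the strict inequality, $V(c) = V(d_2)$ would be excluded by distinctness of approval sets, since $level[c] > t$ means $c \neq d_2$.)

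\textbf{Main obstacle.} The only delicate point is Step 2: it must be justified carefully that same-level candidates really have non-nested intervals. This needs both the level construction and the distinct-approval-set assumption. Once that ordering fact is established, the remaining argument is a two-line interval comparison.
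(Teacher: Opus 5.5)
Your proof is correct and essentially the same as the paper's: both rest on interval endpoints together with the fact that candidates of the same level cannot dominate each other. The paper only arranges it differently, using the assumption that $d_2$ does not dominate $c$ to get $r_2 \prec r_1$ and then concluding that $d_1$ dominates $d_2$, whereas you first prove endpoint monotonicity within $D_t$ and then show directly that $d_2$ dominates $c$; you are also right that neither the Pareto optimality of $W$ nor $c \in W$ is used.
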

\begin{proof}
    For the sake of contradiction, assume there are such candidates $d_1, d_2, d_3 \in D_t$.
    Let $v_{\ell_1}$ be the leftmost voter of $d_1$ and $v_{r_1}$ the rightmost one.
    Analogously, let $v_{\ell_2}, v_{\ell_3}$ be the leftmost voters of candidates $d_2$ and $d_3$, and $v_{r_2}, v_{r_3}$ be the rightmost voters of candidates $d_2$ and $d_3$.
    As candidate $d_1$ is left of $d_2$, $v_{\ell_1} \prec v_{\ell_2}$ holds for their leftmost voters, and so does $v_{\ell_2} \prec v_{\ell_3}$ as candidate $d_2$ is left of $d_3$.
    Because candidate $c$ is dominated by $d_3$, $v_{\ell_3} \preceq v$ holds for every voter $v \in V(c)$.
    Hence, $v_{\ell_2} \prec v$ holds for every voter $v \in V(c)$.

    Candidate $d_2$ does not dominate $c$ by our assumption.
    Thus, $v_{r_2} \prec v_r$ has to hold for the rightmost voter $v_r$ approving $c$.
    Further, candidate $d_1$ dominates $c$ and hence, $v_r \preceq v_{r_1}$ holds.
    By $v_{l_1} \prec v_{l_2}$ and $v_{r_2} \prec v_{r_1}$ we can conclude that candidate $d_2$ is dominated by candidate $d_1$ as by the voter interval property candidate $d_1$ is approved by every candidate approving $d_2$ and another voter, e.g. voter $v_{l_1}$.
    This contradicts the choice of $d_1$ and $d_2$ as candidates of the same level, as candidates of the same level cannot dominate each other by the definition of the levels.

    Hence, no such three candidates can exist and the claim follows.
\end{proof}

Applying \Cref{lem:counting-voter-interval-partition}, we can extend sets $A$ and $B$ as described above.
Every candidate $c \in W$ with $level[c] > t$, for which a candidate $d \in A$ dominating candidate $c$ exists, is put into set $A$.
If no such candidate exists, a candidate \linebreak[4]$d \in B$ dominates $c$ and candidate $c$ is put into set $B$.
In the following, we call the candidates in $A$ for a committee $W$ the \emph{right candidates} and the candidates in $B$ the \emph{left candidates}.

We count these committees, for which set $B$ is not empty, in the second sum.
In the outer most sum, we iterate over the number of candidates in $B$, denoted as $q$.
In the middle sum, we iterate over the leftmost candidate of level $t$ included in the set $A$, similarly to the first sum described above.
This marks the left end of the right candidates of level $t$.
In the inner sum, we iterate over the rightmost candidates of level $t$ in the set $B$, i.e., the right end of the left candidates.
Thus, we iterate over every possible gap of consecutive level $t$ candidates between the left and right candidates.
By the value $dp1[v_\alpha][r^t_j][v_f][v_g][q][t]$ in the sum, we count every size-$q$ set $B$ of a committee.
For every committee $W$ that is to be counted, the candidates of set $B$ are approved by at least one voter between voters $v_f$ and $v_g$ and all voters approving level $t$ candidates in $B$ have to be right of $v_\alpha$ because of the definition of $dp1$.
Further, no candidate $c \in B$ is dominated by a candidate $d \in \bigcup_{i \geq t}D_i$.
Thus, the set $B$ itself is a size-$q$ Pareto optimal committee as the SDO property holds.
As no further restrictions have been made to the set $B$ besides the rightmost level $t$ candidate $c^t_j$ included in $B$, it is a committee that is counted by the value $dp1[v_\alpha][r^t_j][v_f][v_g][q][t]$.
Vice versa, each of the committees counted by this value is a subset $B$ of some committee $W$ that is supposed to be counted by $dp1[v_\alpha][v_\beta][v_f][v_g][s][t]$.

As in the previous case of $B$ being empty, the number of sets $A$ of size $s-q$ is counted by the value $dp2[\ell^t_i][v_\beta][v_f][v_g][s-q][t]$ because $A$ is a size-$(s-q)$ Pareto optimal committee with $c^t_i$ as the leftmost level $t$ candidate included and only consecutive level $t$ candidates following $c^t_i$.

Given the size $q$ and the two candidates the sums iterate over, any combination of valid sets $B$ and sets $A$ forms a committee $W$ that is supposed to be counted by $dp1[v_\alpha][v_\beta][v_f][v_g][s][t]$.
As the sets $A$ and $B$ are Pareto optimal committees with respect to the candidates of level $t$ and higher $\bigcup_{i \geq t} D_i$ and the SDO property holds, no candidate in $A$ or $B$ is dominated by any candidate $d \in \bigcup_{i \geq t} D_i$.
Hence, no candidate $c \in W$ is dominated by any candidate $d \in \bigcup_{i \geq t}D_t$ for any combination $W$ of two such sets $A$ and $B$.
Further, the committee $W$ is Pareto optimal with respect to $\bigcup_{i \geq t}D_i$ because the SDO property holds.

By iterating over all possible numbers of left candidates and the possible gap between the rightmost level $t$ candidate on the left and the leftmost level $t$ candidate on the right, every committee that is supposed to be counted by $dp1[v_\alpha][v_\beta][v_f][v_g][s][t]$ is counted.

Thus, exactly the committees that are supposed to be counted are counted.

\subsubsection{Computation of $dp2$}

\paragraph{Initialization}
Analogously to $dp1$, we set all values of the array for sizes 0 to 1 and all values for level $t > m$ to 0.
We set the value of $dp2[v_\alpha][v_\beta][v_f][v_g][s][t]$ to $0$, if $v_\beta \prec v_f$ or $v_g \prec v_\alpha$ holds or $v_\beta \prec v_\alpha$.

Additionally, we set the value of $dp2[v_\alpha][v_\beta][v_f][v_g][s][t]$ to 0, if \linebreak[4]$\forall d \in D_t\colon \exists v \in V(d)\colon v \prec v_\alpha \lor \forall d \in D_t\colon \exists v \in V(d)\colon v_\beta \prec v$ holds, i.e.\ if $v_\alpha$ is not the leftmost voter of a candidate of level $t$ or $v_\beta$ is not the rightmost.

This concludes the initialization of $dp2$.

\paragraph{Computation}

In all other cases, the value of $dp2[v_\alpha][v_\beta][v_f][v_g][s][t]$ can be computed based on the values of $dp1$ for level $t+1$ as follows:

Similarly to the computation of $dp1$, let $c^{u}_1, \dots c^{u}_{x_{u}}$ be all candidates of level $u$, for which all voters approving each candidate lie between voters $v_\alpha$ and $v_\beta$, sorted from left to right by their leftmost voter.
Let $\ell^u_{i}$ be the leftmost approving candidate $c^u_i$ and $r^u_i$ the rightmost one.
Let $c^t_{left}$ be the candidate left of $c^t_1$ and $r^t_{left}$ be the rightmost voter approving candidate $c^t_{left}$ and $c^t_{right}$ be the candidate right of $c^t_{x_t}$ and $\ell^t_{right}$ the leftmost voter approving candidate $c^t_{right}$.

Further, let $q$ be the number of level $t$ candidates $d \in D_t$ with \linebreak[4] $\forall v \in V(d): v_\alpha \preceq v \preceq v_\beta$, i.e., the level $t$ candidates that have to be included in a committee counted by $dp2[v_\alpha][v_\beta][v_f][v_g][s][t]$.

We make a case distinction by the number of these candidates $q$ compared to the numbers in the committees to be counted $s$.

If $s = q$ holds, then $dp2[v_\alpha][v_\beta][v_f][v_g][s][t] \coloneqq 1$ as the only valid committee is $\{c^t_1, \dots, c^t_{x_t}\}$ in this case.
Non of these candidates is dominated by a candidate $d \in \bigcup_{i \geq t}d_i$ by the definition of the levels and the committee thus Pareto optimal with respect to the candidates in $\bigcup_{i \geq t}d_i$, as the SDO property holds.

If $s < q$ holds, then $dp2[v_\alpha][v_\beta][v_f][v_g][s][t] \coloneqq 0$ as in this case not all candidates $\{c^t_1, \dots, c^t_{x_t}\}$ can be included in the committee, violating the properties of committees counted by the value $dp2[v_\alpha][v_\beta][v_f][v_g][s][t]$.

If $s > q$ holds, then we can calculate the value of $dp2[v_\alpha][v_\beta][v_f][v_g][s][t]$ by
\begin{align*}
&dp2[v_\alpha][v_\beta][v_f][v_g][s][t] \\
&\coloneqq \sum_{i=1}^{x_{t+1}} dp1[v_\alpha][r^{t+1}_i][\max(r^t_{left}, v_f)][\min(\ell^t_{right}, v_g)][s-q][t+1].
\end{align*}

\paragraph{Explanation}
In the following, we argue why the value of
$dp2[v_\alpha][v_\beta][v_f][v_g][s][t]$ is computed correctly by the sum above, if $s > q$ holds.

As the $q$ candidates of level $t$, whose voters approving them are all between $v_\alpha$ and $v_\beta$, have to be included in any committee counted, $s-q$ many candidates $c$ with $level[c] > t$ have to be included in every committee that is supposed to be counted by $dp2[v_\alpha][v_\beta][v_f][v_g][s][t]$.
For each candidate $c$ with $level[c] > t$ that is part of a committee $W$ that is supposed to be counted, no candidate $d \in C \setminus W$ with $level[d] \geq t$ may dominate $c$.
By the SDO property, exactly the committees for which the above holds are Pareto optimal with respect to $\bigcup_{i \geq t}D_i$, as required by the definition of $dp2$.

Each of these remaining committees thus $W'$ and the level $t$ candidates make up a unique size-$s$ committee $W = W' \cup \{c^t_1, \dots c^t_{x_t}\}$ that is supposed to be counted by the value of $dp2[v_\alpha][v_\beta][v_f][v_g][s][t]$.
We can calculate the number of such committees by the sum above.

For any committee supposed to be counted by the value $dp2[v_\alpha][v_\beta][v_f][v_g][s][t]$, no candidate may be dominated by a candidate $d \in \bigcup_{i \geq t}D_i$ that is not in the committee.
By the definition of $dp1$, all committees counted in the sum only contain candidates $c$ with $level[c] \geq t+1$ and are Pareto optimal with respect to $\bigcup_{i \geq t+1}D_i$.
Further, all candidates $c^t_1, \dots, c^t_{x_t}$ are in the committee supposed to be counted by corresponding value of $dp2$, by the definition of $dp2$.
Thus, it remains to show that no candidate in a committee counted by the value of $dp1[v_\alpha][r^{t+1}_i][\max(r^t_{left}, v_f)][\min(\ell^t_{right}, v_g)][s-q][t+1]$ for some candidate index $1 \leq i \leq x_{t+1}$ is dominated by any candidate $d \in D_t$ who is approved by a voter left of $v_\alpha$ or right of $v_\beta$.

We make sure this holds for all committees counted by choosing $\max(r^t_{left}, v_f)$ and $\min(\ell^t_{right}, v_g)$ as new limits $v_f$ and $v_g$ for $dp1$ in order to prevent including level $t+1$ candidates who are dominated by such level $t$ candidates, and show that vice versa no candidate violating these limits can be included in a committee supposed to be counted by the value of $dp2[v_\alpha][v_\beta][v_f][v_g][s][t]$.

\begin{lemma}
    Let $t \in \mathbb{N}$ and $D_t = \{c_1, \dots, c_x\}$ be the set of all candidates of level $t$ sorted from left to right by their leftmost voters.
    Further, let $r_i \in V$ be the rightmost voter approving candidate $c_i$ and $\ell_i \in V$ the leftmost one.
    Let $c \in C$ be some candidate such that $level[c] > t$ holds.
    Let $i \in \N$ be maximum such that $v \preceq r_i$ holds for all voters $v \in V(c)$, i.e., $c_i$ is the rightmost candidate of level $t$ for which no voter approving $c$ is right of all voters approving $c_i$.
    Then, candidate $c$ is dominated by some candidate $c_j$ with $j \leq i$.

    Analogously, let $i' \in \N$ be minimum such that $\ell_{i'} \preceq v$ holds for all voters $v \in V(c)$.
    Then, candidate $c$ is dominated by some candidate $c_j$ with $j \geq i'$.
    \label{lem:counting-voter-interval-limits}
\end{lemma}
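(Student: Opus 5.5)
We prove the first claim; the second follows by mirroring the order of voters. Throughout we use that, under Voter Interval, a candidate $d$ dominates $c$ iff $V(c) \subsetneq V(d)$. Since each approval set is an interval, this holds iff the leftmost voter of $d$ is weakly left of the leftmost voter of $c$ and the rightmost voter of $d$ is weakly right of the rightmost voter of $c$, with one of the two comparisons strict or the intervals differing. As noted after the definition of the levels, transitivity of domination gives a candidate $c_k \in D_t$ that dominates $c$, because $level[c] > t$. We also use that no two candidates of $D_t$ dominate each other. Together with the assumption that no two candidates have equal approval sets, this means that sorting $D_t$ by leftmost voter also sorts it strictly by rightmost voter: if $\ell_a \prec \ell_b$ and $r_b \preceq r_a$, then $c_a$ would dominate $c_b$. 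Hence $\ell_1 \prec \dots \prec \ell_x$ and $r_1 \prec \dots \prec r_x$.

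Let $r$ be the rightmost voter of $c$ and $\ell$ its leftmost voter. The dominating candidate $c_k$ satisfies $r \preceq r_k$. By the monotonicity of the right endpoints, the set of indices $j$ with $r \preceq r_j$ is a suffix $\{j_0, \dots, x\}$ of $\{1, \dots, x\}$. The index $i$ in the statement is defined as the maximum such index, so literally $i = x$ whenever the set is nonempty, and then $k \le x = i$ holds trivially.

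I suspect the intended meaning is ``minimum'' rather than ``maximum'': $c_i$ is the first level-$t$ candidate whose interval reaches past the right end of $c$. This is what the limits $r^t_{left}$ and $\ell^t_{right}$ in the computation of $dp2$ need. With this reading the claim is that some $c_j$ with $j \le i$, where $i = j_0$, dominates $c$. To prove it, take any dominator $c_k$; we have $k \ge i$, since $r \preceq r_k$. If $k = i$ we are done. Otherwise $k > i$, so $\ell_i \prec \ell_k \preceq \ell$. Together with $r \preceq r_i$, this gives $V(c) \subseteq V(c_i)$. Equality is impossible, because $c_k$ would then dominate $c_i$, contradicting that they lie in the same level. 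So $c_i$ dominates $c$. I would present both readings: the literal statement is trivial, and the minimal-index version needs the argument just given.

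The mirrored claim, with $i'$ minimum such that $\ell_{i'} \preceq \ell$, is again trivial as stated, since every dominator $c_k$ satisfies $\ell_k \preceq \ell$ and hence $k \ge i'$. Its nontrivial twin takes $i'$ maximal with $\ell_{i'} \preceq \ell$ and shows that $c_{i'}$ itself dominates $c$, by the symmetric argument.

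The main obstacle is the ordering fact $r_1 \prec \dots \prec r_x$, that is, showing that same-level candidates form a chain in both endpoints. It rests on the no-duplicate assumption and on the intra-level non-domination. I would prove it first as a separate claim.
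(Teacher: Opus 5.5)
Your proposal is correct and follows the paper's proof. Both use the same ingredients: a level-$t$ dominator exists by transitivity, same-level intervals are ordered identically by their left and right endpoints, and a dominator $c_g$ with $g>i$ forces $\ell_i \prec \ell_g \preceq \ell$, so $c_i$ contains $c$. The paper phrases this last step as a contradiction: if $c_i$ does not contain $c$, then $c_g$ would dominate $c_i$.

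You are also right that the literal ``maximum'' reading is vacuous. Note that the paper's argument never uses the maximality of $i$, so it in fact proves the claim for every $i$ with $r \preceq r_i$. That version is equivalent to your minimum-index version, and it is what the application to $r^t_{left}$ requires.
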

\begin{proof}
    Let $c \in C$ be some candidate such that $level[c] > t$ holds, and let $i \in \N$ be maximum such that $v \preceq r_i$ holds for all voters $v \in V(c)$.
    If for every voter $v \in V(c)$, $v \preceq \ell_i$ also holds, then, as the two candidates are in different levels, candidate $c_i$ dominates $c$ and therefore the statement follows.

    For the sake of contradiction, assume some voter $v' \in V(c)$ with $v' \prec \ell_i$ exists.
    By the definition of the levels and the transitivity of Pareto optimality, a candidate $c_h \in D_t$ who dominates candidate $c$ exists.

    Further, there is either a candidate $c_g \in D_t$ with $g > i$ that dominates candidate $c$, or for every candidate $c_j \in D_t$ dominating candidate $c$ it holds that $j \leq i$.
    In the second case, some candidate $c_j \in D_t$ with $j \leq i$ dominates candidate $c$.
    It remains to show that such a candidate exists in the first case as well.

    Let $c_g \in D_t$ with $g > i$ be a candidate dominating candidate $c$.
    Hence, for every voter $v \in V(c)$, by the Voter Interval property, $\ell_g \preceq v \preceq r_g$ holds.
    In particular, $\ell_g \preceq v'$ holds as well.

    As $i < g$ holds and the candidates $c_1, \dots, c_x$ are ordered from left to right by their leftmost voter, which is equivalent to ordering them by their rightmost voter because no two candidates dominate each other, $r_i \prec r_g$ holds.

    \noindent
    Thus, $\ell_g \preceq v' \prec \ell_i \preceq r_i \preceq r_g$ holds.
    By the Voter Interval property, we can conclude that every voter who approves candidate $c_i$ also approves $c_g$, and that candidate $c_g$ is approved by another voter as well.
    Hence, candidate $c_i$ is dominated by $c_g$ contradicting $level[c_i] = level[c_g] = t$ as candidates of the same level cannot dominate each other.

    Thus, a candidate $c_j$, with $j \leq i$, dominates candidate $c$.

    The second statement follows analogously.
\end{proof}

Applying \Cref{lem:counting-voter-interval-limits}, we can conclude that all candidates $c \in \bigcup_{i \geq t+1}D_i$, for which $\forall v \in V(c)\colon v \preceq r_{left}$ or $\forall v \in V(c) \colon \ell_{right} \preceq v$ holds, are dominated by a level $t$ candidate whose voters are not all between $v_\alpha$ and $v_\beta$ and therefore, they are correctly excluded from the remaining committees.
Any candidate $c \in \bigcup_{i \geq t+1}D_i$ approved by voters $v_1$ and $v_2$ which may be the same voter, with $r_{left} \prec v_1$ and $\ell_{right} \prec v_2$ also is not dominated by any level $t$ candidate $d \in D_t$ who is not only approved by voters between voter $v_\alpha$ and $v_\beta$ as candidate $c$ is approved by voters $v_1$ and $v_2$ and, by the choice of $r_{left}$ and $\ell_{right}$, candidate $d$ is not approved by voter $v_1$ nor $v_2$.

Further, no remaining committee is counted twice as we iterate over all level $t+1$ candidates as the rightmost candidate included in the remaining committee making the committees counted in each summand unique.

Hence, all committees ought to be counted by $dp2[v_\alpha][v_\beta][v_f][v_g][s][t]$ are counted exactly once.

\subsubsection{Order of Computation}

We start computing the values of both tables at level $m$ and continue in decreasing order.
For each level $t$, we first calculate the values of $dp2$.
The values of $dp2$ only rely on the values of $dp1$ of the previous level $t+1$.
The order of computation given a level $t$ does not matter when computing the values of $dp2$ and can be chosen arbitrarily.

We then compute the values of $dp1$ by iterating over the voters from left to right for $v_\alpha$ and given a voter $v_\alpha$ iterating over all voters right of voter $v_\alpha$ from left to right for $v_\beta$.
Given a pair of voters $v_\alpha, v_\beta$, we iterate over the size of the committees $s$ from $1$ to $k$.

Although this explanation does not fully prove the correctness of the algorithm, we expect the algorithm to be correct.
We have demonstrated how Pareto optimal committees can be constructed using the partitioning of all candidates into levels as defined above.
This can be used when trying to construct Pareto optimal committees satisfying further properties.

\section{Unrestricted Domain}
\label{prt:general-case}

\subsection{Committee Monotonicity}
\label{chp:general-monotonicity}

In this chapter, we discuss the additional challenges when studying committee monotonicity for Pareto optimal committees in the unrestricted domain.
In the restricted domains considered earlier, we showed that adding a non-dominated candidate to a Pareto optimal committee always results in another Pareto optimal committee, without the need to reconsider the original candidates in the committee.
However, in general, there are voting instances $(C, V, \mathcal{A}, k)$ for which a candidate $c \in C$ and a size-$k$ Pareto optimal committee $W$ exist such that candidate $c$ is not dominated by any candidate in $C \setminus W$, yet $W+c$ is not Pareto optimal .
For an example, see the approval profile given by \Cref{fig:general-monotonicity-sdo-counterexample} for a voting instance.

\begin{figure}[h]
\centering
\begin{tikzpicture}[yscale=0.5, xscale=0.65, voter/.style={anchor=south}]

    \foreach \i in {1,...,5}
        \node[voter] at (\i-0.5, -1.5) {$\i$};

    \drawcand{1}{1/2}
    \drawcand{2}{3/4}
    \drawcand[blue!20]{3}{1/1,3/3}
    \drawcand[green!20]{4}{2/2,4/5}
\end{tikzpicture}
\caption{For this depicted approval profile, the committee $W = \{c_1\}$ is Pareto optimal, candidate $c_2$ is not dominated, yet $W+c_2$ is not a Pareto optimal committee, as it is es-dominated by $\{c_3, c_4\}$.}
\label{fig:general-monotonicity-sdo-counterexample}
\end{figure}

We expect that committee monotonicity should still hold for Pareto optimal committees in the unrestricted domain.
If committee monotonicity does not hold, then this would imply that a Pareto optimal committee exists, such that, for each candidate not in the committee, a subset of the committee together with that candidate is es-dominated by a subset of the remaining candidates.

To demonstrate why we believe this is a contradiction, we construct a directed graph with the candidates as node and an edge from candidate $a$ to candidate $b$ if, candidate $a$ is in an es-dominating set if candidate $b$ is added to the committee.
If there is a sink in this graph, then this candidate can be added to the committee while preserving Pareto optimality.
Else, every path in this graph eventually enters a cycle.
This means that, for each candidate not in the committee, a subset of the remaining candidates $A$ exists, such that $A$ dominates a subset $B \subseteq W$ with $\lvert B\rvert = \lvert A \rvert - 1$ and the candidate, however for each candidate in $A$ another such subset exists only containing other remaining candidates.
This appears contradictory because, if a candidate $c$ is in a such a subset of the remaining candidates that dominates a subset of the committee and another candidate, there must be at least one voter $v$ approving this candidate and a candidate in the dominated subset.
Thus, a candidate in a subset of the remaining candidates that dominates a subset of the committee and candidate $c$ must be approved by voter $v$ and another voter $v'$ approving a candidate in the dominated subset.
Next, for this candidate, in a subset of remaining candidates dominating this candidate and a subset of the committee, must be a candidate approved by voter $v$ and a voter $v''$ approving a candidate in the dominated subset and a candidate approved by voter $v'$ and another voter approving a candidate in the dominated subset of the committee.
This way, the number of candidates each voter approves among the remaining candidates appears to have to be at least as high as the number of candidates approved in a subset of the committee, suggesting it was not Pareto optimal to begin with.

\subsubsection{Few relevant Candidates}

A central difficulty in verifying Pareto optimality is the quickly increasing number of subsets of a committee that have to be considered.
For a committee to be Pareto optimal, no subset may be es-dominated by a subset of the remaining candidates.
This requires checking, for each subset of the committee $A$ and subset of the remaining candidates $B$ of equal size, whether $B$ dominates $A$.
As there are $\binom{k}{j}$ size-$j$ subsets of a size-$k$ committee and $\binom{m-k}{j}$ size-$j$ subsets of the remaining candidates, checking whether a subset is es-dominated is not feasible to do for every subset, if both the committee size $k$ and number of remaining candidates $m-k$ are large.
The SDO property alleviated this issue, as by the characterization of Pareto optimality if SDO holds only single candidates, i.e., size one candidate subsets, have to be checked for whether they are dominated.
We tried to generalize this by first only considering voting instances for which we only have to check for subsets of at most size two have whether they dominate each other, i.e., voting instances $(C,V,\mathcal A, k)$, for which either $k = 1$ or $m-k \leq 3$ holds.

First, we show that for size-$1$ Pareto optimal committees, committee monotonicity holds.

\begin{proposition}
    Let $(C,V,\mathcal{A}, 1)$ be a voting instance and $W = \{c\} \subseteq C$ be a Pareto optimal committee.

    There exists a candidate $c' \in C \setminus W$ such that $W+c'$ is a size-$2$ Pareto optimal committee.
\end{proposition}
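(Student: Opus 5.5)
The plan is to choose $c'$ carefully and then rule out every possible size-$2$ set $D$ that could es-dominate $W+c'=\{c,c'\}$. Since $\{c\}$ is Pareto optimal with respect to size $1$, no candidate $d \in C\setminus\{c\}$ dominates $c$. A size-$2$ set $D \neq \{c,c'\}$ has one of three forms, and I would handle each in turn.

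\textbf{Case 1: $D=\{c',d\}$ with $d\notin\{c,c'\}$.} Then $D$ dominates $\{c,c'\}$ exactly when $d$ dominates $c$. This is impossible by the Pareto optimality of $\{c\}$.

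\textbf{Case 2: $D=\{c,d\}$ with $d\notin\{c,c'\}$.} Then $D$ dominates $\{c,c'\}$ exactly when $d$ dominates $c'$. This is excluded as soon as $c'$ is not dominated by any candidate of $C\setminus\{c\}$. That holds, for instance, if $c'$ is chosen lexicographically maximal with respect to some quantity that strictly increases under domination, as in the proof of \Cref{lem:restricted-monotonicity}.

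\textbf{Case 3: $D=\{d_1,d_2\}$ disjoint from $\{c,c'\}$.} This is the genuine difficulty, and \Cref{fig:general-monotonicity-sdo-counterexample} shows that choosing $c'$ merely non-dominated is not enough. I would choose $c'\in C\setminus\{c\}$ to maximise the marginal coverage $\lvert V(c')\setminus V(c)\rvert$, breaking ties by maximum approval score $\lvert V(c')\rvert$. This choice is not dominated by any $d\neq c$, since domination would weakly increase both quantities and strictly increase one. Suppose, for contradiction, that $\{d_1,d_2\}$ dominates $\{c,c'\}$. I would collect the following facts.
\begin{enumerate*}[label=(\roman*)]
\item Every voter in $V(c)\cap V(c')$ approves both $d_1$ and $d_2$.
\item Every voter in $V(c)\cup V(c')$ approves $d_1$ or $d_2$.
\item Because neither $d_1$ nor $d_2$ dominates $c$, there is a voter $u\in V(c)\setminus V(d_1)$, which must lie in $V(d_2)$, and symmetrically a voter $w\in V(c)\setminus V(d_2)$, which lies in $V(d_1)$.
\end{enumerate*}
The goal is then to show that one of $d_1,d_2$, say $d_1$, satisfies $\lvert V(d_1)\setminus V(c)\rvert\ge\lvert V(c')\setminus V(c)\rvert$, with the tie-break also favouring $d_1$ when equality holds. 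This would contradict the choice of $c'$.

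The route I would try first is to compare the multiset counts voter by voter outside $V(c)$. Define $x_v = [v\in V(d_1)] + [v\in V(d_2)]$. Domination gives $x_v \ge [v\in V(c')]$ for $v\notin V(c)$, and $x_v \ge 1 + [v\in V(c')]$ for $v\in V(c)$. I would then use the strict inequality $\lvert V(d_1)\rvert+\lvert V(d_2)\rvert>\lvert V(c)\rvert+\lvert V(c')\rvert$, together with the witnesses $u,w$, to push enough of the $d_i$'s approvals outside $V(c)$.

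I expect this balancing step to be the main obstacle. The averaging argument only gives the bound $\lvert V(d_1)\setminus V(c)\rvert+\lvert V(d_2)\setminus V(c)\rvert\ge\lvert V(c')\setminus V(c)\rvert$, which is a factor of two too weak. To close the gap I would need to exploit the voters of $V(c)$ that are forced to approve both $d_i$. If the marginal-coverage choice fails, the fallback is an extremal argument over all pairs $\{c,x\}$. Pick $c'$ so that $\{c,c'\}$ is not dominated by any other pair containing $c$, which is possible because domination is a strict partial order and there are finitely many such pairs. Then show that if a disjoint pair $\{d_1,d_2\}$ dominated $\{c,c'\}$, one of $\{c,d_1\}$ or $\{c,d_2\}$ would also dominate $\{c,c'\}$, or else $d_1$ or $d_2$ would dominate $c$. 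Either outcome is a contradiction.
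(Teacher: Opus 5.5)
\textbf{There is a genuine gap in Case 3.} Neither of your two selection rules for $c'$ works, so the argument cannot be completed along either route.

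\emph{The primary rule.} The inequality you hope to prove is false, not merely hard. Take five voters and candidates with $V(c)=\{1,2,3\}$, $V(c')=\{4,5\}$, $V(d_1)=\{1,3,4\}$ and $V(d_2)=\{2,3,5\}$.
\begin{enumerate*}[label=(\alph*)]
\item No approval set contains $\{1,2,3\}$, so $\{c\}$ is Pareto optimal.
\item $c'$ is the unique maximiser of $\lvert V(\cdot)\setminus V(c)\rvert$, with value $2$ against $1$ and $1$.
\item Yet $\{d_1,d_2\}$ gives every voter at least as many approved members as $\{c,c'\}$, and gives voter $3$ one more.
\end{enumerate*}
Outside $V(c)$ the two $d_i$ only need to cover $V(c')\setminus V(c)$ jointly, and the strict improvement can come from inside $V(c)$.

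\emph{The fallback.} Since $\{c,d\}$ dominates $\{c,c'\}$ exactly when $d$ dominates $c'$, requiring that no other pair containing $c$ dominates $\{c,c'\}$ is the same as requiring $c'$ to be non-dominated in $C\setminus\{c\}$. You already observed that this is not enough. Concretely, in \Cref{fig:general-monotonicity-sdo-counterexample} take $c=c_1$, $c'=c_2$ and $\{d_1,d_2\}=\{c_3,c_4\}$. Neither $c_3$ nor $c_4$ dominates $c_1$ or $c_2$, so the dichotomy you want to derive does not hold.

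\textbf{The missing idea} is to measure $c'$ \emph{inside} $V(c)$. Choose $c'$ maximising $\lvert V(c')\cap V(c)\rvert$, breaking ties by $\lvert V(c')\rvert$; this is the paper's choice. This key still increases strictly (lexicographically) under domination, so your Case 2 goes through unchanged.

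In Case 3, your fact (i) gives $V(c)\cap V(c')\subseteq V(d_1)\cap V(d_2)$. By (ii), any voter $v^*\in V(c)\setminus V(c')$ approves some $d_i$. Hence $V(c)\cap V(d_i)\supsetneq V(c)\cap V(c')$, contradicting maximality.

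If no such $v^*$ exists, then $V(c)\subseteq V(c')$. Since $c'$ does not dominate $c$, this gives $V(c')=V(c)$. Then (i) gives $V(c)\subseteq V(d_1)\cap V(d_2)$, and non-domination of $c$ forces $V(d_1)=V(d_2)=V(c)$. So $\{d_1,d_2\}$ cannot strictly improve any voter.

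Your facts (i) and (ii) were the right ingredients. On $V(c)\cap V(c')$ each $d_i$ covers every voter \emph{individually}, whereas outside $V(c)$ they only cover jointly. That is why your averaging outside $V(c)$ lost a factor of two.
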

\begin{proof}
    Let $(C,V,\mathcal{A}, 1)$ be a voting instance and $W = \{c\} \subseteq C$ be a Pareto optimal committee.

    Let $c' \in C \setminus W$ be a candidate, such that no other candidate $d \in C \setminus W$ is approved by more voters approving candidate $c$, i.e., $\forall d \in C\setminus W \colon \lvert V(d) \cap V(c) \rvert \leq \lvert V(c') \cap V(c) \rvert$ holds, such that $c'$ is not dominated by any other candidate.

    For the sake of contradiction, assume that the committee $W+c'$ is not Pareto optimal.
    As the committee $W = \{c\}$ is Pareto optimal candidate $c$ is not Pareto dominated by any other candidate.
    Thus, a voter $v^* \in V(c)$ exists who does not approve candidate $c'$.
    Since candidate $c'$ is not dominated by any candidate in $C \setminus W$ there is a set of candidates $\{d_1, d_2\} \subseteq C \setminus (W+c')$ es-dominating $\{c,c'\}$.
    Every voter $v \in V(c) \cap V(c')$ approving both candidate $c$ and $c'$ hence approves candidates $d$ and $d'$ as well.
    Further, voter $v^*$ approves candidate $d$ or $d'$ as $v^*$ approves candidate $c$.
    W.l.o.g. assume $v^*$ approves candidate $d$.
    Then $V(c) \cap V(c') \subset V(c) \cap V(d)$ holds and thus $\lvert V(c) \cap V(d) \rvert > \lvert V(c) \cap V(c') \rvert$ contradicting our choice of candidate $c'$.

    We can conclude that the committee $W+c$ is a size-$2$ Pareto optimal committee.
\end{proof}

In this proof, we explicitly showed that a candidate, for which no other candidate not in the committee is approved by more voters also approving the candidate in the committee, can be added to the committee, resulting in a bigger Pareto optimal committee.
This argument does not extend to larger committee sizes as demonstrated in the counter example in \Cref{fig:general-monotonicity-induction-overlap-counterexample}.

\begin{figure}[h]
    \centering
    \begin{tikzpicture}[yscale=0.5, xscale=0.65, voter/.style={anchor=south}]

    \foreach \i in {1,...,9}
        \node[voter] at (\i-0.5, -1.5) {$\i$};

    \drawcand{1}{1/2}
    \drawcand{2}{3/7}
    \drawcand[violet!30]{3}{1/1,5/6,8/9}
    \drawcand[blue!20]{4}{2/4}
    \drawcand[green!20]{5}{3/6,8/8}

    \end{tikzpicture}
    \caption{For this approval profile $W = \{c_1, c_2\}$ is a Pareto optimal committee where no candidate not in $W$ is approved by more voters approving candidates in the committee than candidate $c_5$. However $\{c_1, c_2, c_5\}$ is not a Pareto optimal size-$3$ committee because the subset $\{c_1, c_5\}$ is es-dominated by $\{c_3, c_4\}$.}
    \label{fig:general-monotonicity-induction-overlap-counterexample}
\end{figure}

Further, we analyzed committee monotonicity in voting instances for which only three or less candidates are not included in the committee as another way of reducing the number of subsets to consider when checking for Pareto optimality.

\begin{proposition}
    Let $(C, V, \mathcal{A}, k)$ be a voting instance with $1 \leq \lvert C \rvert - k \leq 3$ and $W \subseteq C$ be a size-$k$ Pareto optimal committee.

    There exists a candidate $c \in C \setminus W$ such that $W+c$ is Pareto optimal.
\end{proposition}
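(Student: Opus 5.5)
Write $R = C \setminus W$ and $r = \lvert R \rvert \in \{1,2,3\}$. For $x \in R$, the complement of $W+x$ is $R-x$, which has size $r-1 \le 2$. The first step is to reduce what has to be checked. Suppose $W+x$ is not Pareto optimal. Then, by the subset characterization from \Cref{chp:sdo}, some $U \subseteq W+x$ is es-dominated by a disjoint set $S \subseteq R-x$ with $\lvert S \rvert = \lvert U \rvert$. If $x \notin U$, then $U \subseteq W$ and $S \subseteq R$, so $W$ itself would be es-dominated, a contradiction. Hence $x \in U$ and $\lvert U \rvert \le r-1$. So $x$ is a bad choice only if either (i) $x$ is dominated by a single $y \in R-x$, or (ii) $r = 3$ and $\{x, w\}$ is dominated by $R-x$ for some $w \in W$.

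I write dominations pointwise in the indicator vectors $\mathbf 1_c(v) = [v \in V(c)]$. Case (i) reads $\mathbf 1_x \le \mathbf 1_y$, and case (ii) reads $\mathbf 1_x + \mathbf 1_w \le \mathbf 1_y + \mathbf 1_z$ for $R = \{x,y,z\}$. Both inequalities are strict for some voter. The small cases are immediate:
\begin{itemize}
\item If $r = 1$, then $R-x = \emptyset$ and $W+x = C$ is trivially Pareto optimal.
\item If $r = 2$, only (i) can occur. Single domination is a strict partial order, so the two candidates cannot dominate each other, and one of them is not dominated; adding that one works.
\end{itemize}

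The main case is $r = 3$, $R = \{a,b,e\}$. Assume for contradiction that every $x \in R$ is bad, and combine the witnessing inequalities to produce a domination of a subset of $W$ by a subset of $R$ of the same size. The key combinations are as follows.
\begin{itemize}
\item \emph{Two pair failures sharing a $W$-candidate.} If $a$ and $b$ fail via (ii) with the same $w$, adding $\mathbf 1_a+\mathbf 1_w \le \mathbf 1_b+\mathbf 1_e$ and $\mathbf 1_b+\mathbf 1_w \le \mathbf 1_a+\mathbf 1_e$ gives $2\mathbf 1_w \le 2\mathbf 1_e$, strict somewhere. So $e$ dominates $w$, contradicting Pareto optimality of $W$.
\item \emph{Three pair failures with distinct $w_a, w_b, w_e$.} Summing the three inequalities gives $\mathbf 1_{w_a}+\mathbf 1_{w_b}+\mathbf 1_{w_e} \le \mathbf 1_a+\mathbf 1_b+\mathbf 1_e$, strict somewhere. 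This is an es-domination of a size-$3$ subset of $W$ by $R$, again a contradiction.
\item \emph{Mixed failures.} Typically $a$ is dominated by $b$ while $b$ fails via (ii) with $\{b,w\}$ dominated by $\{a,e\}$. Chaining gives $\mathbf 1_b+\mathbf 1_w \le \mathbf 1_a+\mathbf 1_e \le \mathbf 1_b+\mathbf 1_e$, so $e$ dominates $w$.
\end{itemize}

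To organize the mixed cases, I will first take $x$ maximal in the single-domination order on $R$, for instance with maximum approval score, so that (i) fails for $x$. The argument then splits on how many candidates of $R$ are singly dominated. If two of them are dominated, say by $x$, then $x$ can fail only via (ii), and chaining with $\mathbf 1_y, \mathbf 1_z \le \mathbf 1_x$ kills it. If one is dominated, or none, the combination tricks above apply.

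The main obstacle I expect is making this case analysis exhaustive. In particular, when a chain of inequalities becomes non-strict, I must track strictness carefully and use that es-domination requires a strict voter. That strict voter is propagated through the sums and chains.
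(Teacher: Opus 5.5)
Your argument is correct, and the case split you were worried about is exhaustive. If no candidate of $R$ is dominated within $R$, all three fail via (ii), and your two summation bullets cover shared versus distinct witnesses. If exactly one candidate $y$ is dominated, its dominator is maximal, so it fails via (ii) and your chaining applies. If two are dominated, the unique maximal $x$ dominates both by transitivity; one of them already suffices for the chain. Every sum or chain contains an inequality that is strict for some voter, so strictness is never lost.

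This is a different route from the paper's. The paper also starts from some $c_1\in R$ not dominated within $R$. It then obtains $d\in W$ with $\{c_1,d\}$ dominated by $\{c_2,c_3\}$, tacitly using the reduction to failure modes (i)/(ii) that you make explicit. From there it argues constructively. It labels the candidates so that a voter $v^*\in V(c_1)\setminus V(d)$ approves $c_3$, and shows directly that $W+c_2$ is Pareto optimal. From a hypothetical failure of $W+c_2$ it extracts only the consequence $\mathbf 1_{c_2}\le \mathbf 1_{c_1}+\mathbf 1_{c_3}$, which holds whichever failure mode occurs. A partition of $V(d)$ then yields $V(d)\subseteq V(c_3)$, with $v^*$ giving strictness. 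In your notation, adding this to $\mathbf 1_{c_1}+\mathbf 1_d\le\mathbf 1_{c_2}+\mathbf 1_{c_3}$ gives $\mathbf 1_d\le 2\,\mathbf 1_{c_3}$, hence $\mathbf 1_d\le\mathbf 1_{c_3}$ because these are $0/1$ vectors.

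What each approach buys:
\begin{itemize}
\item The paper uses only two failure hypotheses, needs no case analysis on the domination order within $R$, and names the candidate to add.
\item Your proof uses all three failures. When no candidate of $R$ is dominated, it does not say which candidate works.
\item In exchange, your version is symmetric and purely linear: it uses sums and chains of domination inequalities and never the $0/1$ step.
\item Your three-pair summation is a clean instance of the ``cycle of failures'' intuition the paper sketches for general committee monotonicity. It is phrased in exactly the vector representation the paper says would be needed to go beyond $\lvert C\rvert - k\le 3$.
\end{itemize}
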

\begin{proof}
    Let $(C, V, \mathcal{A}, k)$ be a voting instance with $1 \leq  \lvert C \rvert - k \leq 3$ and $W \subseteq C$ be a size-$k$ Pareto optimal committee.

    We first make a case distinction by the number of candidates not contained in the committee.

    \paragraph{Case 1}
    If only one candidate $c$ is not in the committee $W$, then $W+c$ is a Pareto optimal size-$(k+1)$ committee as it is the only size-$(k+1)$ committee.

    \paragraph{Case 2}
    Assume exactly two candidates $c_1, c_2 \in C$ are not in the committee $W$.
    Either candidate $c_1$ dominates candidate $c_2$, candidate $c_2$ dominates $c_1$ or neither candidate dominates the other.
    Further, neither candidate dominates any candidate in the committee $W$ as the committee is Pareto optimal.

    We distinguish three further cases.

    First, assume that neither candidate $c_1$ nor $c_2$ dominates the other candidate.
    The committee $W+c_1$ is a Pareto optimal size-$(k+1)$ committee as no candidate in $W+c_1$ is dominated by candidate $c_2$, i.e., no subset of the committee $W+c$ is dominated by a subset of the remaining candidates as those are only $\{c_2\}$ and the empty set.
    Thus, the committee $W$ can be extended by another candidate to form a new size-$(k+1)$ Pareto optimal committee.

    Assume that candidate $c_1$ dominates $c_2$.
    Thus, candidate $c_2$ does not dominate candidate $c_1$ and as the committee $W$ is Pareto optimal no candidate in $W$.
    By the same argument as above, $W+c_1$ therefore is a size-$(k+1)$ Pareto optimal committee.

    The remaining case of candidate $c_2$ dominating candidate $c_1$ follows analogously to the previous case.

    Hence, if exactly two candidates are not included in the committee $W$, the committee can be extended to a size-$(k+1)$ Pareto optimal committee.

    \paragraph{Case 3}
    Lastly, assume exactly three candidates $c_1, c_2, c_3 \in C$ are not in the committee $W$.

    W.l.o.g. assume that candidate $c_1$ is not dominated by candidates $c_2$ or $c_3$.
    Further, assume $W+c_1$ is not a Pareto optimal committee.
    Otherwise, $W+c_1$ is a size-$(k+1)$ Pareto optimal committee as sought.
    We can conclude that a candidate $d \in W$ exists such that $\{c_1, d\}$ is dominated by candidates $\{c_2, c_3\}$.

    Let $B,C,D \subseteq V(d)$ be disjoint subsets of the voters such that all voters in $B$ approve candidates $c_1$ and $d$, all voters in $D$ approve candidate $d$ but neither candidate $c_1$ nor $c_2$ and $C$ be the remaining voters approving candidate $d$, as illustrated in \Cref{fig:general-monotonicity-induction-smallrest-works}.
    As the set of candidates $\{c_1,d\}$ is dominated by $\{c_2, c_3\}$, every voter approving candidate $d$ also approves candidate $c_2$ or candidate $c_3$.
    Further, every voter $v \in B$ approves both candidate $c_1$ and candidate $d$ by definition and therefore also has to approve candidates $c_2$ and $c_3$ for $\{c_1, d\}$ to be dominated by $\{c_2, c_3\}$.
    Thus, the voters $v \in C$ approve candidates $d$ and $c_2$ and the voters $v \in D$ approve candidates $d$ and $c_3$.
    As candidate $c_1$ is not dominated by $c_2$ nor $c_3$ and every candidate $v \in B$ approves candidates $c_1, c_2$ and $c_3$ there is a voter $v^* \notin B$ approving candidate $c_1$.
    Voter $v^*$ does therefore not approve candidate $d$ but approves either candidate $c_2$ or $c_3$.
    W.l.o.g. assume that candidate $c_3$ is approved by voter $v^*$.

    We aim to show that $W+c_2$ is a Pareto optimal committee.
    For the sake of contradiction, assume that it is not.
    Thus, every voter $v \in V(c_2)$ approving candidate $c_2$ also approves candidate $c_1$ or $c_3$.
    In particular, every voter $v \in C$ approves candidate $c_3$ as voter $v$ does not approve candidate $c_1$ by the definition of $C$ but must approve candidate $c_1$ or $c_3$.
    As we defined the subsets of voters $B, C, D$ as a partition of all voters approving candidate $d$, $B \cup C \cup D = V(d)$ holds.
    Since every voter $v \in B$ and every voter $v' \in D$ approves candidate $c_3$, we can conclude, that every voter $v \in V(d)$ also approves candidate $c_3$ and hence dominates candidate $d$ contradicting the Pareto optimality of the committee $W$.
    For a visualization of this contradiction see \Cref{fig:general-monotonicity-induction-smallrest-contradiction}.

    Thus, the committee $W+c_2$ is a size-$(k+1)$ Pareto optimal committee.

    \begin{figure}[h]
        \begin{minipage}[c]{0.4\linewidth}
        \centering
            \begin{tikzpicture}[yscale=0.5, xscale=0.65, voter/.style={anchor=south}]

                \foreach \i / \c in {1/A,2/B,3/C,4/D,5/E}
                \node[voter] at (\i-0.5, -1.5) {$\c$};

                \drawcand{1}{1/2,5/5}
                \drawcand{2}{1/3}
                \drawcand[blue!20]{3}{2/2,4/5}

                \draw[fill=green!20] (1,3) rectangle (4,4);
                \node at (1.5,3.5) {$d$};
            \end{tikzpicture}
            \caption{In this figure, $A, \dots, E$ represent sets of voters, remaining candidates of the committee $W$ are omitted. The set $\{c_1, d\}$ is dominated by $\{c_2, c_3\}$ and the voters in $C$ only approve candidate $c_2$ among $\{c_1, c_2, c_3\}$.}
            \label{fig:general-monotonicity-induction-smallrest-works}
        \end{minipage}
        \hfill
        \begin{minipage}[c]{0.4\linewidth}
        \centering
            \begin{tikzpicture}[yscale=0.5, xscale=0.65, voter/.style={anchor=south}]

                \foreach \i / \c in {1/A,2/B,3/C,4/D,5/E}
                \node[voter] at (\i-0.5, -1.5) {$\c$};

                \drawcand{1}{1/2,5/5}
                \drawcand{2}{1/3}
                \drawcand[blue!20]{3}{2/5}

                \draw[fill=green!20] (1,3) rectangle (4,4);
                \node at (1.5,3.5) {$d$};
            \end{tikzpicture}
            \caption{In this figure, $A, \dots, E$ represent sets of voters, remaining candidates of the committee $W$ are omitted. The set $\{c_1, d\}$ is dominated by $\{c_2, c_3\}$ and candidate $c_3$ dominates $d$ contradicting the Pareto optimality of $W$.}
            \label{fig:general-monotonicity-induction-smallrest-contradiction}
        \end{minipage}
    \end{figure}

    In every case, the committee $W$ can be extended to a size-$(k+1)$ Pareto optimal committee.
    Hence, our claim holds.
\end{proof}

Moving on to larger numbers of either candidates in the committee or candidates not in the committee becomes increasingly more challenging.
In the proof above, for case three, we partitioned voters into voters approving only one candidate of the committee and voters approving two candidates of the committee in the last case.
We made use of the fact that approving two candidates means approving everyone in the committee.
When dealing with larger subsets of the committee and checking whether they are es-dominated by a subset of the remaining candidates, the number of ways a voter can approve e.g. two candidates in such a subset increases and thus less can be inferred about the candidates in that subset.

Further, it becomes impractical to use sets to represent voters approving candidates, as it is not only relevant whether a voter approves a candidate in a set of candidates but also how many for Pareto optimality.
Instead multisets or other ways of representing which voter approves how many candidates in a set of candidates have to be used.
E.g., which voter approves a candidate can be represented by a binary value vector with one entry per voter.

The question whether committee monotonicity holds for Pareto optimal committees in general thus remains open.

\subsection{Reconfiguration Graph}
\label{chp:general-reconfiguration}

Reconfiguring Pareto optimal committees also becomes substantially more complex in the unrestricted domain.
As explained in \Cref{chp:general-monotonicity}, verifying Pareto optimality of committees in general requires considering all subsets of the committee, not only one newly added candidate which our proofs in \Cref{subsec:sdo-impl-reconfig-graph} relied on.
To reconfigure a Pareto optimal size-$k$ committee $W_1$ into another one $W_2$, for voting instances satisfying SDO, we first removed a candidate from $W_1$ who is not in $W_2$ and has a minimum approval score.
This size-$(k-1)$ intermediate committee is Pareto optimal as the candidate removed does not dominate any candidate in $W_1$ and the SDO property holds.
By then adding a candidate from $W_2$ who is not in $W_1$ and has a maximum approval score, and, hence, is not Pareto dominated by any candidate not in $W_1$, we merge two Pareto optimal committees.
This resulting committee is Pareto optimal as well because the SDO property holds.
In general, however, the union of two Pareto optimal committees is not necessarily Pareto optimal, as explained in \Cref{chp:general-monotonicity}.

Thus, our proof cannot be extended to the unrestricted domain.

\subsubsection{Proving Connectedness}

An important open question we were not able to answer is whether the Pareto optimality reconfiguration graph is connected in the natural domain as well.
In reconfiguration problems, a common technique of proving connectedness is determining a \emph{canonical configuration} and proving that every object, in this case Pareto optimal committees, can be reconfigured into the canonical configuration \cite{introtoreconf}.
We propose to use this method to show the connectedness of the Pareto optimality reconfiguration graph $\Gamma_{Po}(\mathcal{A}, k)$ by choosing any size-$k$ committee of maximum approval score as the canonical configuration.
As the committee has a maximum approval score among all size-$k$ committees it must be Pareto optimal and thus in $\Gamma_{Po}(\mathcal{A}, k)$.
In \Cref{lem:unrestricted-reconf-dist-maxapprovalscore}, we prove that any size-$k$ committee of maximum approval score can be reconfigured into any other size-$k$ committee of maximum approval score in at most $k$ steps.
Thus, the canonical configuration chosen is connected to every Pareto optimal of maximum approval score, by which we can conclude that all Pareto optimal committees of maximum approval score are connected.
To show the connectedness of the entire reconfiguration graph, it remains to be shown that any size-$k$ Pareto optimal committee can be transformed into any size-$k$ committee of maximum approval score, which we expect to be true.
Showing this would yield that the Pareto optimality reconfiguration graph is connected for every voting instance, as every Pareto optimal committee could thus be reconfigured into the canonical configuration.

\begin{lemma}
    Let $W_1, W_2 \subseteq C$ be two size-$k$ committees of maximum approval score for a voting instance $(C, V, \mathcal{A})$.
    The distance of $W_1$ and $W_2$ in the reconfiguration graph $\Gamma_{Po}(\mathcal{A}, k)$ is at most $k$.
    \label{lem:unrestricted-reconf-dist-maxapprovalscore}
\end{lemma}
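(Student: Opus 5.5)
The plan is to reconfigure $W_1$ into $W_2$ by swapping candidates one at a time, keeping every intermediate committee at maximum approval score. Each intermediate committee is then automatically Pareto optimal, so it is a node of $\Gamma_{Po}(\mathcal{A},k)$. The path will have length $k-\lvert W_1\cap W_2\rvert\le k$.

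First, I will record why a maximum approval score committee is Pareto optimal. For any committee $W$, its approval score equals $\sum_{i\in V}\lvert A_i\cap W\rvert$. Suppose a size-$k$ committee $D$ es-dominated $W$. Then every summand for $D$ would be at least as large as for $W$, and one would be strictly larger. So $D$ would have strictly larger approval score, which is impossible if $W$ has maximum score.

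Second, I will characterize maximum score committees: a size-$k$ committee $W$ has maximum approval score iff $\lvert V(a)\rvert\ge\lvert V(b)\rvert$ for all $a\in W$ and $b\in C\setminus W$. If some $a\in W$ and $b\notin W$ had $\lvert V(b)\rvert>\lvert V(a)\rvert$, then $W-a+b$ would have larger score. The converse is the standard exchange argument, and only the forward direction is actually needed below.

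Third, I apply this to $W_1$ and $W_2$. Take any $c\in W_2\setminus W_1$ and $d\in W_1\setminus W_2$. Since $d\in W_1$ and $c\notin W_1$, we get $\lvert V(d)\rvert\ge\lvert V(c)\rvert$. Since $c\in W_2$ and $d\notin W_2$, we get $\lvert V(c)\rvert\ge\lvert V(d)\rvert$. Hence all candidates in the symmetric difference have the same approval score. Consequently $W_1-d+c$ has the same approval score as $W_1$, so it is again a maximum score committee, hence Pareto optimal by the first step. It is adjacent to $W_1$ and shares one more candidate with $W_2$.

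Finally, I iterate by induction on $k-\lvert W_1\cap W_2\rvert$, exactly as in \Cref{lem:restricted-reconfiguration}. The induction hypothesis applies because the new committee is again of maximum score. This yields a path of length $k-\lvert W_1\cap W_2\rvert\le k$.

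I expect no real obstacle. The only point needing care is the observation that every swapped pair has equal approval score. That observation follows directly from the exchange inequality applied in both directions, and it is what makes each intermediate committee keep maximum score.
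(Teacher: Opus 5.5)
Your proposal is correct and follows essentially the same route as the paper: applying the exchange argument to both $W_1$ and $W_2$ shows that all candidates in the symmetric difference have equal approval score, so each swap preserves maximum approval score (hence Pareto optimality), giving a path of length $\lvert W_1\setminus W_2\rvert \le k$. You additionally spell out why maximum-score committees are Pareto optimal and make the induction explicit, whereas the paper only asserts these points in passing.
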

\begin{proof}
    Let $(C, V, \mathcal{A})$ be a voting instance and $W_1, W_2 \subseteq C$ two size-$k$ committees.

    Let $c_1,\in W_1 \setminus W_2$ and $c_2 \in W_2 \setminus W_1$ be two candidates not contained in both committees.
    Both candidates have the same approval score.
    For the sake of contradiction and w.l.o.g. assume that candidate $c_1$ has a higher approval score than $c_2$.
    Hence, the committee $W_2 - c_2 + c_1$ has a higher approval score than the committee $W_2$ contradicting the choice of $W_2$.
    Since both candidates have the same approval score and $W_1$ is a committee of maximum approval score, the committee $W_1 - c_1 + c_2$ is also of maximum approval score and thus Pareto optimal.
    By repeatedly replacing each candidate in $W_1$ but not in $W_2$ by a candidate in $W_2$ but not in $W_1$, the committee $W_1$ therefore remains Pareto optimal in all reconfiguration steps, $\lvert W_1 \setminus W_2 \rvert$ many of which thus have to be carried out.
    Hence, the distance of the committees $W_1$ and $W_2$ in the Pareto optimality reconfiguration graph is at most $k$.
\end{proof}

As for reconfiguring a non-maximum approval score committee $W_1$ into a maximum approval score one $W_2$, we expect it to be possible to replace a candidate of $W_1$ by a candidate of $W_2$ with a higher approval score.

\begin{conjecture}
    Let $(C, V, \mathcal{A}, k)$ be a voting instance and $W \subseteq C$ a Pareto optimal committee such that another committee $W' \subseteq C$ with a higher approval score exists.

    There exist candidates $c \in W$ and $d \in C \setminus W$ with $\lvert V(c)\rvert < \lvert V(d) \rvert$, such that $W - c + d$ is a Pareto optimal committee.
\end{conjecture}

While it is fairly easy to prove this for a committee $W$ in which only one candidate $c \in W$ exists such that a candidate $d \in C \setminus W$ with $\lvert V(c) \rvert < \lvert V(d) \rvert$ exists, we were not able to show this to be true for all committees.
An intuitive approach would be using the candidate with the highest approval score not in the committee as a replacement for a candidate with a lower approval score in the committee.
However, this does not work as demonstrated in the following.

\begin{figure}[h]
\centering
    \begin{tikzpicture}[yscale=0.4,xscale=0.65, voter/.style={anchor=south}]

        \foreach \i in {1,...,18}
    		\node[voter] at (\i-0.5, -1.5) {$\i$};

    \draw[fill=orange!20] (0,0) rectangle (4,1);
    \node at (0.5,0.5) {$c_1$};

    \draw[fill=red!40] (4,1) rectangle (8,2);
    \node at (4.5,1.5) {$c_2$};

    \draw[fill=magenta!20] (8,2) rectangle (16,3);
    \node at (8.5,2.5) {$c_3$};

    \draw[fill=violet!30] (0,3) rectangle (2,4);
    \draw[fill=violet!30] (8,3) rectangle (12,4);
    \node at (0.5,3.5) {$c_4$};
    \node at (8.5,3.5) {$c_4$};

    \draw[fill=blue!20] (2,4) rectangle (4,5);
    \draw[fill=blue!20] (12,4) rectangle (17,5);
    \node at (2.5,4.5) {$c_5$};
    \node at (12.5,4.5) {$c_5$};

    \draw[fill=green!20] (4,5) rectangle (6,6);
    \draw[fill=green!20] (8,5) rectangle (12,6);
    \node at (4.5,5.5) {$c_6$};
    \node at (8.5,5.5) {$c_6$};

    \draw[fill=yellow!30] (6,6) rectangle (8,7);
    \draw[fill=yellow!30] (12,6) rectangle (18,7);
    \node at (6.5,6.5) {$c_7$};
    \node at (12.5,6.5) {$c_7$};

    \end{tikzpicture}
    \caption{The committee $W = \{c_1,c_2\}$ is Pareto optimal in this voting instance. Candidate $c_3$ is the candidate with the highest approval score not in the committee $W$. However, both $\{c_1, c_3\}$ and $\{c_2, c_3\}$ are not Pareto optimal. Thus, replacing a candidate of the committee $W$ by a candidate with maximum approval score is not possible while preserving Pareto optimality.}
    \label{fig:general-reconf-maxscore-counterexample}
\end{figure}

Consider the voting instance depicted in \Cref{fig:general-reconf-maxscore-counterexample} and the committee $W = \{c_1,c_2\}$.
As no other pair of candidates dominates $W$, the committee $W$ is Pareto optimal.
Further, candidate $c_3$ is the candidate with the highest approval score and not in the committee $W$.
Hence, we would want to replace candidates $c_1$ or $c_2$ in $W$ by candidate $c_3$.
However, the committee $\{c_1, c_3\}$ is dominated by the committee $\{c_4, c_5\}$ and $\{c_2, c_3\}$ is dominated by $\{c_6, c_7\}$.
Thus, candidate $c_3$ cannot be used as a replacement for a candidate of $W$ when trying to maintain Pareto optimality.

We want to give an intuition for why this does not work:
Considering why the committee $\{c_1, c_3\}$ is not Pareto optimal, we can see that the candidates $c_4$ and $c_5$ also have a rather high approval score and together are approved by the same voters approving candidate $c_1$.
A set of candidates that dominates another set of candidates must always have a higher approval score than the dominated set.
However, as the candidate sets $\{c_1, c_3\}$ and $\{c_4, c_5\}$ both consist of two candidates, not every candidate in the dominating set must have a higher approval score than each candidate in the dominated set.
Because the voters approving candidate $c_4$ and $c_5$ are distributed equally across the two candidates, which is not the case for candidates $c_1$ and $c_3$, the candidate set $\{c_4, c_5\}$ can still dominate $\{c_1, c_3\}$.

Hence, another criterion for picking a candidate to replace a candidate in the original committee with, such that the approval score of the committee increases and Pareto optimality is preserved, has to be found.

We believe this open problem to be closely related to the committee monotonicity problem described in \Cref{chp:general-monotonicity}, as this problem can also be viewed as removing one candidate from the size-$k$ committee such that the remaining committee is a Pareto optimal size-$(k-1)$ committee and adding back a different candidate to the committee, forming a new size-$k$ Pareto optimal committee.
This requires first studying, for each Pareto optimal size-$k$ committee $W$, which candidates $c \in W$ can be removed from the committee, such that $W-c$ is a Pareto optimal size-$(k-1)$ committee.
Further, not only whether committee monotonicity holds, but also which candidates can be added to Pareto optimal committees while preserving Pareto optimality has to be studied.

Vice versa, proving the connectedness of the reconfiguration graph may make it easier to prove that committee monotonicity holds for Pareto optimal committees:
If any Pareto optimal size-$k$ committee $W$ can be reconfigured into any other, then there must be a first reconfiguration step in which a candidate $c \in W$ is replaced by a candidate $d \in C \setminus W$, such that the committee $W-c+d$ is Pareto optimal.
Thus, no subset $A \subseteq W+d$, for which $c \notin A$ or $d \notin A$ holds, is es-dominated by a subset of the remaining candidates, which only leaves subsets $B \subseteq W+d$, for which $c \in B$ and $d \in B$ holds to consider, when checking whether the committee $W+d$ is Pareto optimal.

This, however, does not imply that either both committee monotonicity holds and the reconfiguration graph is connected or neither holds - one may be true while the other is false, despite the similarity of the two questions.

\subsubsection{Distance of Committees}

Regardless of whether the Pareto optimality reconfiguration graph is connected, we want to demonstrate that the distance of two size-$k$ Pareto optimal committees $W_1, W_2 \subseteq C$ in the reconfiguration graph can be more than $k - \lvert W_1 \cap W_2 \rvert$, proven to be the distance of two committees in the CI and VI domains in \Cref{cor:restricted-candidate-interval-reconfiguration} and \Cref{cor:restricted-voter-interval-reconfiguration}, respectively.

\begin{proposition}
    In approval-based multiwinner voting the distance between two Pareto optimal committees $W_1$ and $W_2$ in the reconfiguration graph $\Gamma_{Po}(\mathcal{A}, k)$ can be strictly greater than $k - |W_1 \cap W_2|$.
\end{proposition}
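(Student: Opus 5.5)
The proposition is existential, so I would prove it with an explicit small counterexample for $k=2$ and two disjoint Pareto optimal committees $W_1=\{a,b\}$ and $W_2=\{c,d\}$, for which $k-\lvert W_1\cap W_2\rvert=2$. A path of length $2$ from $W_1$ to $W_2$ in $\Gamma_{Po}(\mathcal{A},2)$ must pass through one of the four mixed committees $\{a,c\},\{a,d\},\{b,c\},\{b,d\}$. It therefore suffices to build an instance in which $W_1$ and $W_2$ are Pareto optimal while all four mixed committees are es-dominated. Then the distance is strictly greater than $2$: either infinite, or realised by a longer path through auxiliary candidates.

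The construction follows the idea behind \Cref{fig:general-reconf-maxscore-counterexample}.
\begin{enumerate*}[label=(\roman*)]
\item Give $a,b,c,d$ pairwise disjoint voter sets $A,B,C,D$.
\item For each mixed pair $\{x,y\}$ with $x\in W_1$ and $y\in W_2$, split $V(x)$ into two halves $X_1^{y},X_2^{y}$ and $V(y)$ into $Y_1^{x},Y_2^{x}$.
\item Add two auxiliary candidates $e_{xy}$ and $f_{xy}$, with $e_{xy}$ approved by $X_1^{y}\cup Y_1^{x}$ plus one fresh voter, and $f_{xy}$ approved by $X_2^{y}\cup Y_2^{x}$.
\end{enumerate*}
Then $\{e_{xy},f_{xy}\}$ gives every voter of $x$ and $y$ exactly as many approvals as $\{x,y\}$ does, and gives the fresh voter one more. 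Hence it es-dominates $\{x,y\}$.

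The remaining step, which I expect to be the main obstacle, is to verify that $W_1$ and $W_2$ are Pareto optimal. I would first note that no single candidate dominates $a$, $b$, $c$ or $d$: every auxiliary candidate covers only half of any original candidate's voters. For pairs, a committee $\{p,q\}$ es-dominating $\{a,b\}$ must approve every voter of $A\cup B$ at least once. Every candidate other than $a$ and $b$ meets at most one of $A$ and $B$, since no auxiliary candidate is built for the pair $(a,b)$. So one of $p,q$ must cover all of $A$ and the other all of $B$, which no candidate other than $a$ and $b$ does.

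To make this argument airtight, I would choose the splits of $A$ for $c$ and for $d$ so that no auxiliary candidate covers a whole original voter set. The symmetric argument then handles $W_2$. If desired, I would also add one extra candidate to show the distance is finite, exhibiting an explicit path of length $3$; this is not needed for the proposition. Since the instance is small, the case check can also be confirmed exhaustively.
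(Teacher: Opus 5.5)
Your proposal is correct and follows essentially the same route as the paper: a $k=2$ instance with disjoint Pareto optimal committees $W_1,W_2$ in which every mixed committee is es-dominated by a pair that redistributes the same voters and adds an extra voter. The paper gives one concrete 11-voter instance with overlapping voter sets and only asserts that $W_1,W_2$ are Pareto optimal, whereas your disjoint gadget turns that check into a clean covering argument.

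Two small points. Give each original voter set at least two voters so the halves are nonempty proper subsets; your ``airtight'' condition then holds automatically. Also, $\{a,b\},\{a,e_{ac}\},\{c,e_{ac}\},\{c,d\}$ is already a path of Pareto optimal committees (the fresh voter forces $e_{ac}$ into any dominating pair), so no extra candidate is needed to show the distance is exactly $3$.
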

\begin{proof}
    To prove the proposition, we give a counter example:

    Let $(C, V, \mathcal{A}, 2)$ be a voting instance with an approval profile as defined in \Cref{fig:general-reconf-dist-counterexample}

\begin{figure}[h]
\centering
\begin{tikzpicture}[yscale=0.5, xscale=0.65, voter/.style={anchor=south}]

    \foreach \i in {1,...,11}
        \node[voter] at (\i-0.5, -1.5) {$\i$};

    \draw[fill=orange!20] (0,0) rectangle (4,1);
    \node at (0.5,0.5) {$c_1$};

    \draw[fill=red!40] (0,1) rectangle (2,2);
    \draw[fill=red!40] (4,1) rectangle (6,2);
    \node at (0.5,1.5) {$c_2$};
    \node at (4.5,1.5) {$c_2$};

    \draw[fill=purple!35] (2,2) rectangle (3,3);
    \draw[fill=purple!35] (6,2) rectangle (8,3);
    \node at (2.5,2.5) {$c_3$};
    \node at (6.5,2.5) {$c_3$};

    \draw[fill=magenta!20] (4,3) rectangle (5,4);
    \draw[fill=magenta!20] (8,3) rectangle (10,4);
    \node at (4.5,3.5) {$c_4$};
    \node at (8.5,3.5) {$c_4$};

    \draw[fill=violet!30] (0,4) rectangle (3,5);
    \draw[fill=violet!30] (4,4) rectangle (5,5);
    \node at (0.5,4.5) {$c_5$};
    \node at (4.5,4.5) {$c_5$};

    \draw[fill=blue!20] (5,5) rectangle (8,6);
    \draw[fill=blue!20] (10,5) rectangle (11,6);
    \node at (5.5,5.5) {$c_6$};
    \node at (10.5,5.5) {$c_6$};

    \draw[fill=cyan!20] (0,6) rectangle (1,7);
    \draw[fill=cyan!20] (2,6) rectangle (4,7);
    \draw[fill=cyan!20] (6,6) rectangle (7,7);
    \node at (0.5,6.5) {$c_7$};
    \node at (2.5,6.5) {$c_7$};
    \node at (6.5,6.5) {$c_7$};

    \draw[fill=teal!20] (1,7) rectangle (3,8);
    \draw[fill=teal!20] (7,7) rectangle (8,8);
    \draw[fill=teal!20] (10,7) rectangle (11,8);
    \node at (1.5,7.5) {$c_8$};
    \node at (7.5,7.5) {$c_8$};
    \node at (10.5,7.5) {$c_8$};

    \draw[fill=green!20] (0,8) rectangle (1,9);
    \draw[fill=green!20] (4,8) rectangle (6,9);
    \draw[fill=green!20] (8,8) rectangle (9,9);
    \node at (0.5,8.5) {$c_9$};
    \node at (4.5,8.5) {$c_9$};
    \node at (8.5,8.5) {$c_9$};

    \draw[fill=lime!20] (1,9) rectangle (2,10);
    \draw[fill=lime!20] (4,9) rectangle (5,10);
    \draw[fill=lime!20] (9,9) rectangle (11,10);
    \node at (1.5,9.5) {$c_{10}$};
    \node at (4.5,9.5) {$c_{10}$};
    \node at (9.5,9.5) {$c_{10}$};

    \draw[fill=yellow!30] (3,10) rectangle (4,11);
    \draw[fill=yellow!30] (8,10) rectangle (11,11);
    \node at (3.5,10.5) {$c_{11}$};
    \node at (8.5,10.5) {$c_{11}$};

\end{tikzpicture}
\caption{For this approval profile, $W_1=\{c_1,c_2\}$ and $W_2=\{c_3,c_4\}$ are Pareto optimal committees with a distance of 3 in the reconfiguration graph.}
\label{fig:general-reconf-dist-counterexample}
\end{figure}

    The committees $W_1 = \{c_1, c_2\}$ and $W_2 = \{c_3, c_4\}$ both are Pareto optimal committees.
    However, $W_2$ can only be reached from $W_1$ doing at least 3 swaps of candidates.

    All cross-combinations of two candidates from $W_1$ and $W_2$ are es-dominated by some set of candidates and thus not Pareto optimal:

    The committee $\{c_2, c_3\}$ is dominated by the set $\{c_5, c_6\}$, the committee $\{c_1, c_3\}$ is dominated by the set $\{c_7, c_8\}$, the committee $\{c_2, c_4\}$ is dominated by the set $\{c_9, c_{10}\}$, and the committee $\{c_1, c_4\}$ is dominated by the set $\{c_5, c_{11}\}$.

    Thus, the every way of reconfiguring the committee $W_1$ into the committee $W_2$ must involve replacing either candidate $c_1$ or $c_2$ by a candidate not in $W_2$, e.g. candidate $c_{11}$, then replacing the other candidate originally in $W_1$ by a candidate of $W_2$ and lastly replacing the previously added candidate, like candidate $c_{11}$.
    The distance of the committees $W_1$ and $W_2$ therefore is at least $3$.
\end{proof}

This implies that Pareto optimal committees can not always be reconfigured into any other Pareto optimal committee on the most direct way.
Instead, for some committees, a candidate that is neither in the starting committee nor the committee to be reached has to be included into the committee in one reconfiguration step and then later be replaced in another step.

If such committees exist, this implies that committees with larger than average neighborhoods exist in the graph.
For these committees, multiple different Pareto optimal committees exist, which only differ from that committee in one candidate, implying that there is a subset of the committee that is difficult to dominate.
Finding these committees might thus prove useful for showing committee monotonicity as well.

\section{Conclusion \& Outlook}

We investigated the structural properties of Pareto optimal committees.
We focused in particular on the Candidate Interval and Voter Interval, where we characterized Pareto optimality using the proposed Single Dominance Only property.
Our results show that Pareto optimal committees are both reconfigurable and extendable, and we suspect that in the Voter Interval domain they can additionally be counted in polynomial time.
Together, this establishes a strong inherent structure in these restricted domains.
Further, we adapted a polynomial-time algorithm for finding committees satisfying EJR+ \citep{PropAxiomsForMultiwVoting} to additionally guarantee Pareto optimality for restricted domains, thus not only resolving the open question of finding such an algorithm for the Voter Interval domain, but also showing how Pareto optimal committees satisfying further properties can be constructed.

For the unrestricted domain, we highlighted the challenges of generalizing these results.
Despite presenting a counterexample showing that extending a Pareto optimal committee by a candidate based on approval scores does not always preserve Pareto optimality, we suspect committee monotonicity to hold more generally.
Our proofs for small committees and voting instances with few candidates not in the committee suggest a structure of Pareto optimal committees in voting instances beyond the Candidate Interval and Voter Interval domains.

Moreover, we demonstrated that the distance of two committees $A$ and $B$ in the Pareto optimality reconfiguration graph can exceed $k-\lvert A \cap B \rvert$, and proposed an approach towards proving connectedness.
The question of connectedness in the general case, however, remains unresolved.

Building on our results for the restricted domains, future work could make use of the structure of Pareto optimal committees to find algorithms for committees satisfying both Pareto optimality and further axioms simultaneously.
Besides proving the correctness of our proposed algorithm for counting the number of Pareto optimal committees for voting instances satisfying Voter Interval, another interesting direction to follow is counting the number of such committees for voting instances satisfying Candidate Interval, which we suspect is possible in polynomial time as well.

Additionally, future research could explore relaxations of the Single Dominance Only property or alternative characterizations of Pareto optimality to study committee monotonicity and reconfigurability in other restricted domains.
The connectedness of winning committees according to different popular voting rules specifically, similarly to the work of \cite{reconfiguration-jr}, or finding characterizations of committees known to be connected could also pose interesting questions and may lay ground for developing algorithms for finding Pareto optimal committees satisfying other axioms as well.

\newpage

\paragraph{Acknowledgement}
I would like to thank my supervisors, Niclas Böhmer and Lara Glessen, for their guidance, insights, and ongoing feedback throughout my work on this thesis. Discussing different ideas and proofs with them opened my eyes to many approaches and details I might otherwise have missed. Their suggestions on which directions to pursue, along with their introduction of new ideas, consistently helped me make progress.

\end{document}